\documentclass[12pt]{article}
\usepackage{mathtools,amssymb,mathrsfs,ytableau,microtype,tikz,slashed}
\usepackage{physics}
\usepackage{amsthm}
\usepackage{array}
\usepackage{amsmath}

\newcolumntype{C}[1]{%
  >{\centering\arraybackslash}p{#1}%
}
\usepackage{color}
\usepackage[linktocpage]{hyperref}
\hypersetup{colorlinks=true,citecolor=red,linkcolor=blue, urlcolor=blue, breaklinks=true}
\usepackage{cite}
\usepackage{moresize}
\usepackage{url}
\ytableausetup{centertableaux,boxsize=.5em}
\usepackage[numbers,sort]{natbib}

\usepackage{tabularx}

\usepackage{array}

\newcolumntype{Y}{>{\centering\arraybackslash}X}

\newcolumntype{L}{>{\raggedright\arraybackslash}X}

\usepackage{booktabs}
\newtheorem{theorem}{Theorem}[section]

\newtheorem{conjecture}{Conjecture}[section]

\makeatletter\@addtoreset{equation}{section}\makeatother

\renewcommand{\title}[1]{\vbox{\center\LARGE{#1}}\vspace{5mm}}
\renewcommand{\author}[1]{\vbox{\center\large#1}\vspace{5mm}}
\newcommand{\address}[1]{\vbox{\center\em#1}}

\begin{document}

\begin{titlepage}

\begin{center}

\vskip 0.5cm

 \title{
Quantum Cellular Automata from Kramers-Wannier Dualities and Modular Relations
}

 \author{
  Carolyn Zhang$^{1}$, Po-Shen Hsin$^{2}$,
 }

\address{${}^1$Department of Physics and Astronomy, and Quantum Matter Institute,
University of British Columbia, Vancouver, BC, Canada V6T 1Z1}

\address{${}^2$ Department of Mathematics, King’s College London, Strand, London WC2R 2LS, UK.}

\end{center}

%


\abstract{

Recent work has constructed higher-dimensional analogs of non-invertible symmetries similar to 1+1d Kramers-Wannier duality. Although their continuum descriptions often treat purely gravitational topological terms as inessential counterterms, these terms can have an essential lattice manifestation: they distinguish states prepared by finite-depth quantum circuits (FDQCs) from those entangled by nontrivial quantum cellular automata (QCAs). Motivated by this mismatch, we show that QCAs associated with gravitational topological responses arise in several related settings: (1) lattice realizations of projective $\mathrm{SL}(2,\mathbb{Z}_N)$ relations generated by topological operations on symmetries; (2) squares of dualities that generalize the relation between fermionization and Kramers–Wannier duality; (3) lattice implementations of QCAs through higher-form gauging; and (4) invertible phases protected by generalized time-reversal symmetries. We derive new projective $\mathrm{SL}(2,\mathbb{Z}_N)$ relations whose projective phases are gravitational topological responses constructed from Stiefel–Whitney classes. We furthermore give a general protocol for preparing the associated QCA-entangled states using finite-depth unitary circuits, measurements, and error correction. These results unify the study of gravitational topological responses in field theories, higher dimensional dualities, and quantum cellular automata.
}

\vfill

\today

\vfill

\end{titlepage}

\eject

\tableofcontents

\unitlength = .8mm

\setcounter{tocdepth}{3}

\section{Introduction}

Generalized symmetries have emerged as a powerful framework for studying constraints on quantum many-body systems. In addition to their applications, their more fundamental classification has also seen immense recent progress. In particular, various works developed the first examples of non-invertible Kramers-Wannier like symmetries in higher dimensions~\cite{Koide:2021zxj,Choi:2021kmx,Kaidi:2021xfk,Choi:2022rfe} (see e.g.~Refs.~\cite{Shao:2023gho,Schafer-Nameki:2023jdn} for introduction and other references to the literature). For example, in 3+1d systems with $1$-form symmetry, gauging maps the $1$-form symmetry to a dual $1$-form symmetry. Sometimes gauging maps between different theories with $1$-form symmetry and in special cases, gauging maps a theory back to itself. Such theories are symmetric under the non-invertible Kramers-Wannier-like symmetry.

Much of the work on these higher dimensional non-invertible symmetries were initiated in field theory. Recently, explicit lattice operators implementing these non-invertible transformations were obtained in 1+1d and 3+1d (see e.g.~Refs.~\cite{Aasen:2020jwb,Lootens:2021tet,Koide:2021zxj,Choi:2021kmx,Seiberg:2023cdc,Seifnashri:2024dsd,Bhardwaj:2024kvy,Hsin:2024aqb,Chatterjee:2024ych,Seiberg:2024gek,Gorantla:2024ocs,ParayilMana:2024txy,Cao:2024qjj,Hsin:2025ria,Inamura:2025cum,Wen:2026ncw}). In 1+1d, it is known that Kramers-Wannier duality for $\mathbb{Z}_2$ 0-form symmetry mixes with the translation operator on tensor product Hilbert spaces. The translation operator is an example of a nontrivial quantum cellular automaton (QCA) in 1+1d, which is a locality-preserving unitary that cannot be written as a finite-depth quantum circuit (FDQC)~\cite{CSLent_1993,Gross:2011yvb,Haah:2018jdf,Arrighi:2019uor,Haah:2019fqd,Farrelly:2019zds,Shirley:2022lhu}\footnote{Here by locality-preserving we mean strict locality-preserving, i.e. mapping strictly local operators to nearby strictly local operators. A FDQC is a trivial QCA.}. 
 QCAs are also related to periodically-driven Floquet phases (see e.g.~Refs.~\cite{Po:2016qlt,Po_2017,Farrelly:2019zds,Fidkowski_2019}), and demonstrate a rich mathematical classification structure.

\subsection{A puzzle in 3+1d for gauging 1-form symmetry}

In 3+1d, gauging 1-form symmetry has the following puzzle. For concreteness, let's discuss gauging $\mathbb{Z}_3$ $1$-form symmetry in 1-form protected topological (SPT) phase \cite{Gaiotto:2014kfa,Hsin:2018vcg,Tsui:2019ykk}. Such phases can be described by the following topological action
\begin{equation}
    \frac{2\pi p}{3}\int B\cup B~,
\end{equation}
where $p$ is an integer, and $B$ is the 2-form background $\mathbb{Z}_3$ gauge field.
We will focus on $p=2$, i.e. the topological action $\frac{4\pi }{3} \int B\cup B$.
If we gauge the 1-form symmetry in the SPT phase, there is a dual $\mathbb{Z}_3$ 1-form symmetry. Moreover, since $p=2$ is coprime with 3, the resulting theory is still an invertible topological field theory, and we can ask what is the SPT for the new $\mathbb{Z}_3$ 1-form symmetry. The SPT can be inferred from an explicit computation (see e.g.~Refs.~\cite{Choi:2021kmx,Choi:2022zal,Choi:2022rfe}):
\begin{equation}
\sum_b\exp \;i\int\left(   \frac{4\pi}{3}b\cup b+\frac{2\pi}{3} b\cup B\right)=\sum_{b'}\exp\left(i\frac{4\pi}{3}\int b'\cup b'\right)\exp\left(\frac{2\pi i}{3}\int B\cup B\right)~,
\end{equation}
where $b$ is the dynamical $\mathbb{Z}_3$ gauge field for gauging the 1-form symmetry, we have made a change of variables $b'=b+B$, and the first term on the right hand side is a gravitational term that is often ignored.
Thus we conclude that the new SPT phase from gauging the 1-form symmetry is the SPT with $p=1$, i.e. the topological term $\frac{2\pi i}{3} \int B\cup B$.

However, if we study the gauging of 1-form symmetry in the 1-form SPT on the lattice, we do not quite find the claimed SPT.
Below, in Fig.~\ref{fig:Z3example}, we illustrate the stabilizers for the SPT $p=2$ before and after gauging, as well as the stabilizers for the SPT $p=1$ (see e.g.~Ref.~\cite{Tsui:2019ykk}). Clearly, the gauged version of $p=2$ (see e.g.~Ref.~\cite{Cordova:2023bja}) does not match the SPT $p=1$ on the nose. It turns out that the stabilizer groups are not even small deformations of each other: they cannot be related by a FDQC, even if one includes translations. 

\begin{figure}[h]
   \centering
   \includegraphics[width=.8\columnwidth]{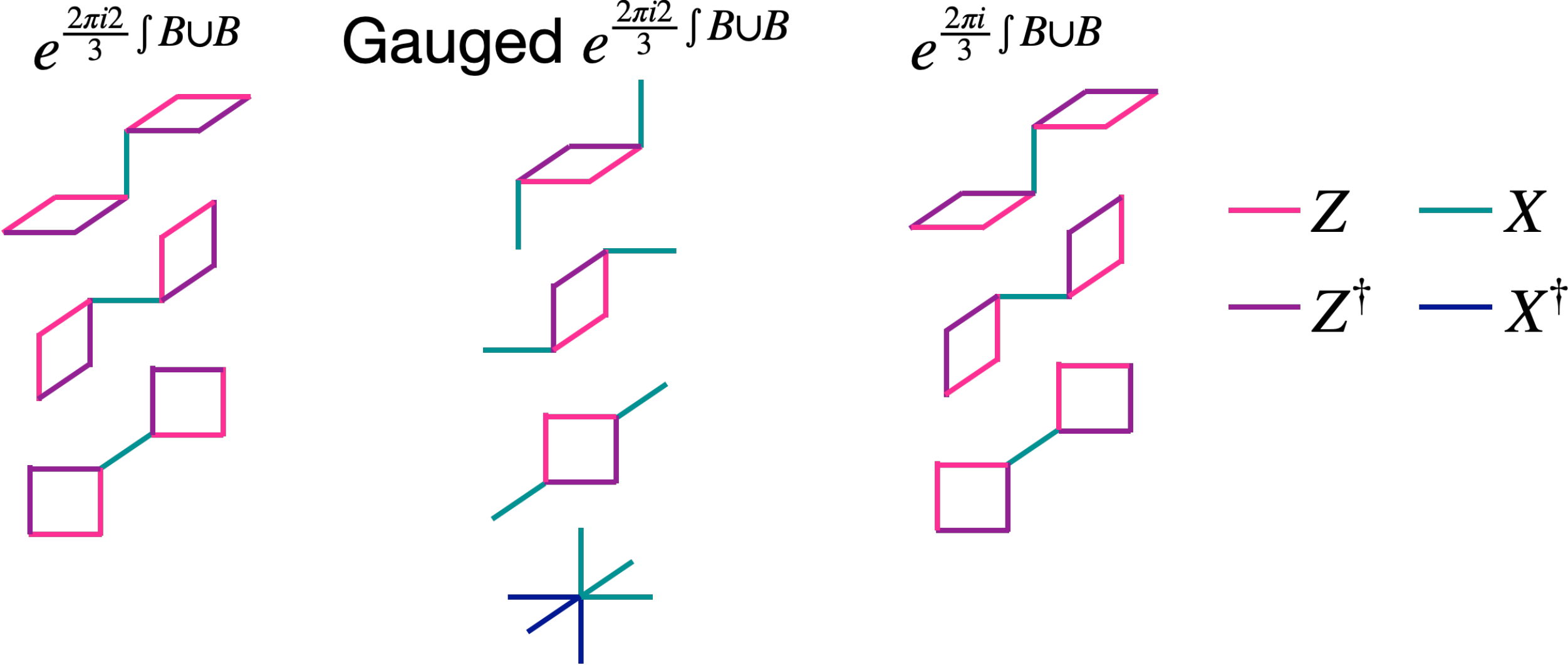} 
   \caption{Stabilizer group on a cubic lattice for the $\mathbb{Z}_3$ $1$-form SPT with $p=2$ (i.e. the partition function $e^{\frac{4\pi i}{3}\int B\cup B}$), its gauged version, and the SPT with $p=1$ (i.e. the partition function $e^{\frac{2\pi i }{3}\int B\cup B}$) on a cubic lattice. The lattice models between the middle and right column clearly mismatch. In field theory, the two models differ by a gravitational topological response.}
   \label{fig:Z3example}
   \end{figure}

The resolution for this puzzle is as follows. In field theory, we often drop the gravitational topological responses especially those that can be continuously tuned to zero, as in the gravitational instanton term. However, in lattice models, gravitational topological responses cannot be neglected. The lattice model for the SPT $p=1$ is entangled from a product state by a FDQC, and has a parent Hamiltonian which is a set of stabilizers obtained from product state stabilizers by the same FDQC. On the other hand, the gauged version of the $p=2$ model describes a state entangled by a nontrivial QCA. In the particular case at hand, it is a Walker-Wang model \cite{walker201131tqftstopologicalinsulators,von_Keyserlingk_2013} with a boundary theory described by the $SU(3)_{1}$ Chern Simons theory, or equivalently the $U(1)\times U(1)$ Chern-Simons theory with $K$ matrix $\left(\begin{array}{cc}
    2 &-1  \\
     -1&2 
\end{array}\right)$
.
The gravitational topological response allows the boundary theory to describe a chiral topological order. Under standard arguments \cite{Haah:2018jdf}, there is no FDQC that produces a stabilizer parent Hamiltonian for this state.

\subsection{Interplay between topological operations and QCAs}

In this paper, we study more generally the ways in which these higher dimensional topological operators mix with (nontrivial) QCA in lattice models. In particular, in studying the subtle appearance of gravitational topological responses on the lattice, we draw connections between:
\begin{itemize}
    \item[1. ] Projective representations of $\mathrm{SL}(2,\mathbb{Z}_N)$  generated by topological operations $S$ and $T$, which correspond to gauging abelian symmetries and stacking with SPTs respectively. Here, gravitational topological responses serve as the projective $U(1)$ phase factor in the $(ST)^3=1$ modular relation: denote by $Y$ the gravitational topological response partition function, the modular relation on the theories with the symmetries becomes
    \begin{equation}
        (ST)^3=Y~.
    \end{equation}
    We will show that a large class of gravitational topological responses can fit into the above projective representation of $\mathrm{SL}(2,\mathbb{Z}_N)$ for gauging $\mathbb{Z}_N$ symmetry. We organize these gravitational topological responses into three classes that often come up in the study of QCAs and gauging on the lattice. The modular relation for gauging abelian invertible symmetries have been discussed in e.g. Refs.~\cite{Witten:2003ya,Gaiotto:2014kfa,Bhardwaj:2020ymp,Hsin:2021qiy,Cheng:2022nji,Choi:2022zal,Apte:2022xtu}.

    \item[2. ] Higher dimensional squares of dualities similar to the Kramers-Wannier/Kitaev wire entangler map in 1+1d (see Fig~\ref{fig:generalN} for an illustration). These squares of dualities map gauging the symmetry to nontrivial QCA via analogous of fermionization or Jordan-Wigner transformation, which we call ``higher Wu gauging". The higher Wu gauging reduces to the Jordan-Wigner transformation in 1+1d. Examples of such square of dualities are discussed in Ref.~\cite{Ji:2019ugf} for 1+1d and Ref.~\cite{Hsin:2021qiy} for higher dimensions.
    
    \item[3. ] Implementation of QCA on the lattice via gauging symmetry from combinations of $S$ and $T$ operations. We will adopt the same notation $T$ to denote the operation of stacking invertible phases with symmetry, even though the $S,T$ operations might not always realize $\mathrm{SL}(2,\mathbb{Z}_N)$ (though they do for certain classes of $Y$ i.e. certain classes of QCA). This is the case when $T$ is stacking an invertible phase whose topological response is not given by quadratic forms. Gauging on the lattice provides an efficient protocol for implementing QCAs in finite time via unitary evolution, measurement, and error correction. Examples of measurement protocol for $S$ operations on the lattice are discussed in Ref.~\cite{Tantivasadakarn:2021vel}.    
    As a nontrivial consistency check, we will show explicitly on the lattice via the polynomial formalism that the QCA produced from the $(ST)^3=Y$ projective modular relation indeed reproduces a known QCA dressed by a FDQC (which does not change the class of QCA).

    \item[4. ] Generalized time-reversal symmetries, that distinguish states that differ by gravitational topological responses. Some of these time-reversal symmetries form higher groups \cite{Cordova:2018cvg,Benini:2018reh}, and others are non-invertible \cite{Choi:2022rfe}.
For an introduction to non-invertible symmetries, see e.g.~Refs.~\cite{Shao:2023gho,Schafer-Nameki:2023jdn}.

\end{itemize}

While we focus on the theoretical understanding of QCAs in quantum lattice models in general spacetime dimensions, they are also of practical relevance. In quantum simulation based on circuits and gates, whether a unitary operator is a nontrivial QCA affects whether the operator can be efficiently implemented, and whether the state created by the unitary operator can be efficiently prepared by unitary evolution.
Moreover, new architectures such as \cite{Bravyi:2023qpn,Loschnauer:2024gat,Pita-Vidal:2024ygt,Ransford:2025ksn,8x68-cx1w} allow quantum platforms including trapped ions and superconductivity chips to realize high dimension quantum systems, including 4+1d loop toric code \cite{Berthusen:2024tsd}. Thus quantum systems and QCAs in high spatial dimensions increasingly become of practical relevance.

The work is organized as follows. In section \ref{sec:modular}, we discuss how QCAs arise in projective modular relations of topological operations for abelian (higher-form) symmetries. These include QCAs related to (1) gravitational Pontryagin terms, (2) Arf-Kervaire-Brown invariants, and (3) topological responses constructed from Stiefel-Whitney classes.
In section \ref{sec:QCAfermionization}, we describe squares of dualities generalizing the well-known 1+1d Jordan-Wigner transformation which maps gauging to the Kitaev wire entangler (a nontrivial fermionic QCA). 
In section \ref{sec:lattice}, we show how general QCA corresponding to certain kinds of gravitational topological responses can be implemented on the lattice via $S$ and generalized $T$ operations, which translate concretely to combinations of unitary evolution, measurement, and error correction. Special classes of such QCA have a simplified implementation procedure related to the projective $\mathrm{SL}(2,\mathbb{Z}_N)$ representations.  
In section \ref{sec:non-invT}, we discuss the generalized time-reversal symmetries in invertible phases entangled by QCAs. Finally, in section \ref{sec:outlook}, we discuss several future directions. We include Appendix \ref{appendix:review} to summarize relevant mathematical properties for cup/higher cup products, Pontryagin square and quadratic functions. 
In Appendix \ref{sec:STST3}, we provide the details of mapping the QCA from $(ST)^3=Y$ for $\mathbb{Z}_3$ 1-form symmetry to a known QCA.

\section{QCAs from Topological Operations of Symmetries: Modular Group 
}
\label{sec:modular}

It is well-known that topological operations with abelian higher form symmetries can form a projective representation of the modular group $\mathrm{SL}(2,\mathbb{Z}_N)$. Identifying $S$ with gauging the symmetry and $T$ with stacking with an appropriate SPT, one obtains 
\begin{equation}
    S^2=C\qquad C^2=1\qquad (ST)^3=Y~,
\end{equation}
where $C$ is charge conjugation (sending background fields $B_i\to -B_i$) and $Y$ corresponds to stacking with an invertible phase that depends only on the spacetime manifold. 

Several cases of such relation are discussed in the literature:
\begin{itemize}
    \item In Refs.~\cite{Witten:2003ya,Cheng:2022nji}, the symmetry is $U(1)$ symmetry in 2+1d, and $T$ is stacking with the basic integer quantum Hall i.e. adding level-one Chern-Simons term for background $U(1)$ gauge field,
    \begin{equation}
TZ[A]=\exp\left(        \frac{i}{4\pi}AdA\right)Z[A]~,
    \end{equation}
    where $A$ is the background $U(1)$ gauge field, and $Z[A]$ is the partition function.
    The invertible phase $Y$ is the gravitational Chern-Simons term with coefficient 2, i.e. two copies of p+ip topological superconductors, which corresponds to chiral central charge 1.
    The modular group can be related to the modular transformation in 3+1d bulk $U(1)$ gauge theory \cite{Seiberg:2016gmd}.

    \item In Refs.~\cite{Gaiotto:2014kfa,Choi:2022zal,Apte:2022xtu}, the symmetry is $\mathbb{Z}_N$ 1-form symmetry, $T$ is stacking with 1-form SPT with effective action
    \begin{equation}
        TZ[B]= \left\{ 
        \begin{array}{cl}
                     \exp\left(\frac{\pi i}{N}\int {\cal P}(B)\right)Z[B]
                     &  \text{even }N \\
             \exp\left(\frac{(N+1)\pi i}{N}\int B\cup B\right)Z[B]
             &  \text{odd }N \\
        \end{array}
\right. ~,
    \end{equation}
    where $B$ is 2-form background $\mathbb{Z}_N$ gauge field, and ${\cal P}$ is the Pontryagin square operation ${\cal P}(B)=B\cup B-B\cup_1 dB$.
The invertible phase $Y$ has partition function given by the signature of the spacetime manifold, which is a gravitational instanton term given by the signature of the manifold. In the discussion we will often omit the cup products for simplicity of notation.

\item In Refs.~\cite{Bhardwaj:2020ymp,Hsin:2021qiy}, the symmetry is internal $\mathbb{Z}_2$ $n$-form symmetry in $(2n+2)$ spacetime dimension, in addition to another $\mathbb{Z}_2$ $n$-form symmetry that mixes with the spacetime symmetry to become a higher group symmetry, generalizing the fermion parity in 1+1d. The operation $T$ corresponds to stacking SPT with ordinary $\mathbb{Z}_2$ $n$-form and the higher group symmetry,
\begin{equation}
TZ[B,\rho]= \exp\left(\frac{2\pi i}{4}\int q_\rho(B)\right)Z[B]~,    
\end{equation}
where $B$ is the $\mathbb{Z}_2$ $(n+1)$-form background field for the internal $\mathbb{Z}_2$ $n$-form symmetry, and $\rho$ is the Wu$_{n+1}$ structure which is a $\mathbb{Z}_2$ $(n+1)$-form background for the higher group symmetry. The Wu structure $\rho$ are higher analogs of pin${}^-$ structures for the case $n=0$. The invertible phase $Y$ are invertible phases that depends on the Wu structure, with partition function given by the Arf-Brown-Kervaire invariant, which includes the Kitaev chain as the special case $n=0$ \cite{Bhardwaj:2020ymp,Hsin:2021qiy}.
Another special case is $n=1$, where the invariants reduces to the previous case with $N=2$, i.e. $\mathbb{Z}_2$ 1-form symmetry in 3+1d, where the Wu structure can be set to zero by a choice of orientation.

\end{itemize}

One of the main results in this work is to show that a third class of gravitational topological response $Y$ can be obtained in a projective $\mathrm{SL}(2,\mathbb{Z}_N)$ algebra. We obtain for any $Y$ that stacks a gravitational topological response built from  Stiefel-Whitney classes of the spacetime tangent bundle, a choice of $S,T$ that gives $(ST)^3=Y$. 
 We can summarize the three types of $Y$ as follows:
\begin{itemize}
    \item Class (1):  Pontryagin gravitational topological response $Y=e^{i\theta\int f(\{p_i\})}$ where the exponent is a polynomial of the Pontryagin classes. The coefficient $\theta$ in general can be real value, but the topological operations for discrete symmetries will give quantized values of $\theta$. Such invertible phases require additional time-reversal symmetry to enforce the quantized response coefficient (see section \ref{sec:non-invT} for more discussion).
     In our cases, we will focus on the gravitational term given by the signature of the spacetime manifold in space dimension divisible by 4.
    This class includes topological operations with $\mathbb{Z}_N$ higher-form symmetries for $N\geq 2$.

    \item Class (2): Arf-Brown-Kervaire invariants: $Y_{\rho}=\sum_be^{i\int q(b)}$ where $b$ is a $\mathbb{Z}_2$ higher-form gauge field and $q(b)$ is a quadratic refinement that may depend on other tangential structures (spin structure, Wu structure, etc).
The spacetime dimension is even. Here and throughout the paper we will omit the overall normalization when we sum over gauge fields.
    In this case, $S$ always corresponds to gauging a $\mathbb{Z}_2$ higher form symmetry. Some examples in this class are also lie in class (1).
    \item Class (3): Stiefel-Whitney gravitational topological responses: $Y=(-1)^{\int w_i\cup \cdots\cup w_k}$ where $w_i$ is the $i$th Stiefel-Whitney class. As in class (2), $S$ always corresponds to gauging a $\mathbb{Z}_2$ higher form symmetry.
\end{itemize}
We will show in Theorem~\ref{thm:generalST3} that, more fundamentally, we can fit any $Y$ that can be written as a Gauss sum into a $\mathrm{SL}(2,\mathbb{Z}_N)$ relation, and we will show how to write the above three classes of gravitational topological responses as Gauss sums.

Although $Y$ is often neglected in the field theory context, where it is just a gravitational topological response, the operation $Y$ cannot be ignored on the lattice. Indeed, we will see that $Y$ can indicate that the operations $S,T$ mix with nontrivial QCA. Using the conjectures in Ref.~\cite{Fidkowski:2024hpz}, we indicate in which cases the properties of $Y$ indicate that $S$ and $T$ mix with a nontrivial QCA. In particular, we will use the conjecture
\begin{conjecture}\label{conj:nontrivialqca}
    From Ref.~\cite{Fidkowski:2024hpz}: $Y$ is a nontrivial QCA if the invertible phase it entangles has a nontrivial partition function on some orientable manifold.
\end{conjecture}
This occurs in all of the class (1) and class (2) examples, as well as some of the class (3) examples such as when $Y$ stacks with $(-1)^{\int w_2w_3}$ which is an absolutely stable (not relying on any symmetries) beyond-cohomology invertible phase.

To give intuition for these $\mathrm{SL}(2,\mathbb{Z}_N)$ relations, we provide examples in each of the three classes in Table~\ref{table:examples}. Note that some of the $S$ operations, i.e. for $(-1)^{\int \nu_2^2}$, perform off-diagonal gauging. 

\begin{table}[t]\label{table:examples}
\centering
\small
\setlength{\tabcolsep}{2.5pt}
\renewcommand{\arraystretch}{1.35}

\begin{tabular}{
  |C{0.18\textwidth}
  |C{0.20\textwidth}
  |C{0.25\textwidth}
  |C{0.08\textwidth}
  |>{\centering\arraybackslash}m{0.22\textwidth}|
}
\hline
$Y$ & $S$ & $T$ & Family &\centering Comments\arraybackslash \\
\hline\hline

$\displaystyle
 \begin{gathered}
 \sum_b
 e^{\frac{\pi i}{N}\int\mathcal P(b)} \\[-5pt]=e^{\frac{\pi i}{12}\int p_1}
 \\[-5pt]
 (N\text{ even})
 \end{gathered}
$
&
$\displaystyle
 \sum_b Z[b]\,
 e^{\frac{2\pi i}{N}\int b\cup B}
$
&
$\displaystyle
 Z[B]\,
 e^{\frac{\pi i}{N}\int\mathcal P(B)}
$
& (1) &
Gravitational Pontryagin response
\\
\hline

$\displaystyle
 \begin{gathered}
 \sum_b
 e^{\frac{\pi i(1+N)}{N}\int\mathcal P(b)}\\[-5pt]=
 e^{-\frac{\pi i}{12}(N-1)\int p_1}
 \\[-5pt]
 (N\text{ odd})
 \end{gathered}
$
&
$\displaystyle
 \sum_b Z[b]\,
 e^{\frac{2\pi i}{N}\int b B}
$
&
$\displaystyle
 Z[B]\,
 e^{\frac{\pi i(1+N)}{N}\int\mathcal P(B)}
$
& (1) &
Gravitational Pontryagin response
\\
\hline

$\displaystyle
 \sum_b
 e^{\frac{\pi i}{2}\int q_{\rho_2}(b)}
$
&
$\displaystyle
 \sum_b Z[b](-1)^{\int bB}
$
&
$\displaystyle
 Z[B]\,
 e^{\frac{\pi i}{2}\int q_{\rho_2}(B)}
$
& (2) &
Chiral semion WW (CSWW)
\\
\hline

$\displaystyle
 \begin{gathered}
 (-1)^{\int\nu_2^2}
 \\[-1pt]
 =(-1)^{\int w_2^2+w_1^4}
 \end{gathered}
$
&
$\displaystyle
 \begin{gathered}
 \sum_{b_1,b_2} Z[b_1,b_2]
 \\[-1pt]
 {}\times
 (-1)^{\int(b_1B_2+b_2 B_1)}
 \end{gathered}
$
&
$\displaystyle
 \begin{gathered}
 Z[B_1,B_2]
 \\[-1pt]
 {}\times
 (-1)^{
   \int\left[
     \nu_2(B_1+B_2)+B_1 B_2
   \right]
 }
 \end{gathered}
$
& (3) &
3 fermion WW (3FWW), $\sim$ $(\text{CSWW})^4$
\\
\hline

$\displaystyle
 (-1)^{\int w_2w_3}
$
&
$\displaystyle
 \begin{gathered}
 \sum_{b,c}Z[b,c]
 \\[-1pt]
 {}\times(-1)^{\int(bC+cB)}
 \end{gathered}
$
&
$\displaystyle
 \begin{gathered}
 Z[B,C]
 \\[-1pt]
 {}\times(-1)^{\int(B C+Bw_3+C w_2)}
 \end{gathered}
$
& (3) &
Absolutely stable invertible phase
\\
\hline

$\displaystyle
 (-1)^{\int w_2^3}
$
&
$\displaystyle
 \begin{gathered}
 \sum_{b,d}Z[b,d]
 \\[-1pt]
 {}\times(-1)^{\int(bD+Bd)}
 \end{gathered}
$
&
$\displaystyle
 \begin{gathered}
 Z[B,D]
 \\[-1pt]
 {}\times
 (-1)^{\int(B D+Bw_2^2+Dw_2)}
 \end{gathered}
$
& (3) &
$T$-protected phase; conjectured trivial QCA
\\
\hline

\end{tabular}
\caption{Key examples of modular relations with $Y$ belonging in three different families: (1) Pontryagin/gravitational instanton counterterms, (2) Arf-Brown-Kervaire invariants, (3) Stiefel-Whitney counterterms. All $b,B$ gauge fields are $2$-form gauge fields, $c,C$ are $3$-form gauge fields, and $d,D$ are $4$-form gauge fields. $p_i$ are the Pontryagin classes, $\nu_i$ are the Wu classes, and $w_i$ are the Stiefel-Whitney classes. We have omitted the cup products to simplify the notation.}
\end{table}

\subsection{General proof of projective $\mathrm{SL}(2,\mathbb{Z}_N)$ relations}\label{sec:general}
Consider an invertible theory that can be written as a Gauss sum:
\begin{equation}\label{gausssum}
    Y[\{c_i\},\{\rho_i\}]=\sum_{\{b_i\}}e^{i\int q(\{b_i\},\{c_i\},\{\rho_i\})}~,
\end{equation}
where $\{c_i\}$ are characteristic classes (including both Pontryagin classes $p_i$ and Stiefel-Whitney classes $w_i$) and $\{\rho_i\}$ are Wu structures. Here, $q(\{b_i\},\{w_i\},\{\rho_i\})$ is a quadratic form: $q(\{b_i+b_i'\},\{c_i\},\{\rho_i\})=q(\{b_i\},\{c_i\},\{\rho_i\})+q(\{b_i'\},\{c_i\},\{\rho_i\})+\langle \{b_i\},\{b_i'\}\rangle$ where $\langle \{b_i\},\{b_i'\}\rangle$ is a non-degenerate bilinear pairing. Here, we omit an overall normalization in the sum over gauge fields.

For example, for $q(B)=\frac{2\pi}{2N}\mathcal{P}(B)$ where $N$ is even and $\mathcal{P}$ is the Pontryagin square, $\langle B,B'\rangle = \frac{2\pi}{N}B\cup B'$. Importantly, $Y$ depends only on the spacetime manifold, and not on background gauge fields for internal symmetries. The form (\ref{gausssum}) is actually very general-- a large class of gravitational topological responses can be represented in such way.
There may also be multiple ways to write any given $Y$ as a Gauss sum; the different ways would give different $S,T$ for which $(ST)^3=Y$. 

For any Gauss sum invertible field theory, the recipe for obtaining the corresponding $S$ and $T$ for a projective $\mathrm{SL}(2,\mathbb{Z}_N)$ representation is as follows:
\begin{itemize}
    \item Suppose that $Y$ can be written as $Y[\{c_i\},\{\rho_i\}]=\sum_{\{b_i\}}e^{i\int q(\{b_i\},\{c_i\},\{\rho_i\})}$. 
    \item We define $T Z[\{B_i\}]=Z[\{B_i\}]e^{i\int q(\{B_i\},\{c_i\},\{\rho_i\})}$. Since we are just multiplying the partition function by a phase, this corresponds to stacking with an invertible phase that can be entangled by a finite-depth circuit.
    \item We define $S$ as $S Z[\{B_i\}]=\sum_{\{b_i\}}Z[\{b_i\}]e^{i\int\langle \{b_i\},\{B_i\}\rangle}$ where the non-degenerate bilinear pairing $\langle \cdot,\cdot\rangle$ comes from the definition of $T$; it is the polarization of the quadratic form $q(\{b_i\},\{c_i\},\{\rho_i\})$.
\end{itemize}

As a small generalization of the above, we can also include spectator background gauge fields $\{B_i\}$ to get $(ST)^3=Y[\{B_i\},\{w_i\},\{\rho_i\}]$ if these spectator background gauge fields do not participate in $S$. This means that their corresponding symmetries are not gauged. For example, this can be used to obtain $Y$ corresponding to stacking with beyond-cohomology SPTs such as the 4+1d $(-1)^{\int Aw_1^4}$ $\mathbb{Z}_2$ SPT (though the QCA that entangles this phase is trivial \cite{fidkowski5d}).

\begin{theorem}\label{thm:generalST3}
    Suppose that $YZ[\{B_i\}]=Z[\{B_i\}]\sum_{\{b_i\}}e^{i\int q(\{b_i\})}$ for some quadratic form $q(\{b_i\})$ with polarization $\langle \{b_i\},\{B_i\}\rangle$. Here we are using $q(\{b_i\})$ as short-hand for $q(\{b_i\},\{c_i\},\{\rho_i\})$. Then  
\begin{equation}
S^2Z[\{B_i\}]=CZ[\{B_i\}]=Z[\{-B_i\}]\qquad (ST)^3Z[\{B_i\}]=YZ[\{B_i\}]
\end{equation}
for $TZ[\{B_i\}]=Z[\{B_i\}]e^{i\int q(\{b_i\})}$ corresponding to stacking with an SPT and $SZ[\{B_i\}]=\sum_{\{b_i\}}Z[\{b_i\}]e^{i\int \langle \{b_i\},\{B_i\}\rangle}$ corresponding to gauging the abelian symmetries associated with the background gauge fields $\{B_i\}$. In other words, $S,T$ defined in this way form a projective representation of $\mathrm{SL}(2,\mathbb{Z}_N)$ with the modified relation $(ST)^3=Y$.
\end{theorem}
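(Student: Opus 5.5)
The plan is to verify both relations by direct substitution, treating everything as finite Gauss sums over cochains. The two tools are orthogonality of characters for the non-degenerate pairing, $\sum_{\{b_i\}} e^{i\int\langle \{b_i\},\{x_i\}\rangle}\propto\delta_{\{x_i\},0}$ (with the paper's convention of suppressing normalizations), and completing the square via the quadratic-form identity $q(b+b')=q(b)+q(b')+\langle b,b'\rangle$. For linear shifts we also need $q(-b)=q(b)$, which follows from the identity evaluated at $b'=-b$ once $q(0)=0$ and $q(b)+q(-b)=\langle b,b\rangle$ are combined with the $b$-independence of the sum. I would state this as a small preliminary lemma, using symmetry of the pairing to handle the $\{c_i\},\{\rho_i\}$ dependence as spectators.

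For $S^2=C$, I would write
\[
S^2Z[B]=\sum_{b'}\sum_{b}Z[b]\,e^{i\int\langle b,b'\rangle+i\int\langle b',B\rangle}.
\]
I then sum over $b'$ first. The result is $\sum_{b'}e^{i\int\langle b',\,b+B\rangle}\propto\delta_{b,-B}$, which gives $Z[-B]$.

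For $(ST)^3$, it is more efficient to prove the equivalent relation $STS=T^{-1}S^{-1}T^{-1}Y$, or simply to compute $(ST)^3$ directly by composing three applications. The first step is $STZ[B]=\sum_b Z[b]e^{i\int q(b)+\langle b,B\rangle}$. Applying $ST$ again introduces the factor $e^{i\int q(B_1)+\langle B_1,B\rangle}$ with a summed $B_1$. Completing the square gives $q(b)+\langle b,B_1\rangle+q(B_1)=q(b+B_1)$, so after the shift $B_1\to B_1-b$ the $B_1$-sum factorizes into a Gauss sum $\sum_{B_1}e^{i\int q(B_1)+\langle B_1,B\rangle}$. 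Completing the square once more, $q(B_1)+\langle B_1,B\rangle=q(B_1+B)-q(B)$, so this sum evaluates to $Y\cdot e^{-i\int q(B)}$ after shifting. The composite therefore becomes $(ST)^2Z[B]=Y\,e^{-i\int q(B)}\sum_b Z[b]e^{-i\int\langle b,B\rangle}$, which is $Y\,T^{-1}S^{-1}Z[B]$ once one notes $S^{-1}=CS$. One more application of $ST$ then yields $(ST)^3=Y$, since $ST\,T^{-1}S^{-1}=1$.

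The main obstacle is bookkeeping. Order of operations matters, since $(ST)Z$ means first $T$ then $S$ acting on $Z$ as a functional. One must also track the sign in $\langle b,b'\rangle$ versus $-\langle b,b'\rangle$, and justify the shift $q(b+B)-\langle b,B\rangle-q(B)=q(b)$ at the cochain level. For Pontryagin squares this holds only up to coboundaries and $\cup_1$ terms that integrate to zero on closed manifolds. I would therefore argue at the level of cohomology classes, where $q$ is a genuine quadratic refinement. Normalization factors such as $|H^{k-1}|/|H^k|^{1/2}$ would be absorbed as in the paper's convention.
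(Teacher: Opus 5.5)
Your argument is correct and is essentially the paper's proof. Both prove $S^2=C$ by character orthogonality and get $(ST)^3=Y$ by completing the square with the polarization identity; the paper does this in one step via the substitution $\tilde b=b+b'+b''$ in the triple sum, while you pass through the intermediate identity $(ST)^2=Y\,T^{-1}S^{-1}$.

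One correction: your auxiliary lemma $q(-b)=q(b)$ does not follow from the polarization identity, which only gives $q(b)+q(-b)=\langle b,b\rangle$. It fails once $q$ has a linear term, e.g.\ a coupling $\frac{2\pi}{N}\int b\cup c$ to a $\mathbb{Z}_N$ spectator field with $N>2$. Your computation never uses it, since every shift needs only bilinearity of $\langle\cdot,\cdot\rangle$ and relabeling of summation variables, so simply drop it.
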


\begin{proof}
By definition,
\begin{equation}
    q(\{b_i+b_i'\})=q(\{b_i\})+q(\{b_i'\})+\langle \{b_i\},\{b_i'\}\rangle~.
\end{equation}
Doing this twice gives
\begin{align}
\begin{split}
    &q(\{b_i+b_i'+b_i''\})\\
    &=q(\{b_i\})+q(\{b_i'\})+q(\{b_i''\})+\langle \{b_i\},\{b_i'\}\rangle+\langle\{b_i+b_i'\},\{b_i''\}\rangle\\
    &=q(\{b_i\})+q(\{b_i'\})+q(\{b_i''\})+\langle \{b_i\},\{b_i'\}\rangle+\langle\{b_i\},\{b_i''\}\rangle+\langle\{b_i'\},\{b_i''\}\rangle~,
\end{split}
\end{align}
where in the last line we used the fact that $\langle \cdot,\cdot\rangle$ is a bilinear pairing. With this result in hand, we can easily perform the explicit calculation:
\begin{align}
\begin{split}
    &(ST)^3Z[\{B_i\}]\\
    &=\sum_{\{b_i\},\{b_i'\},\{b_i''\}}Z[\{b_i\}]e^{i\int q(\{b_i\})}e^{i\int \langle \{b_i\},\{b_i'\}\rangle}e^{i\int q(\{b_i'\})}e^{i\int \langle\{b_i'\},\{b_i''\}\rangle}e^{i\int q(\{b_i''\})}e^{i\int \langle\{b_i''\},\{B_i\}\rangle}\\
    &=\sum_{\{b_i\},\{\tilde{b}_i\},\{b_i''\}}Z[\{b_i\}]e^{i\int q(\{\tilde{b}_i\})}e^{-i\int \langle \{b_i\},\{b_i''\}\rangle}e^{i\int \langle\{b_i''\},\{B_i\}\rangle}~,
    \end{split}
    \end{align}
    where $\tilde{b}_i=b_i+b_i'+b_i''$. Now performing the sum over $b_i''$ sets $b_i=B_i$, so we get 

\begin{equation}
        (ST)^3Z[\{B_i\}]=\sum_{\{\tilde{b}_i\}}Z[\{B_i\}]e^{i\int q(\{\tilde{b}_i\})}=YZ[\{B_i\}]
\end{equation}
as desired. To get the $S^2Z[\{B_i\}]=CZ[\{B_i\}]=Z[\{-B_i\}]$ relation, we calculate
\begin{equation}
    S^2Z[\{B_i\}]=\sum_{\{b_i\},\{b_i'\}}Z[\{b_i\}]e^{i\int\langle \{b_i\},\{b_i'\}\rangle}e^{i\int\langle \{b_i'\},\{B_i\}\rangle}=Z[\{-B_i\}]~,
\end{equation}
where the last equality follows from observing that the sum over $\{b_i'\}$ sets $b_i=-B_i$.
\end{proof}

Assuming the same form of $Y,S,T$ we also have
\begin{equation}\label{STYT}
    ST[1]=YT^{-1}[1]
\end{equation}
So to obtain the pure gravitational term we can start from the trivial state and apply $STS[1]$. To obtain (\ref{STYT}), we can calculate explicitly
\begin{align}
\begin{split}
    STZ[\{B_i\}]&=\sum_{\{b_i\}}Z[\{b_i\}]e^{i\int q(\{b_i\})}(-1)^{\int \langle\{b_i\},\{B_i\}\rangle}\\
    &=\sum_{\{b_i\}}Z[\{b_i\}]e^{i\int q(\{b_i+B_i\})}e^{-i\int q(\{B_i\})}~.
\end{split}
\end{align}
For the particular case $Z[\{B_i\}]=1$, we can shift the sum over $\{b_i\}$ to a sum over $\{b_i+B_i\}$ to get (\ref{STYT}). This cannot be done for general input $Z[\{B_i\}]$. Note that when $T$ is order two, which is always the case in for the class (3) examples, we have $ST[1]=YT[1]$. This relation will be important for when we discuss QCA in Sec.~\ref{sec:lattice}. 

\subsection{Three classes of $\mathrm{SL}(2,\mathbb{Z}_N)$ relations}
In the following, we will show that the three classes of gravitational topological responses described above can be written as Gauss sums, and obtain the corresponding $S$ and $T$ operations. Using Conjecture~\ref{conj:nontrivialqca}, we will indicate which $\mathrm{SL}(2,\mathbb{Z}_N)$ relations involve $S$ and $T$ mixing with nontrivial QCA via $Y$.

\subsubsection{Class (1): $\mathbb{Z}_N$ symmetry and gravitational Pontryagin topological responses}

Here we first review the class of $\mathrm{SL}(2,\mathbb{Z}_N)$ relations with $Y$ that performs stacking with gravitational Pontryagin counterterms. In 3+1d, this type of $Y$ can be written as a Gauss sum as (e.g.~Refs.~\cite{Hsin:2018vcg,Choi:2021kmx,Apte:2022xtu})
\begin{equation}\label{3dZNY}
    Y=\begin{cases} \sum_be^{\frac{2\pi i}{2N}\int \mathcal{P}(b)}=e^{\frac{\pi i}{4}\sigma(X)}=e^{\frac{\pi i}{12}\int p_1}\qquad N\text{ even}\\
    \sum_be^{\frac{2\pi i(N+1)}{2N}\int \mathcal{P}(b)}=e^{-\frac{\pi i}{4}(N-1)\sigma(X)}=e^{-\frac{\pi i}{12}(N-1)\int p_1}\qquad N\text{ odd}\end{cases}~,
\end{equation}
and thus $q(b)=\frac{2\pi}{2N}\mathcal{P}(b)$ for $N$ even and $\frac{2\pi(N+1)}{2N}\mathcal{P}(b)$ for $N$ odd. Here, the quadratic form $\mathcal{P}(b)$ satisfies
\begin{equation}
\mathcal{P}(b+b')=\mathcal{P}(b)+\mathcal{P}(b')+2b\cup b'~,
\end{equation}
so the polarization gives the bilinear pairing $\langle b,b'\rangle=\frac{2\pi}{N}b\cup b'$.\footnote{Note that we have intentionally chosen the odd $N$ SPT so that 
\begin{equation}
    2\frac{N+1}{2N}=\frac{1}{N}\quad\text{mod 1}~,
\end{equation}
in order to get the correct bilinear pairing $\frac{2\pi}{N}b\cup b'$. If we instead had $q(b)=\frac{2\pi}{N}\mathcal{P}(b)=\frac{2\pi}{N}b\cup b$ for $N$ odd then we would instead have polarization $\frac{4\pi}{N}b\cup b'$ which does not match the standard bilinear pairing for gauging $\mathbb{Z}_N$ symmetry. Since $N$ is odd, $N+1$ is even, so $\frac{N+1}{2N}$ is an element of the $\mathbb{Z}_N$ classification of $\mathbb{Z}_N$ 1-form SPTs for $N$ odd.} We have
\begin{equation}
    SZ[B]=\sum_bZ[b]e^{\frac{2\pi i}{N}\int b\cup B}\qquad TZ[B]=Z[B]e^{\frac{2\pi it_N}{2N}\int\mathcal{P}(b)}~,
\end{equation}
where $t_N=1$ for $N$ even and $t_N=N+1$ for $N$ odd. Throughout the paper we will omit the overall normalization when we sum over gauge fields.

When $Y\neq 1$, it defines a state that requires a nontrivial QCA \cite{Haah:2018jdf,Haah:2019fqd}. This comes from the incompatibility of a 2+1d boundary topological order with nonzero chiral central charge mod 8 with a commuting projector realization \cite{Haah:2018jdf}. This is also compatible with Conjecture~\ref{conj:nontrivialqca}: such $Y$ take nontrivial value on orientable manifolds.

\paragraph{Generalization to $4k$ spacetime dimension}

When the spatial dimension is $(4k+3)$, we can define $S,T$ operations with a choice of orientation for $\mathbb{Z}_N$ internal $(2k-1)$-form symmetry. 
We will focus on even $N$ in the following.

We define $S,T$ operations as follows:
\begin{align}
    &SZ[B_{2k}]=\sum_b Z[b_{2k}] e^{\frac{2\pi i}{N}\int b_{2k}\cup B_{2k}}\cr 
    &T Z[B_{2k}]=Z[B_{2k}]e^{\frac{2\pi i }{2N}\int {\cal P}(B_{2k})}~.
\end{align}
Here, we define the operations with a choice of orientation. Throughout the paper we will omit the overall normalization when we sum over gauge fields.
 The ${\cal P}$ is a generalization of Pontryagin square review in Appendix \ref{appendix:review}:
\begin{equation}
    {\cal P}(b_{2k}):=b_{2k}\cup b_{2k}-b_{2k}\cup_1 db_{2k}~.
\end{equation}
For $\mathbb{Z}_{N}$ $2k$-cocycle $b_{2k}$, this gives $\mathbb{Z}_{2N}$ $4k$-cocycle ${\cal P}(b_{2k})$.
This Pontryagin square operation on higher degree cocycles satisfies the same quadratic property of the usual Pontryagin square operation on 2-cocycles.

There is an invertible phase with partition function
\begin{equation}
    Y:=\sum_b e^{\frac{2\pi i}{2N}\int{\cal P}(b)}~,
\end{equation}
where we omit an overall normalization in the sum over gauge field \cite{Hsin:2021qiy}.
The boundary of the invertible phase is characterized by anyonic $(2k-2)$-dimensional excitations that have $\mathbb{Z}_N$ fusion rule and generalized statistics \cite{Gukov:2020btk,Kobayashi:2024dqj,Feng:2025mdg,Gukov:2025dol} of spin $e^{\pi i/N}$. The excitations correspond to the operator on the boundary that terminates the bulk operator $e^{\frac{2\pi i}{N}\int b_{2k}}$ similar to the discussion in Ref.~\cite{Hsin:2018vcg} for 3+1d.
We have
\begin{theorem}
    The $S,T$ operations satisfy
    \begin{equation}
        \left(ST\right)^3=Y~.
    \end{equation}
\end{theorem}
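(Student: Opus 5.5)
The plan is to reduce the statement to Theorem~\ref{thm:generalST3}, which gives $(ST)^3=Y$ whenever $T$ stacks $e^{i\int q(b)}$ for a quadratic form $q$, $S$ gauges using the polarization of $q$, and $Y$ is the Gauss sum of $q$. Here I take $q(b_{2k})=\frac{2\pi}{2N}{\cal P}(b_{2k})$ on $\mathbb{Z}_N$-valued $2k$-cocycles, with $N$ even. Since $Y$ was \emph{defined} as $\sum_b e^{\frac{2\pi i}{2N}\int {\cal P}(b)}$, the only substantive input is that $q$ is a quadratic refinement of the pairing $\langle b,b'\rangle=\frac{2\pi}{N}\int b\cup b'$ used in $S$.

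\textbf{Step 1: well-definedness of ${\cal P}$.} I will show that ${\cal P}(b_{2k})=b\cup b-b\cup_1 db$ gives a well-defined class in $\mathbb{Z}_{2N}$ when $b$ is lifted to an integer cochain with $db=0 \bmod N$.
\begin{itemize}
\item Closedness mod $2N$ follows from the Steenrod formula $d(b\cup_1 b)=db\cup b+b\cup db-b\cup b+b\cup b$ in even degree (with appropriate signs), combined with $db\cup_1 db=0 \bmod 2N$, since $db$ is divisible by $N$ and $N$ is even.
\item Invariance under the lift change $b\to b+N\lambda$ and under the gauge change $b\to b+d\lambda$ holds up to exact terms, using $\cup_1$ identities and the evenness of $N$. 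I will cite Appendix~\ref{appendix:review} for these.
\end{itemize}
On a closed oriented $4k$-manifold, $\int{\cal P}(b)$ is therefore well-defined mod $2N$.

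\textbf{Step 2: quadratic property.} I will verify at cochain level that
\begin{equation}
{\cal P}(b+b')={\cal P}(b)+{\cal P}(b')+2\,b\cup b' \mod 2N,
\end{equation}
up to exact terms. The cross terms are $b\cup b'+b'\cup b-b\cup_1 db'-b'\cup_1 db$. I will rewrite $b'\cup b$ as $b\cup b'$ plus $\pm d(b\cup_1 b')\pm db\cup_1 b'\pm b\cup_1 db'$, using the Steenrod coboundary formula for $\cup_1$. In even degree $2k$ the graded-commutativity sign is $+$, which is exactly where evenness of the form degree enters. The leftover $\cup_1$ terms involving $db,db'$ should then cancel against the $-b\cup_1 db'-b'\cup_1 db$ terms, possibly up to $db\cup_2 db'$-type terms that vanish mod $2N$.

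Multiplying by $\frac{2\pi}{2N}$ gives polarization $\frac{2\pi}{N}\int b\cup b'$, matching $S$. Non-degeneracy of this pairing on $H^{2k}(X,\mathbb{Z}_N)$ follows from Poincaré duality on the closed oriented manifold. That is the property used when summing over $b''$ in the proof of Theorem~\ref{thm:generalST3} to impose $b=B$.

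\textbf{Step 3: apply Theorem~\ref{thm:generalST3}.} This gives $(ST)^3Z[B]=Z[B]\sum_b e^{\frac{2\pi i}{2N}\int{\cal P}(b)}=YZ[B]$, and as a by-product $S^2=C$.

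\textbf{Main obstacle.} I expect the main difficulty to be the cochain-level sign bookkeeping in Step 2, and in Step 1, for higher-degree cup-1 products, in particular checking that the symmetric cross term is exactly $2b\cup b'$ rather than, say, $b\cup b'-b'\cup b$. I would first try to lean on the general Steenrod identity $d(x\cup_1 y)=(-1)^{p+q-1}x\cup y+(-1)^{1+pq}y\cup x+dx\cup_1 y+(-1)^p x\cup_1 dy$, specialized to $p=q=2k$.
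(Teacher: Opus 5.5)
Your proposal is correct and is essentially the paper's argument: the paper carries out the completing-the-square computation of Theorem~\ref{thm:generalST3} directly, substituting $\tilde b'=b'+b+b''$ and summing over $b''$ to set $b=B$. That computation tacitly uses the quadratic property ${\cal P}(b+b')={\cal P}(b)+{\cal P}(b')+2b\cup b'+dz'$ mod $2N$, which your Steps 1--2 establish and which is proved in Appendix~\ref{appendix:review}. One slip to fix: in the coboundary formula you quote, the coefficient of $y\cup x$ should be $(-1)^{pq+p+q}$, not $(-1)^{1+pq}$; for $p=q=2k$ your version would give $b'\cup b\simeq -b\cup b'$ and kill the $2b\cup b'$ cross term, whereas with the correct sign the leftovers assemble into $d(b\cup_1 b'+db\cup_2 b')$ plus $-2b\cup_1 db'+db\cup_2 db'$, which vanishes mod $2N$ for even $N$.
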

\begin{proof}
Consider $ST$ on a general partition function $Z[B]$:
\begin{equation}
    ST Z[B]=\sum_b Z[b] e^{\frac{2\pi i}{2N}\int{\cal P}(b)+\frac{2\pi i}{N}\int b\cup B}~.
\end{equation}
Performing the operation three times gives
\begin{align}
&\left(    {\cal S}{\cal T}_\rho\right)^3 Z[B]=
\sum_{b,b',b''} Z[b] e^{\frac{2\pi i}{2N}\int{\cal P}(b)+\frac{2\pi i}{N}\int b\cup b'} e^{\frac{2\pi i}{2N}\int {\cal P}(b')+\frac{2\pi i}{N}\int b'\cup b''} e^{\frac{2\pi i}{2N}\int {\cal P}(b'')+\frac{2\pi i}{N}\int b''\cup B}\cr 
&=\sum_{b,\tilde b',b''}Z[b] e^{\frac{2\pi i}{2N}\int {\cal P}(\tilde b')}e^{\frac{2\pi i}{N}\int (B-b)\cup b''}=Z[B] \sum_{\tilde b'} e^{\frac{2\pi i}{2N}\int{\cal P}(\tilde b')}=Z[B]Y~,
\end{align}
where $\tilde b'=b'+b+b''$.
\end{proof}

\subsubsection{Class (2): Arf-Brown-Kervaire invariants}
Brown-Kervaire invariants are partition functions of the form $\sum_bi^{\int q(b)}$ where $q(b)$ is a quadratic refinement of middle-dimensional intersection pairing $H^{(d+1)/2}(M;\mathbb{Z}_2)\times H^{(d+1)/2}(M;\mathbb{Z}_2)\to\mathbb{Z}_2$. Here we are working in even spacetime dimension $d+1$, so $(d+1)/2$ is an integer. We review properties of quadratic functions and refinements in Appendix~\ref{appendix:review}. One such refinement is the Pontryagin square: for $B$ a $\mathbb{Z}_2$ $(2n)$-form gauge field, $\mathcal{P}(b):H^{2n}(M,\mathbb{Z}_2)\times H^{2n}(M,\mathbb{Z}_2)\to\mathbb{Z}_4$ allows for the definition of the root $\mathbb{Z}_2$ $(2n-1)$-form SPT in (2n-1)+1d as $i^{\int \mathcal{P}(b)}$. For $n=1$, this is a special case of $N=2$ in the above class (1) of gravitational topological responses. We will see below that for $N=2$, we can use more general quadratic refinements to obtain other choices of $Y$.

Quadratic refinements in general can include Wu structures of the tangent bundle (see Appendix~\ref{appendix:review}). The most familiar case of this is in 1+1d, where $b$ is a 1-form gauge field and the theory $\sum_bi^{\int q_{\rho_1}(b)}$ is the partition function of the Kitaev Wire (the Arf invariant). Here, $\rho_1$ is the spin structure: $d\rho_1=w_2$ where $w_2$ is the second Stiefel-Whitney class. In 3+1d, we can define
\begin{equation}
    Y[\rho_2]=\sum_bi^{\int q_{\rho_2}(b)}=\sum_bi^{\int \mathcal{P}(b)+2b\cup \rho_2}\qquad d\rho_2=\nu_3~,
\end{equation}
where $\nu_3$ is the third Wu class.
Throughout the paper we will omit the overall normalization when we sum over gauge fields.
$\rho_2$ can be thought of as a background gauge field for a spacetime 1-form symmetry. The usual Pontryagin square is recovered by setting $\rho_2=0$. The above Brown-Kervaire invariant describes an invertible phase where the 1-form symmetry mixes with time-reversal in a 2-group \cite{Hsin:2021qiy}. Specifically, it is described by the chiral semion Walker-Wang model, whose boundary theory hosts an anti-semion topological order \cite{Hsin:2021qiy}. It is a phase that can only be entangled by a order-eight nontrivial QCA \cite{Shirley:2022lhu}. There is a similar invertible phase and corresponding order-eight nontrivial quantum cellular automaton in $4k$ spacetime dimensions; its fourth power gives analogs of the 3 fermion Walker Wang model in higher $4k$ spacetime dimensions with partition function $(-1)^{\int \nu_{2k}^2}$ \cite{Fidkowski:2024hpz}.

$Y[\rho]$, where $\rho$ is a Wu $(d+1)/2$ structure, participates in a projective $\mathrm{SL}(2,\mathbb{Z}_2)$ relation involving $\mathbb{Z}_2$ $(d+1)/2$-form symmetries (recall that we are working in even spacetime dimension):
\begin{equation}
    (ST[\rho])^3=Y[\rho]~,
\end{equation}
where
\begin{equation}
    SZ[B]= \sum_b Z[b] (-1)^{\int b\cup B}\qquad T[\rho] Z[B]= Z[B] i^{\int q_\rho(B)}~.
\end{equation}
and $\rho,b,B\in H^{(d+1)/2}(M,\mathbb{Z}_2)$. In general, $\rho$ is the background gauge field for a $\mathbb{Z}_2$ $(d+1)/2$-form symmetry that participates in a spacetime higher group symmetry, so $Y$ has the interpretation as a spacetime higher group SPT.

\subsubsection{Class (3): Stiefel-Whitney topological responses
}\label{sec:stiefelwhitneyst}
Stiefel-Whitney gravitational topological responses include all invertible phases of the form $(-1)^{\int w_iw_j\cdots w_k}$ with $i+j+\cdots +k=d+1$ equal to the spacetime dimension (note that we allow for repeats i.e. $i=j$). To write this partition function as a Gauss sum, we can split $w_iw_j\cdots w_k$ into $w_i$ and $w_j\cdots w_k$. Then 
\begin{equation}
    (-1)^{\int w_i\cup w_j\cup\cdots\cup w_k}=\sum_{b,c}(-1)^{\int b\cup c+b\cup w_i+c\cup \left(w_j\cup\cdots\cup w_k\right)}~,
\end{equation}
where $c$ is a $\mathbb{Z}_2$ $i$-form gauge field and $b$ is a $\mathbb{Z}_2$ $(d+1-i)$-form gauge field. 
Throughout the paper we will omit the overall normalization when we sum over gauge fields.
The above equality can be seen by first summing over $b$, which sets $c=w_i$. Even if we have $\int w_{d+1}$, we can in principle split it into $1$ and $w_{d+1}$ if we allow for using 0-form background gauge fields for $-1$ form symmetries (though these actions are deformation-trivial anyways). To obtain the above Gauss sum via $S,T$, we define
\begin{equation}
SZ[B,C]=\sum_{b,c}Z[b,c](-1)^{\int b\cup C+c\cup B}\qquad TZ[B,C]=Z[B,C](-1)^{\int B\cup C+B\cup w_i+C\cup \left(w_j\cup\cdots\cup w_k\right)}~,
\end{equation}
where $B$ is a background $\mathbb{Z}_2$ $(d+1-i)$-form gauge field and $C$ is a background $i$-form gauge field. Therefore, this corresponds to the case where the $S$ simultaneously gauges a $\mathbb{Z}_2$ $i$-form symmetry and a $\mathbb{Z}_2$ $(d+1-i)$-form symmetry. 
Throughout the paper we will omit the overall normalization when we sum over gauge fields.
A sum of monomials $w_iw_j\cdots w_k$ can be handled similarly, using pairs $\{b_i,c_i\}$.

There are some interesting special cases of this construction:
\begin{itemize}
    \item $Y=(-1)^{\int \nu_2^2}$: Here, $\nu_2$ is the second Wu class. On orientable manifolds, this can be written as $Y=(-1)^{\int w_2^2+w_1^4}$. This phase is the 3+1d beyond-cohomology 3-fermion Walker-Wang theory, with boundary chiral central charge 4 mod 8. It is equivalent to four copies of the chiral semion Walker-Wang model, which is an example of $Y$ in class (1)/(2). In this case we choose the quadratic form $q(B,C)=\pi (B\cup C+B\cup \nu_2+C\cup \nu_2)$ for $B,C$ both $2$-form $\mathbb{Z}_2$ gauge fields. Since $(-1)^{\int w_2^2+w_1^4}=(-1)^{\int w_2^2}$ is still nontrivial on some orientable manifolds (i.e. it is $(-1)$ on $\mathbb{CP}^2$), Y corresponds to a nontrivial QCA.
    \item $Y=(-1)^{\int w_2\cup w_3}$: This is an absolutely stable phase, meaning it is an invertible phase without any symmetry requirements. To obtain this $Y$ we choose $q(B,C)=\pi(B\cup C+B\cup w_3+C\cup w_2)$ where $B$ is $2$-form and $C$ is $3$-form. $Y$ is nontrivial on non-orientable manifolds such as the Wu manifold $W^5=SU(3)/SO(3)$, so it is entangled by a nontrivial QCA according to Ref.~\cite{Chen:2021xks,Fidkowski:2024hpz}.
    \item $Y=(-1)^{\int \nu_4^2}$ where $\nu_4$ is the fourth Wu class: This 7+1d phase is a higher dimensional analog of the 3-fermion Walker Wang phase \cite{Fidkowski:2024hpz}. It is equivalent to four copies of the 7+1d analog of the chiral semion Walker-Wang theory $\sum_be^{\frac{i\pi}{2}\int q_{\rho_4}(b)}$ where $\rho_4$ is the Wu structure $d\rho_4=\nu_5$ covered in class (1)/(2). On orientable manifolds, $(-1)^{\int\nu_4^2}=(-1)^{\int w_4^2+w_2^4}$ since on these manifolds, $\nu_2=w_2^2+w_4$. Actually, the theories $(-1)^{\int w_2^4}$ and $(-1)^{\int w_4^2}$ are individually nontrivial on orientable manifolds: $(-1)^{\int w_2^4}$ is nontrivial on $\mathbb{CP}^4$ while $(-1)^{\int w_4^2}$ is nontrivial on $\mathbb{CP}^2\times\mathbb{CP}^2$. Therefore, although Ref.~\cite{Fidkowski:2024hpz} only specifies the combination of these two states, they should actually individually give nontrivial QCA according to Conjecture~\ref{conj:nontrivialqca}. It is straightforward to obtain quadratic forms giving the corresponding $S,T$: for $w_2^4$ we use $q(B,C)=\pi(B\cup C+B\cup w_2^2+C\cup w_2^2)$ and for $w_4^2$ we use $q(B,C)=\pi(B\cup C+B\cup w_4+C\cup w_4)$ where $B,C$ are both $\mathbb{Z}_2$ $4$-form gauge fields. $(-1)^{\int w_2^4}$ can actually be written as a Gauss sum in two different ways: we can also choose the quadratic form $q(B,C)=\pi(B\cup C+B\cup w_2+C\cup w_2^3)$ for 6-form $B$ and 2-form $C$. 
    \item $Y=(-1)^{\int w_2^3}$: This is an example of an invertible phase (protected by time-reversal symmetry) that is trivial on every orientable manifold,\footnote{This follows from $\pi\int w_2^3=\pi\int Sq^2(w_2^2)=\pi\int Sq^1w_2\cup Sq^1w_2=\pi\int w_1\cup w_2\cup (w_3+w_1\cup w_2)$ which is trivial on orientable manifolds.
   In fact, there is no nontrivial 5+1d topological terms from Stiefel-Whitney classes that take nontrivial value on orientable 6-manifolds due to the relations $w_2\cup w_4=0,w_3^2=0,w_6=0$ for orientable 6-manifolds $w_1=0$ (see e.g. Ref.~\cite{Wen:2014zga}).
    } so would not give a nontrivial QCA according to Conjecture~\ref{conj:nontrivialqca}. It is a simple example of an action that is not naively a pairing, but by splitting it into $w_2$ and $w_2^2$ we can get a quadratic form $q(B,C)=\pi(B\cup C+B\cup w_2+C\cup w_2^2)$. The resulting $S$ corresponds to gauging $\mathbb{Z}_2$ $1$-form and $3$-form symmetries.

\end{itemize}

It will be particularly useful when $T$ can be written entirely in terms of background gauge fields $\{B_i\}$ rather than explicitly in terms of Stiefel-Whitney classes. This turns out to be the case for the first three examples above; in Sec.~\ref{sstiefellattice}, we show that $Y$ in these cases can be written as $\sum_{\{b_i\}}(-1)^{\int q(\{b_i\})}$ where $q(\{b_i\})$ is a quadratic form involving symmetry gauge fields only. This allows for efficient implementation of the action of the QCA on product states to get invertible states with these gravitational topological responses. Otherwise, it is unclear how to obtain couplings between gauge fields and Stiefel-Whitney classes via local unitaries on the lattice.

\section{Mapping Gauging to QCA: Squares of Dualities and Generalized Fermionization}
\label{sec:QCAfermionization}

In this section, we will show that $S$ and $T$ together form a square of dualities that map gauging of $\mathbb{Z}_N$ 1-form symmetry to QCA. For the special case of $N=2$, we can use an analog of fermionization/Jordan-Wigner transformation that we call ``twisted Wu gauging" to map the 1-form symmetry to a dual spacetime 1-form symmetry with background field given by the Wu structure. We will motivate this construction by first reviewing the square of dualities in 1+1d. For more general $N$, the dual 1-form symmetries are always just ordinary 1-form symmetries; since characteristic classes are always $\mathbb{Z}_2$ we cannot use them for $N\neq 2$. In this case, instead of twisted Wu gauging, we just have the usual twisted gauging operations such as $ST$.

\subsection{Fermionization/bosonization, gauging, and QCA in 1+1d}
To begin, we first review how topological manipulations of $\mathbb{Z}_2$ 0-form symmetry in 1+1d fit together into a square of dualities. In 1+1d it is convenient to introduce a modified gauging $S_{\rho_1}$ that implements gauging with fermionic ancillas. Here, $\rho_1$ is a spin structure, which we will describe further below, and serves as a background gauge field for fermion parity symmetry. In this case we will recover the familiar fact that fermionization/bosonization maps Kramers-Wannier duality to the Kitaev wire entangler, which is a nontrivial fermionic QCA.

At the level of lattice operators, fermionization/bosonization is given explicitly by 
\begin{equation}\label{jw}
X_j\leftrightarrow i\tilde{\gamma}_j\gamma_{j+1}\qquad Z_jZ_{j+1}\leftrightarrow i\gamma_{j+1}\tilde{\gamma}_{j+1}~,
\end{equation}
where a complex fermion on site $j$ is represented by two real Majoranas $\gamma_j$ and $\tilde{\gamma}_j$. 

Fermionization corresponds to gauging the $\mathbb{Z}_2$ symmetry with fermionic ancillas. Here we will provide a description of fermionization that makes clear the connection to Arf-Brown-Kervaire invariants and generalizations to higher dimensions. First, we need to introduce $\rho_1$, which is a $\mathbb{Z}_2$ $1$-cochain corresponding to spin structures:
\begin{equation}
    d\rho_1=w_2~.
\end{equation}
As discussed in Sec.~\ref{sec:general} (see also Appendix~\ref{appendix:review}), there is a quadratic refinement $q_{\rho_1}(b):H^1(M,\mathbb{Z}_2)\times H^1(M,\mathbb{Z}_2)\to \mathbb{Z}_4$ labeled by $\rho_1$ that gives a polarization that matches the bilinear pairing for gauging of the $\mathbb{Z}_2$ symmetry:
\begin{equation}
    q_{\rho_1}(b+b')=q_{\rho_1}(b)+q_{\rho_1}(b')+2b\cup b'\quad \text{mod 4}~.
\end{equation} 
The factor of 2 in front of $b\cup b'$ is important for matching the $(-1)^{\int b\cup b'}$ bilinear pairing in $S$. 

\begin{figure}[h]
   \centering
   \includegraphics[width=.6\columnwidth]{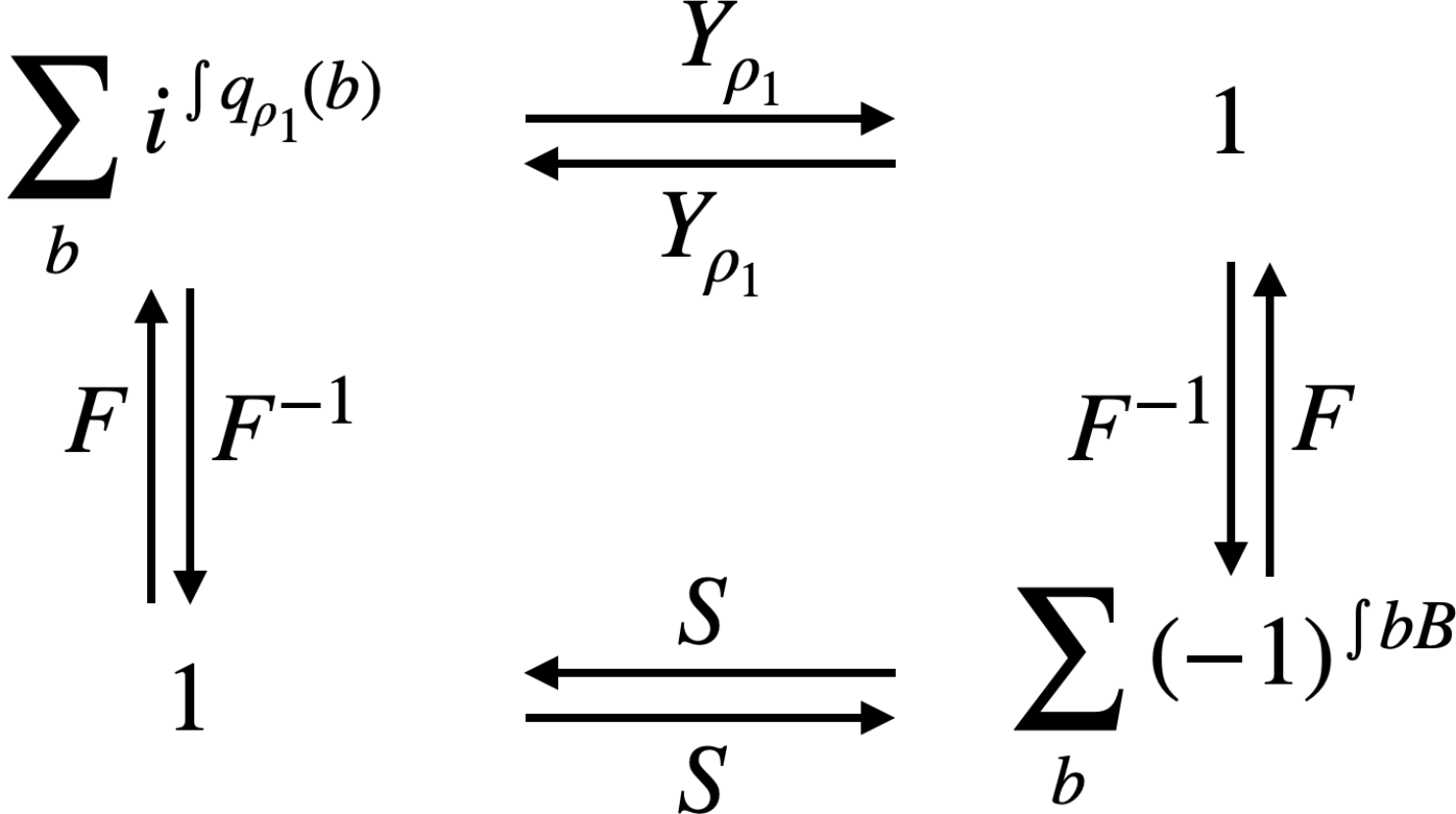} 
   \caption{Square of dualities in 1+1d. The bottom two phases are bosonic while the top two phases are fermionic. Gauging refers to gauging the $\mathbb{Z}_2$ 0-form symmetry (i.e. Kramers-Wannier duality) while stacking refers to stacking with the Kitaev wire $\sum_be^{\frac{i\pi}{2}\int q_{\rho_1}(b)}=(-1)^{\mathrm{Arf}}$.}
   \label{fig:1dsquare}
   \end{figure}
   
With these preliminaries at hand, we can now identify fermionization via a two-step process:
\begin{enumerate}
\item Stack the theory with a $\mathbb{Z}_2\times\mathbb{Z}_2^f$ SPT: $Z[B]\to Z[B]e^{\frac{i\pi}{2}q_{\rho_1}(B)}$ where $q_{\rho_1}(B)$ is a quadratic form labeled by $\rho_1$, reviewed in Appendix~\ref{appendix:review}. This SPT is simply the $\mathbb{Z}_2\times\mathbb{Z}_2^f$ cluster state, where the $\mathbb{Z}_2$ domain walls are decorated by fermion parity odd operators \cite{gufermionic,tantikramerswannier}.

\item Sum over the background field for the $\mathbb{Z}_2$ symmetry, leaving a theory with just $\mathbb{Z}_2^f$ symmetry: $Z[B]e^{\frac{i\pi}{2}\int q_{\rho_1}(B)}\to \sum_bZ[b]e^{\frac{i\pi}{2}\int q_{\rho_1}(b)}$ 
\end{enumerate}

We will denote this transformation by $F_{1}[Z[B]]$. There is a similar transformation $F_{1}^{-1}[Z[\rho_1]]$ for mapping a fermionic system to a bosonic system with $\mathbb{Z}_2$ 0-form symmetry. We again stack with a $\mathbb{Z}_2\times\mathbb{Z}_2^f$ SPT, and then we sum over spin structures. In summary, we have
\begin{equation}\label{fermionization}
    F_1 Z[B]=\sum_bZ[b]e^{\frac{i\pi}{2}\int q_{\rho_1}(b)}\qquad F_1^{-1}Z[\rho_1]= \sum_{\rho_1}Z[\rho_1]e^{\frac{i\pi}{2}\int q_{\rho_1}(b)}~,
\end{equation}
where we omit the overall normalization when we sum over gauge fields.

Note that although $e^{\frac{i\pi}{2}\int q_{\rho_1}(B)}$ looks to be order four, it is actually order two on orientable manifolds (i.e. without time-reversal symmetry). This is because
\begin{equation}
\pi\int q_{\rho}(B)=\pi\int B\cup B=\pi\int B\cup w_1~,
\end{equation}
where the last relation uses the Wu formula, and $\nu_1=w_1$ is the first Wu class. The above theory is trivial because $w_1=0$ on orientable manifolds. This is consistent with the $\mathbb{Z}_2\times\mathbb{Z}_2^f$ cluster state being order two without time reversal symmetry.

The second step of promoting $B$ to be a dynamical field $b$ and sum over it removes the bosonic degrees of freedom, resulting in theory with just $\mathbb{Z}_2^f$ symmetry. Although the partition function includes a sum, it actually describes an invertible theory. If we stack the theory with its complex conjugation, with gauge field $b'$, we get the total action
\begin{equation}\label{kitorder2}
\frac{i\pi}{2}\int q_{\rho_1}(b)-\frac{\pi}{2}\int q_{\rho_1}(b')=\frac{\pi}{2}\int q_{\rho_1}(b'')+\pi\int b''\cup b'~,
\end{equation}
where $b''=b+b'$. $b'$ acts as a Lagrange multiplier: summing over $b'$ fixes $b''=0$, resulting in a trivial theory.

\subsubsection{Square of dualities in 1+1d}
Suppose that we start in the trivial phase of the bosonic model, so $Z[B]=1$. Then $\mathcal{F}_1[Z[B]]$ is simply the partition function of the Kitaev wire:
\begin{equation}\label{kitaevwire}
\sum_{b\in H^1(M,\mathbb{Z}_2)}e^{\frac{i\pi}{2}\int q_{\rho_1}(b)}=Z[\rho_1]=(-1)^{\mathrm{Arf}(\rho_1)}~,
\end{equation}
where we omit the overall normalization when we sum over gauge fields.

This reproduces the Jordan-Wigner map between the trivial bosonic paramagnet phase and the Kitaev wire. Now suppose that we gauge the $\mathbb{Z}_2$ symmetry in the bosonic theory, giving the partition function $\propto\sum_{b}e^{i\pi\int b\cup B}=\delta(B)$ where $B$ is a background one-form gauge field. Now we can again fermionize to get
\begin{equation}\label{fermins}
F_1[\delta(B)]=\sum_{b,b'}e^{i\pi\int b\cup b'+\frac{1}{2} q_{\rho_1}(b')}~.
\end{equation}
Summing over $b$ sets $b'=0$ so the above partition function is trivial. It is the trivial fermionic insulator: $\mathcal{F}[\delta(B)]=1$.

How do we get from (\ref{kitaevwire}) to (\ref{fermins}) on the fermionic side? We do so by stacking with a Kitaev wire. Specifically, we have
\begin{align}
\begin{split}\label{derivestack}
\sum_{b,b'}e^{i\pi\int b\cup b'+\frac{1}{2} q_{\rho_1}(b')}&=\sum_{b,b'} e^{\frac{i\pi}{2}\int q_{\rho_1}(b'+b)-q_{\rho_1}(b)}\\
&=\left(\sum_{b''}e^{\frac{i\pi}{2}\int q_{\rho_1}(b'')}\right)\left(\sum_b e^{-\frac{i\pi}{2}\int q_{\rho_1}(b)}\right)~,
\end{split}
\end{align}
where we define $b''=b'+b$. Because the Kitaev wire is order two on orientable manifolds, the action (\ref{fermins}) is obtained from (\ref{kitaevwire}) by stacking with $\sum_{b''} e^{\frac{i\pi}{2}\int q_{\rho_1}(b'')}$. We omit the overall normalization when we sum over gauge fields.

Let's call stacking with the Kitaev wire $Y_{\rho_1}$. Then to summarize the discussion above, we have in 1+1d a familiar square of dualities with fermionization/bosonization on the vertical legs, gauging on the bottom leg, and the Kitaev wire QCA on the top leg (see Fig.~\ref{fig:1dsquare}). This can be summarized as
\begin{equation}\label{trivialmap}
    Y_{\rho_1}F_{1}[1]=F_1[S[1]]~,
\end{equation}
where the input $Z[B]=1$ is the trivial partition function. 

This square of dualities in fact holds for any input $Z[B]$:
\begin{align}
\begin{split}\label{fermionizationsquare}
    F_{1}[S[Z[B]]]&=\sum_{b,b'}Z[b]e^{i\pi\int b\cup b'}e^{\frac{i\pi}{2}\int q_{\rho_1}(b')}\\
    &=\sum_{b,b'}Z[b]e^{-\frac{i\pi}{2}\int q_{\rho_1}(b)}e^{\frac{i\pi}{2}\int q_{\rho_1}(b'+b)}\\ &=\left(\sum_{b''}e^{\frac{i\pi}{2}\int q_{\rho_1}(b'')}\right)\sum_{b}Z[b]e^{-\frac{i\pi}{2}\int q_{\rho_1}(b)}=Y_{\rho_1}F_{1}[Z[B]]~,
\end{split}
\end{align}
where the last line comes from the fact that $Y_{\rho_1}$ is order two (i.e. $\sum_{b''}e^{\frac{i\pi}{2}\int q_{\rho_1}(b'')}=\sum_{b''}e^{-\frac{i\pi}{2}\int q_{\rho_1}(b'')}$) and $e^{-\frac{i\pi}{2}\int q_{\rho_1}(b)}=e^{\frac{i\pi}{2}\int q_{\rho_1}(b)}$ since it is $\pm 1$. Physically, $F_{1}[Z[b]]=\tilde{S}_{\rho_1}T[Z[b]]$ where $\tilde{S}$ means gauging with fermionic ancillas with background gauge field $\rho_1$. In other words, instead of using an ordinary $\mathbb{Z}_2$ 1-form background gauge field, we use the spin structure $\rho_1$. Rearranging the operations in (\ref{fermionizationsquare}), we get
\begin{equation}\label{YtoS}
    Y_{\rho_1}=F_{1}S F_{1}^{-1}~,
\end{equation}
where $F_{1}^{-1}$ defined in (\ref{fermionization}) implements bosonization. 

While we derived (\ref{fermionizationsquare}) via explicit calculation, we can also obtain it from the $\mathrm{SL}(2,\mathbb{Z}_2)$ relation
\begin{equation}
    (ST_{\rho_1})^3=Y_{\rho_1}~.
\end{equation}

To see this, define the operation $\epsilon_0$ as evaluation at trivial background: $\epsilon_0 Z[B]=Z[0]$. Then since $ST_{\rho_1} Z[B]=\sum_bZ[b]e^{\frac{i\pi}{2}\int q_{\rho_1}(b)}(-1)^{\int b\cup B}$,
\begin{equation}
    F_1=\epsilon_0ST_{\rho_1}~.
\end{equation}
Rearranging $(ST_{\rho_1})^3=Y_{\rho_1}$ gives 
\begin{equation}\label{STS}   ST_{\rho_1}S=Y_{\rho_1}T_{\rho_1}^{-1}S^{-1}T_{\rho_1}^{-1}~.
\end{equation}
For $\mathbb{Z}_2$ gauge fields, $S^{-1}=CS=S$ since $Z[B]=Z[-B]$. We also have $T_{\rho_1}=T_{\rho_1}^{-1}$ on orientable manifolds. Applying $\epsilon_0$ to both sides of (\ref{STS}) gives
\begin{equation}
\epsilon_0ST_{\rho_1}S=Y_{\rho_1}\epsilon_0T_{\rho_1}ST_{\rho_1}=ST_{\rho_1}~,
\end{equation}
where we used $\epsilon_0T_{\rho_1}Z[B]=Z[0]e^{\frac{i\pi}{2}\int q_{\rho_1}(0)}=\epsilon_0Z[B]$. It follows that
\begin{equation}
    F_1S=Y_{\rho_1}F_1~,
\end{equation}
recovering (\ref{fermionization}) and (\ref{YtoS}). Therefore, the $\mathrm{SL}(2,\mathbb{Z}_2)$ relation with the Arf invariant $Y_{\rho_1}$ immediately gives the square of dualities in Fig.~\ref{fig:1dsquare}, where going from the bottom left to the top right is $Y_{\rho_1}F_1$ and $F_1S$. This gives a mapping between the gauging operation $S$ and the QCA implemented by $Y_{\rho_1}$ via fermionization $F_1$.

\subsection{Generalized $\mathbb{Z}_2$ higher-form Wu gauging in 3+1d}
There is an analogous story in 3+1d with some subtle differences. In 3+1d, when gauging 1-form symmetries, we can instead of using an ordinary $\mathbb{Z}_2$ 2-form gauge field, use the Wu structure $\rho_2$. Intuitively, this is similar to gauging with fermionic ancillas: here we gauge with ancillas carrying a 1-form symmetry with background gauge field $\rho_2$. So writing $\mathcal{F}_{\rho_2}[Z[B]]=\tilde{S}_{\rho_2}T[Z[B]]$, we can view $\tilde{S}_{\rho_2}$ as a ``Wu gauging" similar to how $\tilde{S}_{\rho_1}$ is fermionic gauging. Here, $T[Z[B]]$ stacks with an invertible phase $e^{\frac{i\pi}{2}q_{\rho_2}(B)}$. To parallel the previous 1+1d story, we introduce the Wu structure $\rho_1$ as a $\mathbb{Z}_2$ $2$-cochain satisfying
\begin{equation}
    d\rho_2=\nu_3~.
\end{equation}
We can now obtain a quadratic refinement $q_{\rho_2}(b):H^2(M,\mathbb{Z}_2)\times H^2(M,\mathbb{Z}_2)\to\mathbb{Z}_4$ labeled by $\rho_2$ that again gives a polarization that mathces the bilinear pairing for gauging $\mathbb{Z}_2$ $1$-form symmetry:
\begin{equation}
    q_{\rho_2}(b+b')=q_{\rho_2}(b)+q_{\rho_2}(b')+2b\cup b'\quad\text{mod 4}~.
\end{equation}
On $2$-cocycles, there is a more explicit expression for $q_{\rho_2}(b)$:
\begin{equation}
    q_{\rho_2}(b)=\mathcal{P}(b)+2b\cup\rho_2~,
\end{equation}
where $\mathcal{P}$ is the Pontryagin square used in class (1) gravitational topological responses. From this expression, it is clear that $\sum_bZ[b]e^{\frac{i\pi}{2}\int q_{\rho_2}(b)}$ performs gauging with $\rho_2$ acting as the background gauge field. 
\begin{figure}[h]
   \centering
   \includegraphics[width=.6\columnwidth]{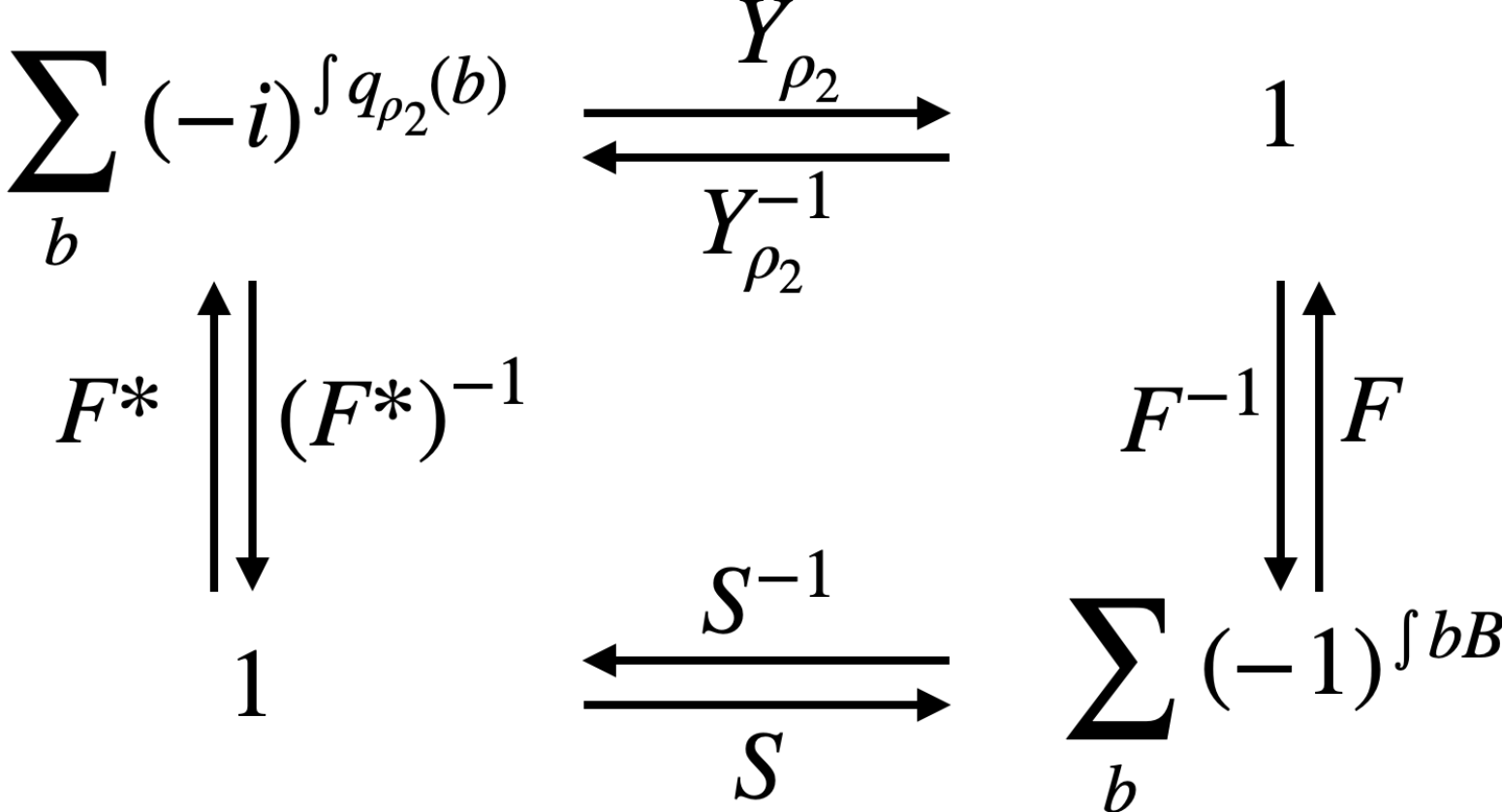} 
   \caption{Square of dualities in 3+1d. The bottom two phases are bosonic with background 2-form gauge field $B$ while the top two phases have background 2-form gauge field $\rho_2$. Gauging refers to gauging the $\mathbb{Z}_2$ 1-form symmetry (i.e. Kramers-Wannier duality) while stacking refers to stacking with $Y_{\rho_2}=\sum_be^{\frac{i\pi}{2}\int q_{\rho_2}(B)}$.}
   \label{fig:3dsquare}
   \end{figure}
   
Wu gauging in 3+1d is an operation similar to fermionization in 1+1d, and is implemented via two analogous steps:
\begin{enumerate}
    \item Stack the theory with a $\mathbb{Z}_2^{(1)}\times\mathbb{Z}_2^{(1)}$ SPT: $Z[B]\to Z[B]e^{\frac{i\pi}{2}\int q_{\rho_2}(B)}$. This SPT is a root $\mathbb{Z}_2$ $1$-form SPT in 3+1d \cite{Gaiotto:2014kfa,Hsin:2018vcg,PhysRevB.101.035101}, together with a $1$-form cluster state entangler. 
    Since $q_{\rho_2}(B)$ depends on $\rho$ by $\rho\cup B$,
    it is entangled by the composition of the root $\mathbb{Z}_2$ $1$-form SPT entangler and the 3+1d $1$-form cluster state entangler that represents the $\rho\cup B$ part.
    
    \item Sum over the background field for the $\mathbb{Z}_2^{(1)}$ symmetry $b$, leaving a theory with just a $1$-form symmetry with background gauge field $\rho_2$: $Z[B]e^{\frac{i\pi}{2}\int q_{\rho_2}(B)}\to \sum_bZ[b]e^{\frac{i\pi}{2}\int q_{\rho_2}(b)}$.
\end{enumerate}

We will denote this transformation by $F_2Z[B]$. As in the 1+1d case, there is an inverse operation $F_2^{-1}$ that involves a sum over $\rho_2$:
\begin{equation}\label{F2}
    F_2Z[B]=\sum_bZ[b]e^{\frac{i\pi}{2}\int q_{\rho_2}(b)}\qquad F_2^{-1}Z[\rho_2]=\sum_{\rho_2}Z[\rho_2]e^{\frac{-i\pi}{2}\int q_{\rho_2}(B)}~.
\end{equation}
It is clear from (\ref{F2}) that $F^{-1}$ can also be written as $T^{-1}\tilde{S}_{\rho_2}^{-1}$ where $\tilde{S}_{\rho_2}^{-1}$ gauges the $1$-form symmetry with background $\rho_2$ using an ordinary $2$-form background gauge field $B$. $F$ and $F^{-1}$ satisfy
\begin{equation}
    F_2^{-1}FZ[B]=\sum_{b,\rho_2}Z[b]e^{\frac{i\pi}{2}\int \mathcal{P}(b)+2b\rho_2}e^{-\frac{i\pi}{2}\int\mathcal{P}(B)+2B\rho_2}=Z[B]~,
\end{equation}
where we used the fact that the sum over $\rho_2$ sets $b=B$.

\subsubsection{Square of dualities in 3+1d}
Starting with the trivial theory $Z[B]=1$ we get an equation similar to (\ref{trivialmap}):
\begin{equation}
    Y_{\rho_2}F_{2}^*[1]=F_{2}[S[1]]=1~.
\end{equation}

Note however that we have $Y_{\rho_2}F_{2}^*[1]=Y_{\rho_2}\sum_be^{-\frac{i\pi}{2}\int q_{\rho_2}(b)}$ instead of $F_{2}[1]$. This is because in this case, $Y$ is order eight not order 2 (even on orientable manifolds) so in order for $Y_{\rho_2}$ to cancel with $F_2[1]$ we need an extra minus sign.

For more general theories, we have
\begin{align}
\begin{split}\label{genmap3d}
    F_2SZ[B]&=\sum_{b,b'}Z[b]e^{i\pi\int b\cup b'}e^{\frac{i\pi}{2}\int q_{\rho_2}(b')}\\
&=\left(\sum_{b''}e^{\frac{i\pi}{2}\int q_{\rho_2}(b'')}\right)\sum_{b}Z[b]e^{-\frac{i\pi}{2}\int q_{\rho_1}(b)}=Y_{\rho_2}F_2^*Z[B]~.
\end{split}
\end{align}
We can implement $F_2^*Z[B]=(KF_2K)[Z[B]]$ where $K$ is complex conjugation. It follows that similar to (\ref{YtoS}) we can write
\begin{equation}
    Y_{\rho_2}=F_2S(F_2^*)^{-1}=F_2S(KF_2 K)^{-1}~.
\end{equation}

How does (\ref{genmap3d}) relate to the $\mathrm{SL}(2,\mathbb{Z}_2)$ relation $(ST_{\rho_2})^3=Y_{\rho_2}$? Here, $S$ is the usual $\mathbb{Z}_2$ 1-form gauging so 
\begin{equation}
    ST_{\rho_2}[Z[b]]=\sum_bZ[b]e^{\frac{i\pi}{2}\int q_{\rho_2}(b)}e^{i\pi\int b\cup B}~.
\end{equation}
so we see that like in the 1+1d case, $F_2=\epsilon_0ST_{\rho_2}$ and $F_2^*=\epsilon_0ST_{\rho_2}^{-1}$ where again, $\epsilon_0$ is evaluation at trivial background: $\epsilon_0[Z]=Z[0]$. 

Now starting with $(ST_{\rho_2})^3=Y_{\rho_2}$, we can again rearrange to get $ST_{\rho_2}S=Y_{\rho_2}T_{\rho_2}^{-1}ST_{\rho_2}^{-1}$. In this case, we cannot simplify further because $T_{\rho_2}^{-1}\neq T_{\rho_2}$. Applying $\epsilon_0$ to both sides gives
\begin{equation}
    \epsilon_0ST_{\rho_2}S=Y_{\rho_2}\epsilon_0T_{\rho_2}^{-1}ST_{\rho_2}^{-1}~.
\end{equation}
Since the quadratic phase is trivial at zero, $\epsilon_0T_{\rho_2}^{-1}=\epsilon_0$ as in the 1+1d case. It follows that
\begin{equation}
    \epsilon_0ST_{\rho_2}S=Y_{\rho_2}\epsilon_0ST_{\rho_2}^{-1}\to F_2S=Y_{\rho_2}F_2^*~,
\end{equation}
matching (\ref{genmap3d}). Therefore, the relation (\ref{genmap3d}) mapping $S$ to the QCA $Y_{\rho_2}$ can be derived as a particular case of $(ST_{\rho_2})^3=Y_{\rho_2}$ obtained by evaluation at trivial background.

\subsection{Generalized higher-form Wu gauging and squares of dualities}

Consider generalized higher form fermionization
\begin{equation}
    F_nZ[B]=\sum_b Z[b] e^{\frac{\pi i}{2}\int q_{\rho_n}(b)}~,
\end{equation}
where in $2n$ spacetime dimension $d\rho_n=\nu_{n+1}$, and we omit an overall normalization in the sum over gauge fields.

The complex conjugated version is
\begin{equation}
    F^* Z[B]=\sum_b Z[b] e^{-\frac{\pi i}{2}\int q_{\rho_n}(b)}~.
\end{equation}
We note that
\begin{equation}
    \frac{\pi i}{2}\int q_{\rho_n}(b)-\left(-\frac{\pi i}{2}\int q_{\rho_n}(b)\right)=\pi \int b\cup b=\pi\int b\cup \nu_{n}~,
\end{equation}
where the last equality uses the Wu formula.
Thus, if the system depends on $\nu_n$ structure, i.e. $\nu_n=d\xi_{n-1}$, then the generalized higher form fermionization is real
\begin{equation}
    2n\text{ spacetime dimension } F_n=F_n^* \quad \text{iff } \nu_n=d\xi_{n-1}~.
\end{equation}

For example, when $n=1$, this means that the 1+1d spacetime has orientation, i.e. we do not enrich the 1+1d system with time-reversal symmetry. This is the case in Ref.~\cite{Ji:2019ugf}.

When $n=2$, this means that the 3+1d spacetime has $pin^-$ structure. This means that the 3+1d system has local fermions with time reversal symmetry that satisfies ${\cal T}^2=1$. For example, the Wu gauging $F_2$ for the trivial theory gives the effective theory for the chiral semion Walker Wang model, while $F_2^*$ produces the anti-semion Walker-Wang model. These two models are equivalent if there are local fermions, where a semion can be converted to an anti-semion by tensoring with a local fermion.

\subsection{$\mathbb{Z}_N$ $ST$ gauging in 3+1d }
For $\mathbb{Z}_N$ for more general $N$, we can only use an ordinary background $\mathbb{Z}_N$ 2-form gauge field for the 1-form symmetry rather than $\rho_2$. In this case, $\sum_bZ[b]e^{\frac{2\pi i}{2N}\int\mathcal{P}(b)+2b\cup B}$ (for $N$ even) simply performs $ST$. Using $(ST)^3=Y$, we can construct a square of dualities illustrated in Fig.~\ref{fig:generalN}. Again we have $S$ on the bottom edge, $ST$ on the right edge, $SCT^{-1}$ on the left edge, and $YT^{-1}$ on the upper edge. Here we used the fact that $S^2=C$ for more general $N$, so $STS=YT^{-1}S^{-1}T^{-1}=YT^{-1}SCT^{-1}$. Since $T$ is a finite depth circuit while $Y$ is a QCA for appropriate $N$, this still maps a QCA on the upper edge to gauging on the lower edge. Now the left/right edges are genuinely $ST$ and $ST^{-1}$ rather than twisted Wu gauging.

Specifically,
\begin{equation}
    Y_NZ[B]=Z[B]\sum_{b}e^{\frac{\pi i}{N}\int \mathcal{P}(b)}\quad (N\text{ even})\qquad Y_NZ[B]=Z[B]\sum_{b}e^{\frac{2\pi i(N+1)}{2N}\int \mathcal{P}(b)}\quad (N\text{ odd})~.
\end{equation}
\begin{figure}[h]
   \centering
   \includegraphics[width=.6\columnwidth]{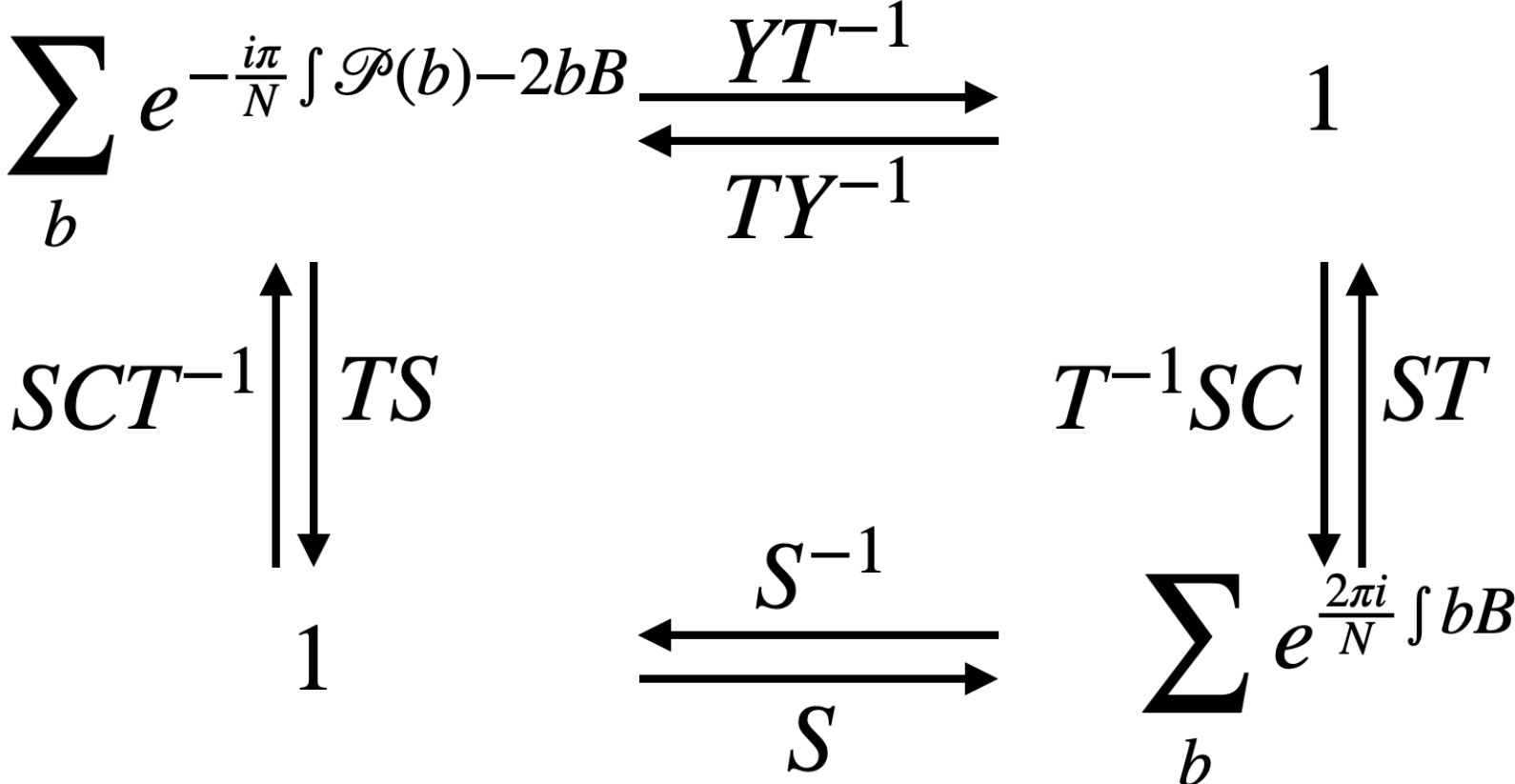} 
   \caption{Square of dualities for $\mathbb{Z}_N$ 1-form symmetries in 3+1d. Here $\mathcal{C}$ is charge conjugation. We gave labels for even $N$; for odd $N$ the definition of $\mathcal{T}$ is slightly different.}
   \label{fig:generalN}
   \end{figure}
   
Simplifying, we get the Pontryagin terms in (\ref{3dZNY}). Since $Y$ is a background independent gravitational invertible phase and $T^{-1}$ is a 1-form SPT entangler, $YT^{-1}$ entangles an invertible 1-form SPT (Walker-Wang model) with boundary chiral central central charge. For example, for $N=2$ this would give a boundary anti-semion theory with chiral central charge $-1$ mod 8 (note that there is a sign flip from the $+1$ mod 8 above because the anomalous framing dependence of the boundary must be the inverse of the bulk signature phase). Indeed, $YT^{-1}$ stacks with the same phase as $Y_{\rho_2}$ according to Ref.~\cite{Hsin:2021qiy}. This also works for other $N$, which would give other QCA along the upper edge, related to gauging on the bottom edge via twisted gauging. 

\subsection{$ST$ gauging for Stiefel-Whitney counterterms}
We can also massage the $\mathrm{SL}(2,\mathbb{Z}_2)$ relations involving Stiefel-Whitney terms into a square of dualities. The benefit of this presentation is that, as we will see in the next section, it will clarify the role of QCA in the lattice implementation of these operations. For $Y$ corresponding to a Stiefel-Whitney term, we have the same square of dualities as in the previous section: $S$ on the bottom leg, $ST$ on the right leg, $SCT^{-1}=ST$ on the left leg (since in this case we only have $\mathbb{Z}_2$ gauge fields and $T$ is order two), and $YT$ on the top leg. We illustrate this square of dualities for $Y=(-1)^{\int \nu_2^2}$ in Fig.~\ref{fig:stiefel}.
\begin{figure}[h]
   \centering
   \includegraphics[width=.6\columnwidth]{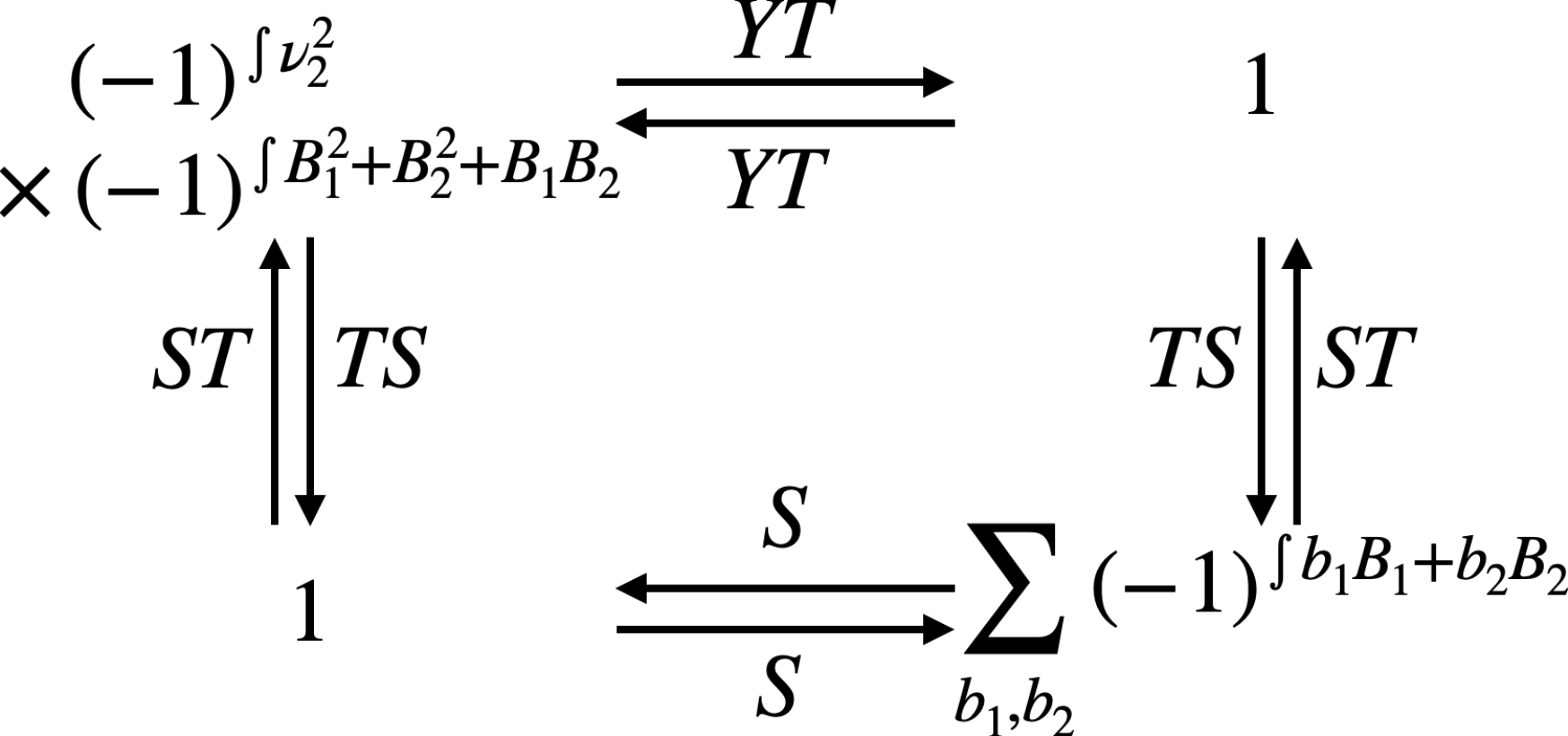} 
   \caption{Square of dualities for $Y$ corresponding to Stiefel-Whitney counterterms. Since everything is $\mathbb {Z}_2$, $T^2=S^2=Y^2=1$ so we do not need to use inverses.}
   \label{fig:stiefel}
   \end{figure}

\section{Lattice Models and QCA Implementation via $S$ and $T$}
\label{sec:lattice}

In this section, we will show how to implement all of the steps discussed in previous sections in lattice models. In the context of lattice models, it is of particular interest to distinguish states entangled from a product state by FDQC vs QCA. Here we must be careful with terminology: when $Y$ is a gravitational term that takes a nontrivial value on some orientable manifold, it corresponds to a QCA according to Conjecture~\ref{conj:nontrivialqca} from Ref.~\cite{Fidkowski:2024hpz}. However, on the lattice, two QCAs are equivalent if they differ by a FDQC. According to this definition, the QCA producing $Y$ is equivalent to that producing i.e. $YT^{-1}$. The important aspect is that the QCA is implemented on the lattice by a sequence of $S$ and $T$ operations, where $T$ is a FDQC and $S$ is implemented by an FDQC + measurement + error correction.

For a simple example that clarifies what we mean, consider the Stiefel-Whitney term $Y=(-1)^{\int \nu_2^2}=\sum_{b_1,b_2}(-1)^{\int b_1^2+b_2^2+b_1\cup b_2}$. On the lattice, one might think that this would be implemented by entangling $(-1)^{\int b_1^2+b_2^2+b_1\cup b_2}$ via a finite depth circuit similar to the cluster state entangler, and then measuring the $b_1$ and $b_2$ degrees of freedom to decouple them from the theory. But on the lattice, this would produce a trivial state with all of the degrees of freedom measured out. Indeed, the above is not an $ST$ operation; it is really an $\epsilon_0ST$ operation. While the first step is indeed a circuit, the second step is not quite gauging because it does not introduce new gauge field degrees of freedom. 

Instead, to get the 3-fermion Walker-Wang state, we really need to measure $b_1,b_2$ in the resource state $(-1)^{\int b_1^2+b_2^2+b_1\cup b_2+b_1\cup B_3+b_2\cup B_4}$. This gives
\begin{align}
\begin{split}
    \sum_{b_1,b_2}(-1)^{\int b_1^2+b_2^2+b_1\cup b_2+b_1\cup B_3+b_2\cup B_4}&=\sum_{b_1,b_2}(-1)^{\int (\nu_2+b_2+B_3)\cup b_1+(\nu_2+B_4)\cup b_2}\\
    &=(-1)^{\int(\nu_2+B_4)\cup (\nu_2+B_3)}\\
    &=(-1)^{\int \nu_2^2}(-1)^{\int B_3^2+B_4^2+B_3\cup B_4}=ST[1]~.
\end{split}
\end{align}
The state $(-1)^{\int \nu_2^2}(-1)^{\int B_3^2+B_4^2+B_3\cup B_4}$ rather than just $(-1)^{\int \nu_2^2}$ is the state entangled by the 3-fermion Walker Wang QCA, and is obtained by following the operation $T$ with an $S$ that introduces new degrees of freedom $B_3,B_4$ (see Fig.~\ref{fig:stiefel}). Summing over $b_1,b_2$, which on the lattice corresponds to measuring them, leaves a nontrivial wavefunction on the $B_3,B_4$ degrees of freedom that forms a 3-fermion Walker-Wang state (in fact, this precise process was discussed in Appendix B of Ref.~\cite{tantikramerswannier}). We can of course always compose the QCA that entangles the above phase with the finite-depth circuit that entangles $(-1)^{\int B_3^2+B_4^2+B_3\cup B_4}$ to isolate the gravitational term $(-1)^{\int \nu_2^2}$. In this case that simply corresponds to $TST[1]$. 

In the following, we will show how the three classes of $Y$ can produce states entangled by nontrivial QCA. By viewing these states as obtained from the left leg of their respective square of dualities (see Fig.~\ref{fig:stiefel}), we obtain a shortcut to these QCA-entangled states by using local unitary evolution, measurement, and error correction to implement $S$ and $T$. It is not clear whether or not all of the nontrivial QCA in class (3) can be obtained by a finite sequence of $T$ that is the exponential of a quadratic form and $S$ corresponding to the polarization of that quadratic form. We show that if we allow for more general $T$ that is not the exponential of a quadratic form, we can straightforwardly obtain all states of the form $U_{\mathrm{QCA}}|0\rangle$ (where $|0\rangle$ is a product state) by a FDQC followed by gauging of a finite number of higher-form symmetries.

We will also perform the nontrivial check that $ST[1]=YT^{-1}[1]$ is equivalent to $(ST)^3[1]=Y$ in the sense that they give stabilizer groups related by a FDQC. To do so, we use the polynomial formalism for translation-invariant stabilizer codes \cite{Haah:2013oba}. We will focus on a particular $\mathbb{Z}_3$ example, but a similar calculation likely applies for more general $\mathbb{Z}_N$, at least for odd $N$. For even $N$, the QCA may not be Clifford, in which case the polynomial formalism would not help.

In the following discussion, we will focus on the lattice description for QCAs from the gravitational Pontryagin responses and Stiefel-Whitney classes responses. The lattice descriptions for the QCA from ABK invariants have been discussed in the literature, e.g. the Kitaev chain (see e.g. Ref.~\cite{Seiberg:2023cdc}). To construct the generalizations in higher dimensions that depend on higher Wu structures \cite{Hsin:2021qiy} using the higher-form gauge theories, one needs to first understand how to present Wu structures on the lattice similar to the spin structure (see e.g. Refs.~\cite{Bhardwaj:2016clt,Chen_2018}). For a statesum model construction, see Ref.~\cite{Kobayashi:2021pst}.  We will reserve the problem of constructing more explicit wavefunction/stabilizer lattice models for future study.

 \subsection{QCA involving gravitational Pontryagin response}
 \label{sqcapontryagin}
 \begin{figure}[h]
   \centering
   \includegraphics[width=.75\columnwidth]{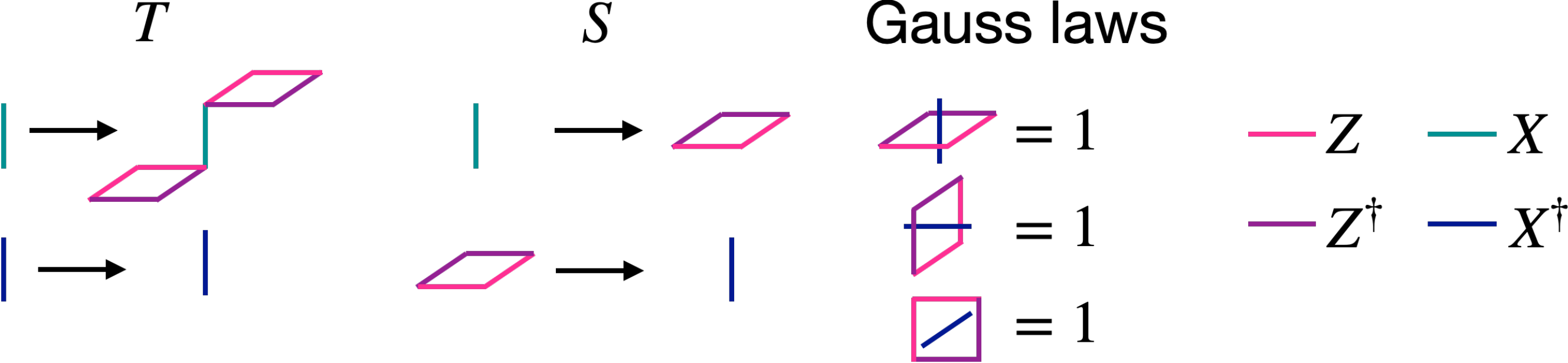} 
   \caption{The $T,S$ relations for $\mathbb{Z}_3$ $1$-form symmetry. Here $T$ entangles the SPT $e^{\frac{2\pi i 2}{3}\int\mathcal{P}(B)}$ and $S$ performs gauging (see Ref.~\cite{Cordova:2023bja}).}
   \label{fig:TSZ3}
   \end{figure}
   
 \begin{figure}[h]
   \centering
   \includegraphics[width=.9\columnwidth]{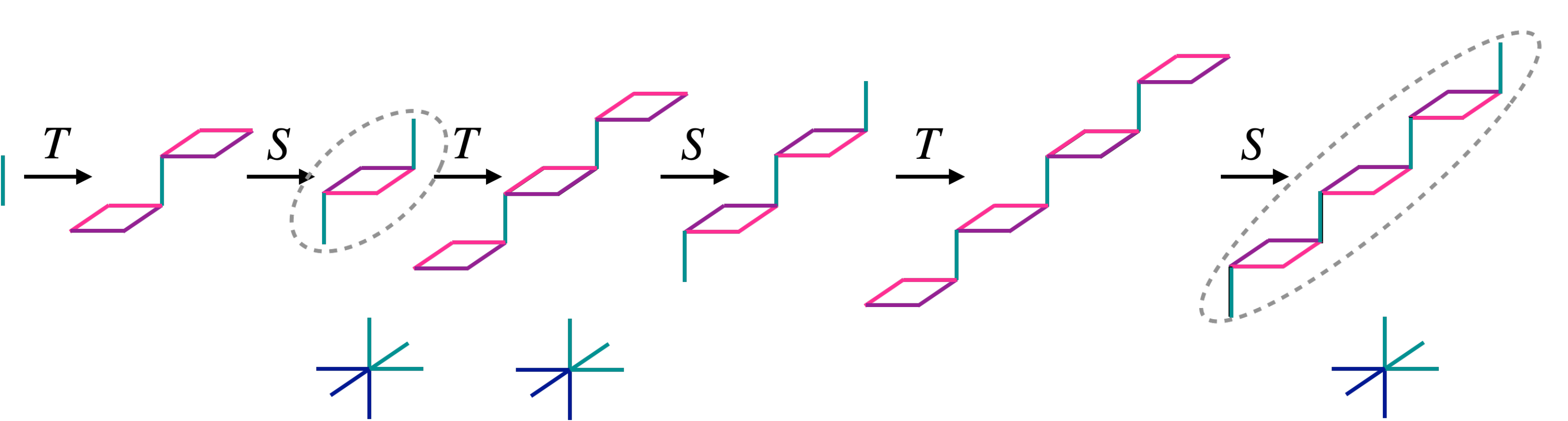} 
   \caption{The action of $S$ and $T$ (up to translations) on $\mathbb{Z}_3$ qudits, starting from a product state stabilizer. At $ST[1],TST[1]$, and $(ST)^3[1]$, there is an additional vertex stabilizer added to the stabilizer group in addition to the above stabilizer, its analogous versions in the other two directions, and their translations. Even though $TST[1]=(ST)^3[1]$ in field theory, the lattice stabilizers differ, so it is nontrivial to show that the state specified by the stabilizers are related by a FDQC. Here we will show instead that the simpler stabilizers from $ST$ are related to those of $(ST)^3$ by a FDQC, since those of $ST$ are related to those of $TST$ by a FDQC $T$.}
   \label{fig:ST3Z3}
   \end{figure}
Our first example of using $S,T$ to obtain a state entangled by a QCA is when $Y$ is a gravitational Pontryagin counterterm. Here our $ST[1]=YT^{-1}[1]$ calculation recovers results from Ref.~\cite{sunqca} and written implicitly in Ref.~\cite{Cordova:2023bja}. Note however that Ref.~\cite{sunqca} argues for the nontriviality of their QCA from the boundary anyon theory of the entangled Walker-Wang model; here we obtain the same result via the nontrivial bulk Pontryagin term $Y$, which gives a signature that must be canceled by a chiral boundary theory. We will also show that $(ST)^3[1]=Y[1]$ in the sense that it is FDQC equivalent to $YT^{-1}[1]$. This turns out to be nontrivial because the lattice model obtained from $(ST)^3$ on a product state looks quite different from that of $ST[1]$.

For simplicity, we will work with a $\mathbb{Z}_3$ 1-form symmetry on a cubic lattice. In this case, $T[1]=e^{\frac{2\pi i 2}{3}\int\mathcal{P}(B)}$. Since we are working with $\mathbb{Z}_3$ symmetry, in this section we will always use $\mathbb{Z}_3$ qutrits with $Z_e^3=X_e^3=1$ and $X_eZ_e=e^{-2\pi i/3}Z_eX_e$. The SPT entangler was discussed in Refs.~\cite{Tsui:2019ykk,Cordova:2023bja} and is presented in Fig.~\ref{fig:TSZ3}. We also present the gauging map in Fig.~\ref{fig:TSZ3}. One can read off from the gauging map that $S^2=C$, which acts nontrivially on $\mathbb{Z}_3$ qudit operators. 

The combination $ST[1]$ gives a model that matches that of Ref.~\cite{PhysRevB.107.125425} up to a Hadamard on every edge and a charge conjugation operator on every $\hat{x}$ edge. This was shown in Ref.~\cite{sunqca} to be a nontrivial QCA (under standard assumptions) due to the chiral anyon boundary theory. On the other hand, $(ST)^3$ gives much more complicated stabilizers, as shown in Fig.~\ref{fig:ST3Z3}. A priori, it is not obvious that these two sets of stabilizers define the same QCA, even though this is indicated from field theory. Indeed, in field theory we have
\begin{equation}
    TST[1]=Y[1]\qquad (ST)^3[1]=Y[1]~.
\end{equation}
so $ST[1]$ should give $(ST)^3[1]$ up to a FDQC corresponding to $T^{-1}$, but on the lattice, $TST$ and $(ST)^3$ may differ by an additional FDQC. We show in this section (with details in Appendix \ref{sec:STST3}) that this is indeed the case. Our approach will be to use a bulk boundary correspondence: we will show that the stabilizers obtained from $ST$ and $(ST)^3$ give rise to equivalent boundary algebras/boundary chiral topological orders. Following standard assumptions, this means that the bulk QCA are equivalent. 

We will go into more detail on the approach here, and will summarize the high level steps at the end of this section. We will show that the stabilizers obtained from $ST$ on the product state stabilizers $\{X_e\}$ (where $e$ indicates that it acts on an edge of the cubic lattice) can be written as a set of ``locally flippable separators," $\mathcal{S}_{ST}=\{UX_eU^\dagger\}$ which are qutrit $X_e$ operators conjugated by a locality preserving Clifford QCA. $(ST)^3$ on the product state stabilizers can also be written as a set of locally flippable separators $\mathcal{S}_{(ST)^3}$. This step is important because the stabilizers in Fig.~\ref{fig:ST3Z3} (if we include the two rotated versions shown in Fig.~\ref{fig:Z3example} together with all of their cubic lattice translations) are not independent: multiplying six plaquette terms around a cube gives a product of two star terms\cite{Sun:2025vta}. For the purpose of checking whether two sets of stabilizers correspond to the same QCA, it is important to first convert both sets into locally flippable separator generators. Indeed we show in Appendix \ref{sec:STST3} that it would be impossible to map the individual stabilizers circled in (\ref{fig:ST3Z3}) together with the two rotated versions and their translations directly, due to incompatibility of the ground state degeneracies of a Hamiltonian constructed from the $ST$ stabilizers vs $(ST)^3$ stabilizers of Fig.~\ref{fig:ST3Z3}) if we omit the star terms.

Once we have written the stabilizers from $ST$ and $(ST)^3$ as $\mathcal{S}_{ST}$ and $\mathcal{S}_{(ST)^3}$ respectively, it remains to show that these sets of locally flippable separators are equivalent, meaning they can be related by a FDQC:
\begin{equation}
    (ST)[1]\sim (ST)^3[1]: \mathcal{S}_{ST}\sim \mathcal{S}_{(ST)^3}~.
\end{equation}

To do so, we proceed by the polynomial approach in Ref.~\cite{Haah:2013oba}. The polynomial formalism allows for efficient encoding of translation-invariant stabilizers. Although all of these stabilizers commute in the 3+1d bulk, introducing a physical boundary causes boundary stabilizers to get truncated. After truncation, the boundary stabilizers may fail to commute with each other. Their commutation relations are encoded in a skew-Hermitian matrices $\Xi_{ST}, \Xi_{(ST)^3}$. The two sets of locally flippable separators are equivalent \cite{Haah:2019fqd}, meaning they are obtained via equivalent QCA, if and only if
\begin{equation}
    E^\dagger(\Xi_{ST}\oplus \lambda_q)E=\Xi_{(ST)^3}\oplus \lambda_{q'}~,
\end{equation}
where $\lambda_q,\lambda_{q'}$ are standard symplectic matrices of dimension $2q$ and $2q'$ respectively and $E$ is an invertible matrix. Physically, $\lambda_q$ and $\lambda_{q'}$ represent adding $q,q'$ ancillary boundary qutrits per boundary unit cell, each with a trivial onsite separator and its conjugate local flipper.
The existence of such invertible $E$ means that the boundary algebras are related by a change of basis, so they carry the same anomaly/chiral central charge. While this does not construct an explicit FDQC to relate $\mathcal{S}_{ST}$ and $\mathcal{S}_{(ST)^3}$, this does indicate that such an FDQC exists. 

Indeed, we will find that
\begin{equation}\label{boundarymap}
    E^\dagger \Xi_{ST}E=\lambda_1\oplus\lambda_1\oplus\Xi_{(ST)^3},\qquad\lambda_1=\begin{pmatrix} 0 & 1\\ -1 & 0\end{pmatrix}~,
\end{equation}
indicating that the two stabilizer groups are related by a FDQC. To summarize, our line of reasoning is as follows:
\begin{enumerate}
    \item Obtain locally flippable separators $\mathcal{S}_{ST}, \mathcal{S}_{(ST)^3}$.
    
    \item Introduce a boundary along the $(1,1,1)$ direction (this is natural because the stabilizers propagate in this direction).
    
    \item Compute $\Xi_{ST}$ and $\Xi_{(ST)^3}$, which encode the commutation relations of the boundary stabilizers.
    
    \item Show that they can be related by (\ref{boundarymap}).
\end{enumerate}
\subsubsection{Construction of locally flippable separators}
We place a qutrit on every edge of the cubic lattice and denote the qutrit on the $a$-oriented edge anchored at the vertex $r$ by $(a,r)$ where $a\in\{x,y,z\}$. For cyclically ordered $(a,b,c)$ we define the oriented $ab$ plaquette
\begin{equation}
    P_{ab}(r)=Z_{a,r}Z_{b,r+\hat{a}}Z_{a,r+\hat{b}}^\dagger Z_{b,r}^\dagger~.
\end{equation}
In this notation, the vertex star is 
\begin{equation}
    A_v(r)=\prod_{a=x,y,z}X_{a,r}X_{a,r-a}^\dagger~.
\end{equation}
$(ST)[1]$ and $(ST)^3[1]$ give two different plaquette type operators. Denoting by $c$ the perpendicular direction, we have:
\begin{equation}
    B_{c,r}=B_{ab}(r)X_{c,r-\hat{c}}X_{x,r+d-\hat{c}},\qquad \tilde{B}_{c,r}=\left(\prod_{n=0}^2B_{ab}(r+nd)\right)\left(\prod_{m=0}^3X_{c,r+md-\hat{c}}\right)~.
\end{equation}
The anchor $r-\hat{c}$ puts the first leg immediately before the first plaquette, and $r+d-\hat{c}=r+\hat{a}+\hat{b}$ puts the second leg at its opposite corner (here $d=\hat{a}+\hat{b}+\hat{c})$. This notation takes car of all translations and orientations of the stabilizers. We have the following stabilizer groups for $(ST)[1]$ and $(ST)^3[1]$:
\begin{equation}
    \tilde{\mathcal{S}}_{ST}=\langle A_v(r),B_{c,r}\rangle,\qquad \tilde{\mathcal{S}}_{(ST)^3}=\langle A_v(r),\tilde{B}_{c,r}\rangle~,
\end{equation}
where $\langle...\rangle$ means it is ``generated by." Here we use $\tilde{\mathcal{S}}$ because we will later define modified stabilizers forming locally flippable separators by $\mathcal{S}$.

We will use the polynomial formalism introduced in Ref.~\cite{Haah:2013oba}. For a translation-invariant Pauli stabilizer problem in 3+1d, we can use the Laurent ring $\mathcal{R}=\mathbb{F}_3[x^{\pm},y^{\pm},z^{\pm}]$. The monomial $x^iy^jz^k$ translates an operator by $i\hat{x}+j\hat{y}+k\hat{z}$. A Pauli operator is a vector
\begin{equation}
    p=\begin{pmatrix} p_X\\ p_Z\end{pmatrix}\in \mathcal{R}^6,\qquad p_x,p_Z\in \mathcal{R}^3~
\end{equation}
ordered as $(X_x,X_y,X_z|Z_x,Z_y,Z_z)^T$. We will sometimes use a stabilizer i.e. $B_x$ interchangeably with its polynomial vector representation. A coefficient $1$ or $-1$ gives the corresponding qutrit Pauli exponent. We also have the standard symplectic form 
\begin{equation}
    \lambda_3=\begin{pmatrix} 0 & I_3\\ -I_3 & 0\end{pmatrix}~,
\end{equation}
and let $M^\dagger =\overline{M}^T$ where $\overline{f(x,y,z)}=f(x^{-1},y^{-1},z^{-1})$. Then for Pauli strings $p$ and $q$, $p$ commutes with every translate of $q$ if and only if
\begin{equation}
    p^\dagger\lambda_3q=0~.
\end{equation}
If there are $t$ translation types of stabilizer generators, we can collect them as columns of a $6\times t$ matrix $\sigma$. The commuting condition is
\begin{equation}
    \sigma^\dagger \lambda_3\sigma=0~.
\end{equation}
The excitation map is 
\begin{equation}
    \epsilon=\sigma^\dagger\lambda_3~,
\end{equation}
and a local multiplicative identity among stabilizers (i.e. a product of stabilizers being equal to 1) is a vector in $\mathrm{ker}(\sigma)$. 

A presentation can contain redundant Hamiltonian generators. To study QCA structure, we need to massage a presentation into a \emph{locally flippable separator}, which is a free set of commuting stabilizers $\mathcal{S}=(\mathcal{S}_x,\mathcal{S}_y,\mathcal{S}_z)$ and local Pauli columns $\mathcal{F}=(\mathcal{F}_x,\mathcal{F}_y,\mathcal{F}_z)$ satisfying
\begin{equation}\label{separator}
    \mathcal{S}^\dagger\lambda_3\mathcal{S}=0,\qquad\mathcal{S}^\dagger\lambda_3\mathcal{F}=I_3,\qquad \mathcal{F}^\dagger\lambda_3\mathcal{F}=0~.
\end{equation}

Then $(-F|S)$ is a symplectic Laurent matrix (up to a harmless sign convention) and defines a translation invariant Clifford QCA. The $\mathcal{S}_1,\mathcal{S}_2,\mathcal{S}_3$ can be thought of as the conjugated $X_{x,r},X_{y,r},X_{z,r}$ operators while $\mathcal{F}_1,\mathcal{F}_2,\mathcal{F}_3$ are the conjugated $Z_{x,r},Z_{y,r}Z_{z,r}$ operators under the QCA, since they have the same commutation relations.

We now define our stabilizers $\tilde{S}_{ST}$ and $\tilde{S}_{(ST)^3}$ in the polynomial formalism. Let
\begin{equation}
    u=xyz,\qquad \delta=\begin{pmatrix} 1-x\\1-y\\1-z\end{pmatrix},\qquad D=\mathrm{diag}(x^{-1},y^{-1},z^{-1})~.
\end{equation}
The star term is then
\begin{equation}
    a=-D\delta =\begin{pmatrix} 1-x^{-1}\\ 1-y^{-1}\\1-z^{-1}\end{pmatrix}~.
\end{equation}
The three plaquette terms ordered by normals $x,y,z$ form the columns of
\begin{equation}
    C=\begin{pmatrix}
        0 & z-1 & -1y\\ 1-z & 0 & x-1\\ y-1 & 1-x & 0\end{pmatrix}~.
\end{equation}
Using column order $(A_v,B_x,B_y,B_z)$ and $(A_v,\tilde{B}_x,\tilde{B}_y,\tilde{B}_z)$ we have the following compact encoding of the stabilizers:
\begin{equation}
    \sigma_{ST}=\begin{pmatrix} a & (1+u)D\\ 0 & C\end{pmatrix},\qquad \sigma_{(ST)^3}=\begin{pmatrix} a & (1+u+u^2+u^3)D\\ 0 & (1+u+u^2)C\end{pmatrix}~.
\end{equation}
In particular, $(1+u)c^{-1}$ gives the two $X_c$ legs for the $ST$ stabilizers at $r-\hat{c}$ and $r+d-\hat{c}$. For the $(ST)^3$ stabilizers, $1+u+u^2$ gives the three plaquettes at $r,r+d,r+2d$ and $(1+u+u^2+u^3)c^{-1}$ gives the four $X_c$ legs.

To check that all of the stabilizers commute, we compute as a sanity check,
\begin{equation}
    \sigma_{ST}^\dagger\lambda_3\sigma_{ST}=0,\qquad\sigma_{(ST)^3}^\dagger\lambda_3\sigma_{(ST)^3}=0~.
\end{equation}
This should hold because each set of stabilizers were obtained by a finite sequence of FDQCs and $\mathbb{Z}_2$ higher form symmetry gauging operations. 

The two four generator presentations have relations
\begin{equation}
    r_{ST}=\begin{pmatrix} 1+u\\ 1-x\\1-y\\1-z\end{pmatrix},\qquad r_{(ST)^3}=\begin{pmatrix}1+u+u^2+u^3\\1-x\\1-y\\1-z\end{pmatrix}~.
\end{equation}
$r_{ST}$ comes from the fact that multiplying the six plaquette terms $B_{c,r}$ around a plaquette cancels all of the $Z_{a,r}$ operators and produces vertex stars at two opposite vertices, and $r_{(ST)^3}$ comes from a similar relation on the larger $(ST)^3$ stabilizers. We can check that $\sigma_{ST}r_{ST}=\sigma_{(ST)^3}r_{(ST)^3}=0$, so i.e. $(1+u)A_v+(1-x)B_p+(1-y)B_y+(1-z)B_z$ represents the identity Pauli operator. Due to the redundancies, we would like to use a different generating set of stabilizers that give a locally flippable separator. Specifically, we would like to find a local reversible change of basis $V_L$ such that
\begin{equation}
    \sigma_{ST}V_{ST}=(0,S_x,S_y,S_z),\qquad\sigma_{(ST)^3}V_{(ST)^3}=(0,\tilde{S}_x,\tilde{S}_y,\tilde{S}_z) ~.
\end{equation}

We obtain
\begin{equation}
    V_{ST}=\begin{pmatrix} 1+u & -yz & -z & -1\\ 1-x & 1 & 0 & 0\\ 1-y & 0 & 1 & 0\\ 1-z & 0 & 0 & 1\end{pmatrix},\qquad \qquad V_{(ST)^3}=\begin{pmatrix} f_R & -hyz & -hz & -h\\ 1-x & 1 & 0 & 0\\ 1-y & 0 & 1 & 0\\ 1-z & 0 & 0 & 1\end{pmatrix}~,
\end{equation}
where $h=u(1-u)$ and $f_R=1+u+u^2+u^3$. Note that the first column is simply $r_{ST}$ and $r_{(ST)^3}$, ensuring that $\sigma_{ST}V_{ST}$ and $\sigma_{(ST)^3}V_{(ST)^3}$ having zero as their first components. We also have
\begin{align}
\begin{split}
    S_x&=B_x-yzA_v,\qquad S_y=B_y-zA_v,\qquad S_z=B_z-A_v\\
    \tilde{S}_x&=\tilde{B}_x-u(u-1)yzA_v,\qquad \tilde{S}_y=\tilde{B}_y-u(u-1)zA_v,\qquad \tilde{S}_z=\tilde{B}_z-u(u-1)A_v~.
\end{split}
\end{align}
The top line recovers the locally flippable separator construction in Ref.~\cite{Sun:2025vta}, while the bottom line is new. We also need to find local flippers $\mathcal{F}$ satisfying (\ref{separator}). We find these by computer search and specify them in Appendix~\ref{sec:STST3}. Thus both stabilizer states admit exact locally flippable separators, defining parent Hamiltonians obtained from the paramagnet Hamiltonian via a (Clifford) QCA mapping $X_{a,r}\to \mathcal{S}_{a,r}$ and $Z_{a,r}\to \mathcal{F}_{a,r}$. Finally we define the $6\times 6$ matrix
\begin{equation}
    Q=(\mathcal{S}|\mathcal{F})~,
\end{equation}
which completely specifies the translation invariant Clifford QCA and satisfies $Q^\dagger\lambda_3Q=\lambda_3$ (note that we use a slightly different convention compared to Ref.~\cite{Haah:2019fqd,Sun:2025vta} which defines $Q=(-{\cal F}|{\cal S})$ with separators from the last few columns of $Q$).

\subsubsection{Boundary skew-Hermitian forms}
Cutting perpendicular to one lattice direction produces a boundary algebra over a Laurent ring in $d-1$ variables. The commutation relations of a free boundary basis form a nonsingular skew-Hermitian matrix $\Xi=-\Xi^\dagger$. Two QCAs with boundary forms $\Xi_{ST}$ and $\Xi_{(ST)^3}$ are equivalent precisely when
\begin{equation}\label{equivqca}
    E^\dagger(\Xi_{ST}\oplus\lambda_q)E=\Xi_{(ST)^3}\oplus\lambda_{q'}~.
\end{equation}
The $\lambda_q$,$\lambda_{q'}$ factors can be thought of as $q,q'$ additional qutrits added to the boundary unit cells. Since $\lambda_q=\begin{pmatrix} 0 & 1_q\\ -1_{q} & 0\end{pmatrix}$ is of the standard symplectic form, they are trivial factors that do not need to be further disentangled. Here, $E$ is an invertible Laurent matrix.

For the case at hand, the stabilizers naturally propagate along the diagonal $(1,1,1)$ direction, so we choose $u=xyz$ as the normal variable and retain $y,z$ as tangential variables. On monomials this is a change of exponent coordinates
\begin{equation}
    x^ay^bz^c=u^ay^{b-a}z^{c-a}~.
\end{equation}
For each separator type, we shift its minimum $u$ power to zero, translate it across the cut, retain its nonnegative-$u$ part (these give the truncated boundary stabilizers), and compute all boundary commutators. The three $(ST)$ separator columns have $u$ ranges $(-1,0),(-1,1),(-1,1)$ so there are five crossing generators (one from $\mathcal{S}_x$ and two each from $\mathcal{S}_y,\mathcal{S}_z$). The $(ST)^3$ ranges are $(-1,2),(0,3),(0,3)$ resulting in nine crossing generators. 

To obtain the boundary skew-Hermitian matrix for $\Xi_{ST}$, we will proceed as follows. A similar method constructs $\Xi_{(ST)^3}$. For $\Xi_{ST}$, we established that the largest size of a separator in the normal direction is 2, so we will group 2 consecutive layers into one supercell with an internal $\mathbb{Z}_2$ flavor index. The resulting $Q^{(2)}$ is $12\times 12$ (more generally, after grouping $k$ layers into a supercell, we would get $Q^{(k)}$ that is $6k\times 6k$). Let $s=u^k$. On the internal layer labels, one original $u$ translation is represented by 
\begin{equation}
    T_k(s)=\begin{pmatrix} 0 & 0 & \cdots & 0 & s\\ 1 & 0 & \cdots & 0 & 0\\ 0 & 1 & \cdots & 0 & 0\\ \vdots & \vdots  & \cdots & \vdots & \vdots \\ 0 & 0 & \cdots  & 1 & 0\end{pmatrix},\qquad T_k(s)^k=sI_k~.
\end{equation}
For example, $T_2(s)=\begin{pmatrix} 0 & s\\ 1 & 0\end{pmatrix}$. Then $Q^{(k)}(s,y,z)=Q(T_k(s),y,z)$ where we replace $u$ by $T_k(s)$. 

If $k$ is at least the total normal width of $Q$, then after choosing the relative origin of the input and output supercells, the enlarged matrix has the two-supercell form 
\begin{equation}
    Q^{(k)}=A+sB~,
\end{equation}
where $A,B$ contain no $s$. To get $\Xi$ we actually do not need the flippers (since we are only truncating the Hamiltonian, which is constructed from the separators) so we only need $k$ to be the width of the separators. We use
\begin{equation}
    S^{(k)}=A_S+sB_S~.
\end{equation}
Physically, $A_S$ is the part of each separator lying in the same supercell as its anchor site, while $sB_S$ is the part lying in the next normal supercell. Since the cut is placed between these two supercells, $B_S$ records the separator pieces crossing the cut. We then define the Pauli commutation matrix 
\begin{equation}
    \Xi_{\mathrm{raw},ST}=B_S^\dagger\lambda_{3k}B_S~
\end{equation}
in one enlarged supercell, and print this explicitly in Appendix \ref{sec:STST3}. This matrix records how each boundary operator $b_i$ which is a column of $B_S$ commutes with every boundary translation of $b_j$. $\Xi_{\mathrm{raw},ST}$ is $5\times 5$ while $\Xi_{\mathrm{raw},(ST)^3}$ is $9\times 9$ coming from the 5 (resp. 9) stabilizers that cross the cut respectively. However, these matrices have rank 2 and 6 respectively (using ordinary elimination, temporarily allowing ratios of polynomials in $y,z$), so the raw crossing lists are redundant. In other words, three combinations of the five crossing strings commute with every crossing string. The QCA classification theorem assumes that these null combinations have been removed, so it cannot directly be applied to $\Xi_{\mathrm{raw},ST}$. 

We now identify from the five raw strings (five columns of $\Xi_{\mathrm{raw},ST}$) two local generators. We need to be careful here because for the actual local generators we do not want ratios of polynomials, which may be nonlocal. Let the five boundary crossing strings be $g_{x1},g_{y1},g_{y2},g_{z1},g_{z2}$.
The $5\times 5$ commutator matrix has rank two because the following combinations commute with all five strings:
\begin{equation}
        q_{y1}-zg_{z1},\qquad g_{x1}-(1+yz)g_{z1},\qquad (1+yz)g_{z1}-yz[(1-y)g_{y2}+(1-z)g_{z2}]~.
    \end{equation}
These null directions carry no information about the noncommuting boundary algebra. Any vector (boundary Pauli string) can be therefore written as $c_1e_1+c_2e_1+$ terms that commute with everything. After ignoring these commuting combinations, we get the explicit forms for the remaining two independent noncommuting directions $e_1,e_2$ as the columns of (see also Ref.~\cite{sunqca})
\begin{equation}
    \Xi_{ST}=\begin{pmatrix} z^{-1} -z & -1+z^{-1}+y+yz^{-1}\\ -z+1-zy^{-1}-y^{-1} & y-y^{-1}
    \end{pmatrix}~,
\end{equation}
which satisfies $\Xi_{ST}^\dag = -\Xi_{ST}$ and $\mathrm{det}(\Xi_{ST})=1$ so neither of these two boundary operators commutes with the entire boundary algebra; each has a nontrivial conjugate partner. 

By a similar approach, we obtain a $6\times 6$ matrix $\Xi_{(ST)^3}$. We print this in Appendix~\ref{sec:STST3}. All that remains is to check that (\ref{equivqca}) is satisfied. We show this by constructing $E$ explicitly in Appendix~\ref{sec:STST3}. Note that $\lambda_1\oplus\lambda_1$ correctly modifies the $5\times 5$ matrix $\Xi_{ST}$ to match the dimension of $\Xi_{(ST)^3}$. 

This establishes that the QCAs corresponding to $(ST)$ and $(ST)^3$ are equivalent.

\subsection{QCAs involving Stiefel-Whitney topological responses}
 \label{sstiefellattice}
In general, for $Y=(-1)^{\int w_iw_j\cdots w_k}$, we obtained in Sec.~\ref{sec:stiefelwhitneyst} an expression
\begin{equation}
    Y=\sum_{b,c}(-1)^{\int bc+bw_i+c(w_j\cdots w_k)}=\epsilon_0ST[1]~,
\end{equation}
where $T[1]=(-1)^{\int bc+bw_i+c(w_j\cdots w_k)}$. We also showed in (\ref{STYT}) that $ST[1]=YT^{-1}[1]$. Therefore, to get a state entangled by the QCA corresponding to $YT^{-1}$ we simply need to act $ST$ on the trivial product state. However, in Sec.~\ref{sec:stiefelwhitneyst} we used $T$ that includes direct coupling to Stiefel-Whitney classes. This can often be rewritten to be entirely in terms of cohomological operations on  $\mathbb{Z}_2$ background gauge fields via Wu formulas (see e.g. Refs.~\cite{milnor1974characteristic,Chen:2021xks,Chen:2021ppt}). These can be implemented in lattice models via the methods in Ref.~\cite{Chen:2021ppt}. Specifically, we can obtain
\begin{itemize}
    \item For $Y=(-1)^{\int \nu_2^2}$, we can write $T=(-1)^{\int \nu_2(B_1+B_2)+B_1B_2}=(-1)^{\int B_1^2+B_2^2+B_1B_2}$ which can be entangled by a FDQC on two flavors of qubits. $S$ proceeds via FDQC coupling to two more flavors of qubits, measurement, and error correction (see Appendix B of Ref.~\cite{PhysRevX.14.021040}). A similar construction holds for the 7+1d version with $Y=(-1)^{\int \nu_4^2}$. 
    \item For $Y=(-1)^{\int w_2w_3}$, we can write $T=(-1)^{\int BC+Bw_3+Cw_2}=(-1)^{\int BC+\mathrm{Sq}^2(C)+\mathrm{Sq}^2\mathrm{Sq}^1(B)}$ on closed orientable manifolds. This follows from $\int\mathrm{Sq}^2(C)=\int w_2C$ and on orientable 5-manifolds, $\int \mathrm{Sq}^2\mathrm{Sq}^1(B)=\int w_2 \mathrm{Sq}^1(B)=\int \left(\mathrm{Sq}^1(w_2B)+\mathrm{Sq}^1(w_2)B\right)=\int w_3B$ (see e.g. Refs.~\cite{Gukov:2020btk,Chen:2021xks,Fidkowski:2024hpz}).
    \item For $Y=(-1)^{\int w_2^4}$, which is conjectured to be a nontrivial QCA, the naive way to write $Y$ as a Gauss sum is as $Y=\sum_{A,B}(-1)^{\int AB+Aw_2+Bw_2^3}$ for $A$ a $6$-form gauge field and $B$ and $2$-form gauge field (we can also write it similarly in terms of two $4$-form gauge fields by splitting $w_2^4$ into two factors of $w_2^2$). Another way to write $Y$ as a Gauss sum is as $\sum_{A,B}(-1)^{\int AB+\mathrm{Sq}^2A+B^4}$ because the sum over $A$ sets $B=w_2$. This Gauss sum uses $T$ written entirely in terms of background gauge fields, defined by a quadratic form with the same polarization as $AB+Aw_2+Bw_2^3$, so forms a projective $\mathrm{SL}(2,\mathbb{Z}_2)$ with the same $S$ as for $AB+Aw_2+Bw_2^3$.
\end{itemize}
However, a general Stiefel-Whitney counterterm $Y$ may not be of the form $\sum_{\{b_i\}}(-1)^{\int q(\{b_i\})}$ for any quadratic form $q(\{b_i\})$ involving symmetry gauge fields. We will now show that for all gravitational terms $Y$, one can in principle obtain the corresponding action concretely in a lattice model (up to a FDQC) using an $S\tilde{T}$ operation where $S$ is gauging an abelian finite higher form symmetry (implemented by a FDQC, measurement, and error correction) and $\tilde{T}$ is a more general FDQC, not necessarily the exponential of a quadratic form. For $\tilde{T}$ that is not the exponential of a quadratic form, it may not participate in a $\mathrm{SL}(2,\mathbb{Z}_2)$ relation, but it can still be implemented by a FDQC.

Refs.~\cite{Chen:2021xks,Fidkowski:2024hpz} obtained for a general action of the form $(-1)^{\int w_i\cdots w_j}$ a formulation in terms of symmetry gauge fields. To do so, one starts with
\begin{equation}
    \int S_i(a_{d+1-i})=\int a_{d+1-i}\alpha_i~,
\end{equation}
where $S_i(a_{d+1-i})$ is iterated Steenrod operation (implemented on the lattice via nested higher cup products such as in Refs.~\cite{Chen:2021ppt,Chen:2021xks}; see (\eqref{eqn:nestedhughercup} for an example)) on the $(d+1-i)$-form gauge field $a_{d+1-i}$ and
\begin{equation}
    \alpha_i=w_i+\mathrm{poly}(\{w_1,w_2,\cdots w_{i-1}\})~,
\end{equation}
where $\mathrm{poly}(\{w_1,w_2,\cdots w_{i-1}\})$ is a polynomial only involving lower Stiefel Whitney classes. This is a unitriangular change of generators, so it can be inverted recursively: $w_i=\alpha_i+\mathrm{poly}(\{\alpha_1,\alpha_2,\cdots \alpha_{i-1}\})$ for another polynomial. This means that any product of $\{w_i\}$ can be written as a polynomial $f$ (possibly with several terms) of $\{\alpha_i\}$. Writing
\begin{equation}
    Y=(-1)^{\int w_i\cdots w_j}=(-1)^{\int f(\{\alpha_i\})}~.
\end{equation}
we get
\begin{equation}\label{eqn:QCAoriginal}
    Y=(-1)^{\int\prod w_i}=\sum_{a_i,b_i}(-1)^{\int (\sum a_{d+1-i}b_i+S_i(a_{d+1-i}))+f(\{b_i\})}~.
\end{equation}

The sum over $a_{d+1-i}$ sets $b_i=\alpha_i$ resulting in the desired gravitational action $\int f(\{\alpha_i\})$. Note that the above formulation of $Y$ is of the form $\epsilon_0S\tilde{T}$ where, again, $\epsilon_0$ denotes setting background fields to zero. Therefore, (\ref{eqn:QCAoriginal}) is not quite of the form $S\tilde{T}$, where 
\begin{equation}
    \tilde{T}[1]=(-1)^{\int (\sum a_{d+1-i}b_i+S_i(a_{d+1-i}))+f(\{b_i\})}
\end{equation}
is a generally non-quadratic invertible phase. To connect $Y$ and $S\tilde{T}$, we will argue that $Y$ can be obtained from $S\tilde{T}$ by a FDQC, so $S\tilde{T}$ lies in the same QCA class as $Y$. 

We can construct the stabilizers for the invertible phases as follows. Denote
\begin{equation}
    \omega(\{a_i\},\{b_j\})=\sum a_{d+1-i}b_i+S_i(a_{d+1-i})+f(b_i)~.
\end{equation}
From the above cocycle, one can obtain
\begin{align}
    &\omega(\{a_i+d\tilde s_{i-1},a_{j\neq i}\},\{b_k\})-\omega(\{a_j\},\{b_k\}):=d\phi_i(s_{i-1},\{a_i,a_{j\neq i}\},\{b_k\})\cr 
  &\omega(\{a_k\},\{b_i+d\tilde s_{i-1},b_{j\neq i}\})-\omega(\{a_{k}\},
  \{b_{j}\}):=d\eta_i(s_{i-1},\{a_k\},\{b_i,b_{j\neq i}\})~,
\end{align}
where $\tilde s_i$ are the $i$-cochain that takes value 1 on $i$-simplex $s_i$ and 0 on other $i$-simplices.
Denote $B_i=du_{i-1}$ and $A_i=dc_{i-1}$. The stabilizers for the theory with coupling to $A,B$ are
\begin{align}
    &{\cal S}_{s_{i-1}}^{a_i}=\left(\prod_{s_{i-1}\in s_i}X^{a_i}_{s_{i}}\right)(-1)^{\int \phi_i(s_{i-1},\{a_j\},\{b_k\})}(-1)^{\int d\tilde s_{i-1}\cup u_{d+1-i}},\cr 
    &{\cal S}_{s_{i-1}}^{b_i}=\left(\prod_{s_{i-1}\in s_i}X^{a_i}_{s_{i}}\right)(-1)^{\int \eta_i(s_{i-1},\{a_j\},\{b_k\})}(-1)^{\int d\tilde s_{i-1}\cup c_{d+1-i}},\cr
    &{\cal S}^{u_{i}}_{s_{i}}=X^{u_i}_{s_i}(-1)^{\int a_{d-i}\cup \tilde s_i},\quad {\cal S}^{c_i}_{s_i}=X^{c_i}_{s_i}(-1)^{\int b_{d-i}\cup \tilde s_i}~,
\end{align}
as well as the flux $Z$-stabilizers for $a,b$.
Thus the theory that couples to $A,B$ has additional entangler
\begin{equation}
    U=(-1)^{\int \sum_i\left(a_{d-i}\cup u_i+b_{d-i}\cup c_i\right)}~.
\end{equation}
This is a finite depth quantum circuit that breaks the symmetries.
The circuit satisfies
\begin{align}
    &U{\cal S}_{s_{i-1}}^{a_i} U^{-1}=\left(\prod_{s_{i-1}\in s_i}X^{a_i}_{s_{i}}\right)(-1)^{\int \phi_i(s_{i-1},\{a_j\},\{b_k\})}\cr
    &U {\cal S}_{s_{i-1}}^{b_i} U^{-1}=\left(\prod_{s_{i-1}\in s_i}X^{a_i}_{s_{i}}\right)(-1)^{\int \eta_i(s_{i-1},\{a_j\},\{b_k\})}\cr 
    &U{\cal S}^{u_{i}}_{s_{i}} U^{-1}=X^{u_i}_{s_i},\quad U{\cal S}^{c_{i}}_{s_{i}} U^{-1}=X^{c_i}_{s_i}~,
\end{align}
while leaving the flux $Z$-stabilizers invariant.

\subsubsection{Example: $(-1)^{\int w_2^4}$ QCA}

\paragraph{Non-Clifford Presentation of stabilizers}
Let's consider the following model for the invertible topological field theory for the QCA:
\begin{equation}
        (-1)^{\int w_2^4}=\sum_{u_6,a_2}(-1)^{\int \left(Sq^2 u_6+a_2\cup u_6+a_2^4\right)}~,
\end{equation}
where $Sq^2 u_6=w_2\cup u_6$, and $u_6$ is a Lagrangian multiplier that enforces $a_2=w_2$ \cite{Chen:2021xks}. We can also express $Sq^2 u_6=u_6\cup_4 u_6$.
The lattice model can be constructed as follows. First, we construct the corresponding SPT model whose response is given by the same action but replacing the dynamical gauge fields by background gauge field. The Hamiltonian for such SPT can be constructed using the method in Ref.~\cite{Chen:2011pg}. Then, we gauge the symmetry in the lattice model \cite{Levin_2012} to obtain a lattice model for the gauge theory.
Denote
\begin{equation}
    \omega(u,a,b,c)=\pi \left(Sq^2 u_6 +a_2\cup u_6+ a_2^4\right)~.
\end{equation}
We have
\begin{align}
    &\omega(u=dx_u,a=dx_a)=d\phi\cr 
    &\phi=\pi\left(x_u\cup_3 x_u+x_u\cup_4 dx_u+x_a\cup dx_u+x_a\cup dx_a\cup dx_a\cup dx_a\right)~.
\end{align}
The entangler of the SPT is $e^{i\int \phi}$. After gauging the symmetry, the Hamiltonian is a sum of the following non-Pauli stabilizers:
(denote the Pauli operators for $u,a$ by $X^u,X^a$, etc)
\begin{align}
    &{\cal S}_{s_5}^u=\left(\prod_{s_5\in s_6} X^u_{s_6}\right) (-1)^{\int u\cup_4 \tilde s_5+\tilde s_5\cup_4 u+a\cup \tilde s_5},\cr 
    &{\cal S}^a_{s_1}=\left(\prod_{s_1\in s_2} X^a_{s_2}\right)
        (-1)^{\int \left(\tilde s_1\cup u+ \tilde s_1\cup a^3+a\cup \tilde s_1 \cup a^2+a^2\cup \tilde s_1\cup a+a^3\cup \tilde s_1\right)},\cr
    &{\cal S}'^u_{s_7}=\prod_{s_6\in s_7} Z^u_{s_6},\;\;
    {\cal S}'^a_{s_3}=\prod_{s_2\in s_3} Z^a_{s_2}~.
\end{align}
Using the formula $(-1)^{\int \left(u\cup_4 \tilde s_5+\tilde s_5\cup_4 u\right)}=(-1)^{\int \left(u\cup_5 d\tilde s_5+du\cup_5 \tilde s_5\right)}=(-1)^{\int u\cup_5 d\tilde s_5}\prod_{s_7} \left({\cal S}'^u_{s_7}\right)^{\int \tilde s_7\cup_5 \tilde s_5}$, We can replace ${\cal S}^u_{s_5}$ by
\begin{equation}
    {\cal S}^u_{s_5}=\left(\prod_{s_5\in s_6} X^u_{s_6}\right) (-1)^{\int u\cup_5 d\tilde s_5}~.
\end{equation}
Likewise, using the formula 
\begin{equation}
 (-1)^{\int \left(\tilde s_1\cup a^3+a\cup \tilde s_1 \cup a^2+a^2\cup \tilde s_1\cup a+a^3\cup \tilde s_1\right)}=(-1)^{\int\left((d\tilde s_1\cup_1 a+\tilde s_1\cup_1 da)\cup a^2+a^2\cup (d\tilde s_1\cup_1 a+\tilde s_1\cup_1 da)\right)}~,   
\end{equation}
and $da$ is even by the ${\cal S}'^a_{s_3}$ stabilizers, we can replace ${\cal S}^a_{s_1}$ by
\begin{equation}
    {\cal S}^a_{s_1}=\left(\prod_{s_1\in s_2} X^a_{s_2}\right)
        (-1)^{\int \left(\tilde s_1\cup u+ (d\tilde s_1\cup_1 a )\cup a^2+a^2\cup (d\tilde s_1 \cup_1 a)\right)}~.
\end{equation}
The above stabilizers are Pauli operators except for ${\cal S}_{s_1}^a$, which is a non-Clifford operator due to the physical CCZ gates for the qubits for $a$ from the factor $(-1)^{\int \left( (d\tilde s_1\cup_1 a )\cup a^2+a^2\cup (d\tilde s_1 \cup_1 a)\right)}$~.

Let's consider 7-dimensional hypercubic lattice, where for each $n$-cell there is a corresponding $(7-n)$-cell related by cup product. We can rewrite the stabilizers ${\cal S}^a,{\cal S}^u$ as follows:
\begin{align}
    &\overline{Z}^u_{s_6}:=Z^u_{s_6}\prod_{s_1} (G_{s_1}^a)^{\int \tilde s_1\cup \tilde s_6},\quad G_{s_1}^a=\left(\prod_{s_1\in s_2} X^a_{s_2}\right)(-1)^{\int\left( (d\tilde s_1\cup_1 a )\cup a^2+a^2\cup (d\tilde s_1 \cup_1 a)\right) }\cr 
    &\overline{Z}^a_{s_2}=Z^a_{s_2} \prod_{s_5} (G^u_{s_5})^{\int \tilde s_2\cup \tilde s_5},\quad G^u_{s_5}=\left(\prod_{s_5\in s_6} X^u_{s_6}\right)(-1)^{\int u\cup_5 d\tilde s_5}~.
\end{align}
Note that the stabilizers $\overline{Z}^u,\overline{Z}^a$ together already determine a unique ground state. Thus we can take them as the separators of the QCA. From the stabilizers one can obtain the flippers. For example,
\begin{equation}
    \overline{X}^u_{s_6}=X^u_{s_6}(-1)^{\int \tilde s_6\cup_5 u}~.
\end{equation}
The QCA maps $Z^u_{s_6},Z^a_{s_2}$ to $\overline{Z}^u_{s_6},\overline{Z}^a_{s_2}$ respectively, and similarly action on the flippers.

\paragraph{Clifford Presentation of stabilizers}

While the above presentation of the QCA using non-Clifford operators, there is in fact another presentation of the same QCA using Clifford operators. Thus these two presentations are related by a non-Clifford FDQC.\footnote{
In the Clifford presentation we introduce two extra qubits on each 4-cell. The previous Non-Clifford operator that represents the QCA acts trivially on these additional qubits. 
}

Let's consider the following higher-form gauge theory model using four dynamical $\mathbb{Z}_2$ gauge fields $u_6,a_2,b_4,c_4$ that realizes $(-1)^{\int w_2^4}$ QCA:
\begin{equation}
    (-1)^{\int w_2^4}=\sum_{u_6,a_2,b_4,c_4}(-1)^{\int \left(Sq^2 u_6 +a_2\cup u_6+b_4\cup c+b_4\cup a_2^2+c_4\cup a_2^2\right)}~,
\end{equation}
where $Sq^2 u_6=w_2\cup u_6$, and $u_6$ is a Lagrangian multiplier that enforces $a_2=w_2$ \cite{Chen:2021xks}. We can also express $Sq^2 u_6=u_6\cup_4 u_6$.
Denote
\begin{equation}
    \omega(u,a,b,c)=\pi \left(Sq^2 u_6 +a_2\cup u_6+b_4\cup c_4+b_4\cup a_2^2+c_4\cup a_2^2\right)~.
\end{equation}
We have
\begin{align}
    &\omega(u=dx_u,a=dx_a,b=dx_b,c=dx_c)=d\phi\cr 
    &\phi=\pi\left(x_u\cup_3 x_u+x_u\cup_4 dx_u+x_a\cup dx_u+x_b\cup dx_c+x_b\cup dx_a\cup dx_a+x_c\cup dx_a\cup dx_a\right)~.
\end{align}
The corresponding stabilizers are (denote the Pauli operators for $u,b,c,a$ by $X^u,X^b,X^c,X^a$, etc)
\begin{align}
    &{\cal S}_{s_5}^u=\left(\prod_{s_5\in s_6} X^u_{s_6}\right) (-1)^{\int \left(u\cup_5 d\tilde s_5+a\cup \tilde s_5\right)},\quad 
    {\cal S}^b_{s_3}=\left(\prod_{s_3\in s_4} X^b_{s_4}\right)(-1)^{\int \tilde s_3\cup \left(c+a\cup a\right)},\cr 
    &{\cal S}^c_{s_3}=\left(\prod_{s_3\in s_4} X^c_{s_4}\right)(-1)^{\int \left(b\cup \tilde s_3+\tilde s_3\cup a\cup a\right)},\cr
    &{\cal S}^a_{s_1}=\left(\prod_{s_1\in s_2} X^a_{s_2}\right)(-1)^{\int \left(\tilde s_1\cup u+ b\cup \left(d\tilde s_1\cup_1 a\right)+c\cup \left(d\tilde s_1\cup_1 a\right)\right)},\cr
    &{\cal S}'^u_{s_7}=\prod_{s_6\in s_7} Z^u_{s_6},\;\;
    {\cal S}'^b_{s_5}=\prod_{s_4\in s_5} Z^b_{s_4},\;\;
    {\cal S}'^c_{s_5}=\prod_{s_4\in s_5} Z^c_{s_4}.\;\;
    {\cal S}'^a_{s_3}=\prod_{s_2\in s_3} Z^a_{s_2}~.
\end{align}
Let's consider 7-dimensional hypercubic lattice, where for each $n$-cell there is a corresponding $(7-n)$-cell related by cup product.  We can rewrite the stabilizers ${\cal S}^a,{\cal S}^c,{\cal S}^b,{\cal S}^u$ respectively as follows:
\begin{align}
    &\overline{Z}^u_{s_6}:=Z^u_{s_6}\prod_{s_1} (G_{s_1}^a)^{\int \tilde s_1\cup \tilde s_6},\quad G_{s_1}^a=\left(\prod_{s_1\in s_2} X^a_{s_2}\right)(-1)^{\int (b+c)\cup (d\tilde s_1\cup_1 a)}\cr 
    &\overline{Z}^b_{s_4}=Z_{s_4}^b\prod_{s_3} (G_{s_3}^c)^{\int \tilde s_4\cup\tilde s_3},\quad G^c_{s_3}=\left(\prod_{s_3\in s_4}X^c_{s_4}\right)(-1)^{\int \tilde s_3\cup a^2}\cr 
    &\overline{Z}^c_{s_4}=Z_{s_4}^c\prod_{s_3} (G_{s_3}^b)^{\int \tilde s_4\cup\tilde s_3},\quad G^b_{s_3}=\left(\prod_{s_3\in s_4}X^b_{s_4}\right)(-1)^{\int \tilde s_3\cup a^2}\cr 
    &\overline{Z}^a_{s_2}=Z^a_{s_2} \prod_{s_5} (G^u_{s_5})^{\int \tilde s_2\cup \tilde s_5},\quad G^u_{s_5}=\left(\prod_{s_5\in s_6} X^u_{s_6}\right)(-1)^{\int u\cup_5d\tilde s_5}~.
\end{align}
Note that the stabilizers $\overline{Z}^u,\overline{Z}^b,\overline{Z}^c,\overline{Z}^a$ together already determine a unique ground state. Thus we can take them as the separators of the QCA. From the stabilizers one can obtain the flippers. The QCA maps $Z^u_{s_6},Z^a_{s_2},Z^b_{s_4},Z^c_{s_4}$ to $\overline{Z}^u_{s_6},\overline{Z}^a_{s_2},\overline{Z}^b_{s_4},\overline{Z}^c_{s_4}$ respectively, and similar action on the flippers.
We remark that the QCA is non-Clifford since it maps Pauli $Z^u_{s_6}$ to Clifford operator $\overline{Z}^u_{s_6}$, etc.

\section{QCA and Generalized Time-Reversal Symmetry}
\label{sec:non-invT}

In this section, we will show that $YT^{-1}[1]$ and $T^{-1}[1]$ correspond to different phases if we include generalized time-reversal symmetries. We begin by reviewing this in 3+1d in the case where $YT^{-1}[1]\sim Y_{\rho_2}$ describes a Brown-Kervaire invariant (i.e. $Y$ is a gravitational Pontryagin term); see an extended discussion in Ref.~\cite{Hsin:2021qiy}. In this case, $YT^{-1}[1]$ has a 1-form symmetry that mixes with time reversal ${\cal T}$ in a $2$-group. We will show in this section that for more general $YT^{-1}$ with $\mathbb{Z}_{N\neq 2}$ $1$-form symmetries in 3+1d, $YT^{-1}$ differs from $T^{-1}$ in its non-invertible time-reversal symmetry. In continuum field theory, where some $\mathbb{Z}_N$ $1$-form SPTs are said to be invariant under $S$ \cite{Apte:2022xtu,Cordova:2023bja}, mixing with nontrivial $Y$ therefore precludes true invariance in the presence of these other spacetime symmetries.

From a lattice perspective, these time reversal symmetries in 3+1d distinguish $1$-form SPTs entangled by QCA \cite{Haah:2018jdf,Haah:2019fqd,shirleyqca}, which have boundary chiral anyon theories, and very similar ones entangled by FDQC \cite{PhysRevB.101.035101}. In principle, forgetting about the time reversal symmetry would allow these SPTs to be local unitary related to each other.\footnote{By this we mean the ground states are local unitary related to each other. It is presumably not possible to extend this unitary locally to the full Hilbert space due to arguments for the nontriviality of the QCA entangling chiral anyon Walker-Wang models.} In this section, we will always be acting on the trivial state $Z=1$, so will drop the argument of $Y,T,S$ and use i.e. $T$ interchangeably with $T[1]$. 
The time-reversal symmetry will be denoted by ${\cal T}$.

\subsection{Invertible phases protected by higher group time-reversal symmetry}
We begin by first reviewing how $Y_{\rho_2}=YT^{-1}$ with $Y=e^{\frac{i\pi}{12}\int p_1}$ and $T^{-1}=e^{-\frac{i\pi}{2}\int \mathcal{P}(b)}$ can be differentiated from $T^{-1}$ via higher group time reversal symmetry. The former is an SPT with boundary chiral anti-semion anyon theory and is entangled by a nontrivial QCA, while latter has a boundary doubled semion theory and is entangled by a FDQC \cite{PhysRevB.101.035101}. We will then generalize these statements to higher $4k$ spacetime dimensions.

\subsubsection{Brown-Kervaire invariant in 3+1d: invertible phase protected by time-reversal symmetry}
Consider the invertible phase $Y_{\rho_2}$ in 3+1d with partition function 
\begin{equation}\label{chiralsemion}
    Y_{\rho_2}=\sum_b e^{\frac{\pi i}{2}\int q_\rho(b)}=F_2[1]~.
\end{equation}
This state is described by the chiral anti-semion Walker-Wang model \cite{shirleyqca}. It is equivalent to the 1-form SPT $T^{-1}=e^{-\frac{i\pi}{2}\int \mathcal{P}(B)}$ together with an extra gravitational instanton term $e^{\frac{i\pi}{12}\int p_1}$. The gravitational instanton term allows for $Y_{\rho_2}$ to be time reversal invariant. Without time reversal symmetry, the two theories $Y_{\rho_2}$ and $T^{-1}$ are equivalent because in this case we can smoothly deform the coefficient of the gravitational Pontryagin term to zero. With time-reversal symmetry, we cannot do this, so the invertible phase $Y_{\rho_2}$ is distinct from the 1-form SPT $e^{-\frac{i\pi}{2}\int \mathcal{P}(\rho_2)}$.  

Even though the partition function (\ref{chiralsemion}) is not obviously time-reversal invariant, Ref.~\cite{Hsin:2021qiy} showed that it in fact is invariant under a higher group time reversal symmetry. Specifically, the time-reversal symmetry forms a higher group with the 1-form symmetry with background $\rho_2$. By definition, 
\begin{equation}\label{eqn:2grouprho}
    d\rho_2=\nu_3=w_1\cup w_2~,
\end{equation}
where $w_1$ is the background field for time-reversal symmetry, i.e. it is Poincare dual to the time-reversal domain wall.

On the boundary of the Walker-Wang model is the anti-semion theory $U(1)_{-2}$, which is also known to have the same higher group time-reversal symmetry \cite{Hsin:2019gvb}.

 \paragraph{Framed Gapped Domain Walls and Boundaries}

Let's investigate the domain wall generating the higher group time-reversal symmetry.
In the absence of 2-group structure, the domain wall that generates the time-reversal symmetry would flip $U(1)_{-2}$ to $U(1)_{2}$. Applying the folding trick, a gapped domain wall would indicate a gapped boundary of $U(1)_{-2}\times U(1)_{-2}$. However, such a theory in bosonic systems does not admit a gapped boundary, because $c_-\neq 0$ mod 8. Thus the presence of time-reversal symmetry in the bosonic semion theory $U(1)_2$ would seem to present a counterexample to the folding trick. 

The resolution to this paradox is that the time-reversal operation in the semion theory is a 2-group transformation, which consists of the usual time-reversal transformation that complex conjugate all correlation functions, followed by changing the background of one-form symmetry generated by anti-semion
\begin{equation}
    \rho_2\rightarrow \rho_2+w_2~.
\end{equation}
This means that the transition function of the tangent bundle is correlated with the one-form transformation in the region where time-reversal symmetry $\mathcal{T}$ is applied: while changing the spin structure on the support of the anti-semion does not do anything in the region without applying $\mathcal{T}$ symmetry, it will induces a one-form transformation in the region $\mathcal{T}$ symmetry is applied. Since the anti-semion is charged under the one-form symmetry, this means that changing the spin structure produces a minus sign, and thus effectively the anti-semion in the region where $\mathcal{T}$ symmetry is applied is attached to a local fermion--although the theory remains bosonic. Thus in the region where $\mathcal{T}$ symmetry is applied, the anti-semion becomes a semion attached to a fermion, which again becomes a spin $-1/4$ anti-semion.

As a consequence of the 2-group symmetry, the domain wall between anti-semion theory ${\cal C}$ and ${\cal T}({\cal C})$ becomes the gapped boundary of ${\cal C}\times \overline{{\cal T}({\cal C})}$, where $\overline{{\cal T}({\cal C})}$ is the complex conjugation of ${\cal T}({\cal C})$. As discussed above, this is effectively the theory of the semion-fermion composite, i.e. an anti-semion. Thus under the folding trick, the domain wall is a gapped boundary of the double semion theory--which indeed admits a gapped boundary. This example tells us that the map from gapped domain wall to gapped boundary can be many-to-one if the domain wall is not purely topological, such as when it is involved in a higher group with the spacetime Lorentz group.

The two-group symmetry also makes the invertible phase $Y_{\rho_2}$ order eight rather than order four. Clearly, $e^{-\frac{i\pi}{2}\int \mathcal{P}(b)}$ is an order four SPT. Four copies of $Y_{\rho_2}$ is given by
\begin{equation}
    \sum_{b_i}e^{\frac{i\pi}{2}\sum_{i=1}^4\int q_{\rho_2}(b_i)}~.
\end{equation}

As shown in Ref.~\cite{Hsin:2021qiy}, a change of variables together with the Wu formula results in the following action
\begin{equation}
  \sum_{b_i}e^{\frac{i\pi}{2}\sum_{i=1}^4\int q_{\rho_2}(b_i)}=e^{i\pi\int w_2^2+w_1^4}  ~.
\end{equation}

which describes the three fermion Walker Wang model, which has gapped boundary given by the three fermion anyon theory. This time-reversal SPT is order two, so $Y_{\rho_2}$ is order eight, consistent with the conjectured Witt group classification of QCAs \cite{shirleyqca,Haah:2019fqd}. 

We note that while the invertible phase and ground state requires time-reversal symmetry to be stable against perturbations, the corresponding QCA does not. While the ground states of two models differing by a gravitational topological response may be local unitary related to each other in the absence of symmetry, due to QCA obstructions there may not exist such an interpolation between their parent Hamiltonians. Specifically, if $H=U_{QCA} H_{X} U_{QCA}^\dagger$ where $U_{QCA}$ is a nontrivial QCA and $H_X$ is a trivial paramagnet Hamiltonian, additional conjugation by any FDQC $U_{FDQC}$ should not change the QCA class of $H$. The corresponding ground states may have emergent time-reversal symmetry $U_{FDQC}\cdot K$.

\subsubsection{Generalization to higher dimensions}
More generally, when the spacetime dimension is $4k$, the QCA from gauging $\mathbb{Z}_2$ $(2k-1)$-form symmetry corresponds to an invertible phase whose partition function depends on the trivialization of the Wu class $v_{2k+1}=w_1\cup f(\{w_i\})$ where $f$ is a polynomial of Stiefel-Whitney classes under cup product. Since the Wu class has $w_1$, this implies that the invertible phase is protected by time-reversal symmetry. Moreover, the time-reversal symmetry mixes with the $(2k-1)$-form $\mathbb{Z}_2$ symmetry to form a higher group.

For example, consider $8$ spacetime dimension, and the QCA from gauging $\mathbb{Z}_2$ 3-form symmetry. The QCA corresponds to the invertible phase whose partition function depends on $\nu_5=w_1\cup \left(w_1^2w_2+w_2^2+w_1w_3+w_4\right)$, given by
\begin{equation}
\sum_b    e^{\frac{\pi i}{2}\int q_{\rho_4}(b)}~,
\end{equation}
where $b$ is a 4-form $\mathbb{Z}_2$ gauge field, and $d\rho_4=\nu_5$. The theory admits boundary chiral 6+1d 3-form Wu Chern-Simons theory that generalizes the anti-semion theory (see e.g. Refs.~\cite{Witten:1998wy,Fiorenza:2012ec,Gukov:2020btk,Gukov:2025dol}) where the anyon-like excitations are membrane excitations created by volume operators.

\subsection{Invertible phases protected by non-invertible time-reversal symmetry}

Consider the QCA from gauging $\mathbb{Z}_N$ 1-form symmetry in 3+1d, which corresponds to the partition function (in the following, we will enrich the theory with 1-form symmetry)
\begin{equation}
\text{Even }N:\;\;    Y[B]=\sum_b e^{\frac{2\pi i}{2N}\int {\cal P}(b)+\frac{2\pi i}{N}\int b\cup B},\quad 
\text{Odd }N:\;\;    Y[B]=\sum_b e^{\frac{2\pi i}{N}\int {\cal P}(b)+\frac{2\pi i}{N}\int b\cup B}~.
\end{equation}
For simplicity, let's focus on even $N$ in the following discussion, while the case of odd $N$ is similar. Note that here we do not require the specific SPT classes discussed in Sec.~\ref{sec:modular}, because general 1-form SPTs beyond those that participate in modular relations may have gravitational counterterms occuring from gauging. We will show that these theories differ from $e^{-\frac{2\pi i}{2N}\mathcal{P}(b)}$ by non-invertible time-reversal symmetries. Our non-invertible time reversal symmetry is inspired by the invertible one above, where the usual time reversal was accompanied by shifting the semion by a fermion. 

Let's consider the boundary anyon theory ${\cal A}^{N,-1}$\cite{Hsin:2018vcg}. The theory has the following non-invertible time-reversal symmetry:
\begin{equation}
   {\cal T}{\cal A}^{N,-1}:=\frac{\left({\cal A}^{N,-1}\right)^\star\times {\cal C}}{\mathbb{Z}_N},\quad {\cal C}={\cal A}^{N,-1}\times {\cal A}^{N,-1}~.
\end{equation}
Note that the theory is dual to ${\cal A}^{N,-1}$ itself, and thus this is a non-invertible time-reversal symmetry. In a companion paper we will study the properties and applications of non-invertible time-reversal symmetries in 2+1d. These non-invertible time reversal symmetries are of a similar flavor as those in Ref.~\cite{Choi:2022rfe}, which also involve stacking with fractional quantum Hall states.

Back in 3+1d, the non-invertible time-reversal symmetry corresponds to
\begin{equation}
    {\cal T}Y[B]=\sum_b Y[b]^\star Y[b] Y[B]=Y[B]~.
\end{equation}
Thus the 3+1d invertible phase is protected by non-invertible time-reversal symmetry has the form of ${\cal T}=T\epsilon_0STK$ where here $T$ is stacking with the invertible state $Y$ (it includes the gravitational part, so it is not the exponential of a quadratic form).

\paragraph{Comparison with 1-form SPT}

The invertible phase has the 1-form symmetry response of that class $(-1)$ 1-form SPT \cite{Hsin:2018vcg,Tsui:2019ykk}. However, we will show that the 1-form SPT does not have such non-invertible time-reversal symmetry.
Denote the partition function of the 1-form SPT by $Z[B]=e^{-\frac{2\pi i }{2N}\int {\cal P}(B)}$,
\begin{equation}
    {\cal T}Z[B]=\sum_b Z[b]^\star Y[b] Y[B]=\sum_{b,b'}
    e^{\frac{2\pi i}{2N}\int {\cal P}(b)}
    e^{\frac{2\pi i}{2N}\int {\cal P}(b')+\frac{2\pi i}{N}\int b'\cup b}Y[B]=Y[0] \left(\sum_{b'} 1\right)Y[B]~,
\end{equation}
where in the last equality we redefine $\tilde b=b+b'$. This shows that the time-reversal operation on the 1-form SPT produces a non-invertible phase, and thus does not get back to the 1-form SPT itself. Therefore the 1-form SPT does not have the non-invertible time-reversal symmetry.

We note that similarly, the QCAs in  $(4k+3)$ spatial dimensions from gauging $\mathbb{Z}_N$ higher-form symmetry correspond to invertible phases protected by non-invertible time-reversal symmetry in addition to the $\mathbb{Z}_N$ $(2k+1)$-form symmetry. In particular, the non-invertible time-reversal symmetry distinguish such invertible phases from the $\mathbb{Z}_N$ $(2k+1)$-form SPT phases.

\section{Outlook}
\label{sec:outlook}

In this work we show how QCAs arise on the lattice from implementations of topological operations $S$ (gauging of finite invertible higher-form symmetries) and $T$ (stacking with invertible phases). QCAs come up in several contexts including projective modular relations, squares of dualities related to higher dimensional analogs of fermionization, and generalized time reversal symmetries. By using the relation to gauging, we show that broad classes of QCAs, including those from gravitational Pontryagin terms and Stiefel-Whitney terms, can be implemented concretely and efficiently in lattice models. Below, we highlight several interesting future directions.

\paragraph{Rigorous proofs for nontrivial QCAs} We use in this work Conjecture~\ref{conj:nontrivialqca} from Ref.~\cite{Fidkowski:2024hpz}. It would be interesting to see if this conjecture can be proven rigorously on the lattice, perhaps using incompatibility of certain topological responses to commuting projector realizations. In 3+1d, the standard argument for nontriviality of QCA is based on the incompatibility of Witt nontrivial 2+1d boundary topological orders to commuting projector realizations. By Witt nontrivial, we mean there is no gapped boundary to the vacuum \cite{Haah:2018jdf,Haah:2019fqd,Shirley:2022lhu}. It would be interesting if incompatibilities in higher dimensions can also be related to a generalized notion of Witt class.

\paragraph{Concrete implementation of gravitational Stiefel-Whitney terms in lattice models}
In this work we gave a schematic construction for lattice models whose ground states carry nontrivial gravitational Stiefel-Whitney terms, building off of Ref.~\cite{Chen:2021xks,Fidkowski:2024hpz}. We used steps involving gauging of finite invertible higher form symmetries and operations that we argue to be implementable by FDQCs. We leave to future work a more microscopic description of these FDQCs.

\paragraph{Relation between QCAs and invertible phases protected by symmetries}

While nontrivial invertible phases without symmetry give rise to nontrivial QCAs (e.g. the constructions in Refs.~\cite{Chen:2021xks,Fidkowski:2024mof}), the ground states entangled by nontrivial QCAs may be related to those entangled by circuits if we do not impose further generalized time-reversal symmetries \ref{sec:non-invT}. It will be interesting to explore what minimal symmetries are required in general for the QCAs to correspond to nontrivial invertible phases and what are general  underlying mechanisms.

In particular, in 3+1d there is a nontrivial QCA corresponds to the entangler of the Walker-Wang model for the $U(1)_2\times U(1)_{-4}$ anyon theory. However, the effective action of the Walker-Wang model does not have a gravitational term, unlike for the other cases described in this paper. It would be interesting to explore what symmetry can distinguish the model from 1-form SPT phases similar to the discussion in section \ref{sec:non-invT}.

\paragraph{QCAs from gauging subsystem symmetries}

While our discussion focus on topological symmetries, i.e. symmetries that have emergent topological properties, it would be interesting to explore the QCAs related to gauging more general subsystem symmetries. For example, if the system is a stack of lower-dimensional systems living on the leaves of foliations, it is nature to discuss QCAs related gauging a symmetry that only act on subsystems. Relatedly, there are Kramers-Wannier dualities and their generalizations for gauging subsystem symmetries \cite{Cao:2023doz,hsintoapppear:2026}.

\paragraph{Gravitational topological response and modular relation}

In our work, we show that a new family of gravitational topological responses constructed in terms of Stiefel-Whitney classes can arise as the projective phase in the modular relations for suitable $\mathbb{Z}_2$ higher-form symmetries. This poses the following question: more broadly, what classes of gravitational terms can arise in modular relations? It would be particularly interesting if there are genuinely different examples beyond the Stiefel-Whitney/Pontryagin/Brown-Kervaire examples presented here. In 3+1d, the Pontryagin class also shows up in the projective representation of the $S,T$ matrices describing the 2+1d boundary topological order \cite{Kitaev_2006}, due to the Gauss sum formula for the chiral central charge mod 8 in terms of anyon data. It would be interesting to study if other such bulk projective modular relations also show up in similar ways in their corresponding boundary theories.

\paragraph{QCAs and (non)-tensor product Hilbert spaces} In 1+1d, it has been proven rigorously which kinds of non-invertible symmetries must mix with translations and which kinds do not, on tensor product Hilbert spaces \cite{Wen:2026ncw}. In non-tensor product Hilbert spaces such as anyon-chain Hilbert spaces \cite{Feiguin:2006ydp,Buican:2017rxc,Aasen:2020jwb}, all fusion category symmetries can be produced on the nose, without mixing with translation operators. It would be interesting to study how the mixing of non-invertible operations like $S$ studied in this paper with QCA may get modified when one considers non-tensor product Hilbert spaces. There has been recent work on analogs of anyon chains in higher dimensions which describe higher dimensional systems with non-tensor product Hilbert spaces \cite{inamura2024fusion,Inamura:2025cum}.

\section*{Acknowledgments}

C.Z. is supported by the Harvard Society of Fellows and the Department of Physics at the University of British Columbia.
P.-S.H. is supported by Department of Mathematics, King's College London.
P.-S. H. and C.Z. were also supported in part by grant NSF PHY-2309135 to the Kavli Institute for Theoretical Physics (KITP). 
P.-S.H. and C.Z. thank Kavli Institute for Theoretical Physics for hosting the program ``Correlated Gapless Quantum Matter'' in 2024 and Perimeter Institute for hosting the conference ``Physics of Quantum Information'' in 2024, during which part of the work was completed. P.-S. H. thanks University of Oxford for hosting the workshop ``Topological Families'' in 2026, during which part of the work was completed. C.Z. and P.-S.H. thank Clay Cordova and Lukasz Fidkowski for helpful conversations, and acknowledge the use of ChatGPT 5.6 for calculations in Sec.~\ref{sqcapontryagin} and Appendix~\ref{sec:STST3}, and Claude Fable 5 for help with writing.

Near the completion of the work, we learned of two separate works Refs.~\cite{oxfordcoordination,natcoordination} that have partial overlaps with our work. We thank the authors for coordinating submissions.

 \appendix

\section{Mathematical Preliminaries on Quadratic Functions}\label{appendix:review}

\subsection{Properties of cup product and higher cup products}

In this appendix, we will summarize useful properties for cup and higher cup products on triangulated or hypercubic lattices. For more detail, see e.g.~Refs.~\cite{4fc46540-015c-3832-bc1c-ffd13cef2752,milnor1974characteristic,mosher2008cohomology,Hatcher} (for physics introduction, see e.g.~Refs.~\cite{Kapustin:2013qsa,Benini:2018reh,Tsui:2019ykk,Tata:2020qca,Chen:2021ppt,Chen:2021xuc}). In the following discussion, we will focus on general triangulated lattices, while the definitions of cup product and higher cup products on hypercubic lattices are give in e.g.~Refs.~\cite{Chen:2021ppt,Chen:2021xuc}.

The cup product for triangulated lattice is defined as follows. For $p$-cochain $f$ and $q$-cochain $g$, their cup product is a $(p+q)$-cochain whose value on $(p+q)$-simplex with vertices $(0,1,2\cdot,p+q)$ equals
\begin{equation}
    f\cup g(0,1,2\cdots,p+q)=f(0,1,\cdots,p)g(p,p+1,p+2,\cdots,p+q)~.
\end{equation}
The cup product is associative. In the discussion throughout the paper, we will often omit the cup product for simplicity of notation.
The higher cup products $\cup_{i}$ for $i\geq 1$ can be defined in similar way, and it produces a $(p+q-i)$-cochain from $p,q$-cochains.
For example, $\cup_1$ product of $f$ and $g$ gives a $(p+q-1)$-cochain whose value on $(p+q-1)$ simplex is
\begin{equation}
    f\cup_1 g (0,1,2\cdots,p+q-1)=\sum_{j=0}^{p-1}(-1)^{(p-j)(q+1)}f(0,1,\cdots,j,j+q,j+q+1,\cdots,p+q-1)g(j,j+1,\cdots,j+q)~.
\end{equation}
In the paper we will not omit the higher cup products $\cup_i$ for $i\geq 1$, since the higher cup products are no associative.

Below we summarize som useful identities for cup and higher cup products.
For $p$-cochain $f$ and $q$-cochain $g$, we have
\begin{align}
    &f\cup_i g=(-1)^{pq-i}g\cup_i f+(-1)^{p+q-i-1}\left(
    d\left(f\cup_{i+1}g\right)-df\cup_{i+1}g+(-1)^{p+1} f\cup_{i+1}dg\right)\cr 
    &d\left(f\cup_i g\right)=df\cup_i g+(-1)^p f\cup_i dg+(-1)^{p+q-i}f\cup_{i-1}g+(-1)^{pq+p+q}g\cup_{i-1}f~,
\end{align}
where $\cup_{0}=\cup$ is the usual cup product, and $\cup_j=0$ with $j<0$.

For $\mathbb{Z}_2$ cocycle $x_n$ of degree $n$, the Steenrod square operation $Sq^m$ on $x_n$ produces a $\mathbb{Z}_2$ $(m+n)$-cocycle given by
\begin{equation}
    Sq^m x_n=x_n\cup_{n-m} x_n=x_n\cup_{(m+n)-2m}x_n~,
\end{equation}
and $Sq^m x=0$ for $m>n$. The degree of the higher cup product equals to the total degree $(n+m)$ of the cocycle subtracting twice the superscript of $Sq$.
Moreover, on closed $D$-manifolds, $(-1)^{\int Sq^{D-n}x_n}=(-1)^{\int \nu_{D-n}\cup x_n}$ for $(D-n)$th Wu class $\nu_{D-n}$ of the tangent bundle.

Similarly, iterated Steenrod square operations can be related higher cup products in a nested way. For example,
\begin{equation}\label{eqn:nestedhughercup}
    Sq^{i_1} Sq^{i_2} x_n=\left( Sq^{i_2} x_n\right)\cup_{i_1+i_2+n-2i_1}\left( Sq^{i_2} x_n\right)=\left(x_n\cup_{i_2+n-2i_2} x_n\right)\cup_{i_1+i_2+n-2i_1}\left(x_n\cup_{i_2+n-2i_2} x_n\right)~.
\end{equation}

\subsection{Pontryagin square for higher even degree cocycles}

In this appendix we will review the Pontryagin square operation that acts on $\mathbb{Z}_N$ cocycles with even degree for even $N$ \cite{Whitehead196227}.

For $\mathbb{Z}_N$ cocycle for even $N$ and even degree $2n$, we can define the (generalized) Pontryagin square operation
\begin{equation}
    {\cal P}(B_{2n})=B_{2n}\cup B_{2n}-B_{2n}\cup_1 dB_{2n}~.
\end{equation}
The object ${\cal P}(B_{2n})$ has the following properties.

\begin{theorem}
    The generalized Pontryagin square ${\cal P}(B_{2n})$ for $2n$-cocycle $B_{2n}\in H^{2n}(M,\mathbb{Z}_N)$ satisfies
\begin{itemize}
\item[1. ] ${\cal P}(B_{2n})$ is a $\mathbb{Z}_{2N}$ cocycle, i.e. $d{\cal P}(B_{2n})=0$ mod $2N$.
    \item[2. ] The cohomology class of ${\cal P}(B_{2n})$ does not depend on the choice of lift of $B_{2n}$ as $\mathbb{Z}_{2N}$ cochain.
\item[3. ] The cohomology class of ${\cal P}(B_{2n})$ only depends on the cohomology class of $B_{2n}$:
\begin{equation}
 {\cal P}(B_{2n}+d\lambda_{2n-1})={\cal P}(B_{2n})+dz\text{ mod }2N~,   
\end{equation}
for integer cochain $z$ given by
\begin{equation}
    z=B_{2n}\cup \lambda_{2n-1}+\lambda_{2n-1}\cup d\lambda_{2n-1}+B_{2n}\cup_1 d\lambda_{2n-1}+dB_{2n}\cup_2 d\lambda_{2n-1}~.
\end{equation}

\item[4. ] For $\mathbb{Z}_N$ cocycles $B_{2n},B'_{2n}$ with the same degree $2n$, 
\begin{equation}
    {\cal P}(B_{2n}+B_{2n'})-{\cal P}(B_{2n})-{\cal P}(B_{2n}')=2B_{2n}\cup B_{2n}'+dz'\text{ mod }2N~,
\end{equation}
where $z'$ is an integer cochain given by
\begin{equation}
    z'=B_{2n}\cup_1 B_{2n}'+dB_{2n}\cup_2 B_{2n}'~.
\end{equation}
\end{itemize}
Therefore the generalized Pontryagin square ${\cal P}$ is a cohomology operation ${\cal P}:\; H^{2n}(M,\mathbb{Z}_N)\rightarrow H^{4n}(M,\mathbb{Z}_{2N})$ on manifold $M$.
\end{theorem}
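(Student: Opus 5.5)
The plan is to work throughout with integer lifts. Fix an integer cochain lift $B$ of the $\mathbb{Z}_N$ cocycle $B_{2n}$, so that $dB = N\beta$ for some integer $(2n+1)$-cochain $\beta$. Everything then follows from the two Steenrod coboundary identities for $\cup_i$ recalled above. These are applied with $p=q=2n$ even, so most signs reduce to $(-1)^i$. Products containing two factors of $dB$ are divisible by $N^2$, and hence by $2N$ since $N$ is even. Products containing one factor $dB$ paired with a term carrying an explicit factor of $2$ also vanish mod $2N$. These two facts are what kill the unwanted terms.

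For property 1, I compute
\[
d\mathcal P(B) = dB\cup B + B\cup dB - d(B\cup_1 dB).
\]
Expanding $d(B\cup_1 dB)$ with the coboundary formula at $i=1$ gives $dB\cup_1 dB$ plus the commutator terms $\pm(B\cup dB \pm dB\cup B)$. The commutator terms cancel the first two terms exactly, because the degrees make the signs work out; this is the classical Pontryagin-square computation. What remains is $\pm dB\cup_1 dB = \pm N^2 \beta\cup_1\beta \equiv 0 \bmod 2N$.

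For property 2, I change the lift by $B\to B+N\mu$. The cup-square then changes by $N(B\cup\mu+\mu\cup B)+N^2\mu\cup\mu$. I rewrite $\mu\cup B$ as $B\cup\mu$ plus $\cup_1$ corrections using the commutation identity, so that the $\cup$ terms become $2N B\cup\mu \equiv 0$. The leftover $\cup_1$ terms are then matched against the change in $B\cup_1 dB$, namely $N\mu\cup_1 dB + NB\cup_1 d\mu + N^2\mu\cup_1 d\mu$. The difference should come out as $N\,d(B\cup_1\mu)$ plus terms that vanish mod $2N$. This is exact, so the class is unchanged.

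Properties 3 and 4 are verified by direct substitution. For property 4, I expand $\mathcal P(B+B')$, which leaves the cross terms $B\cup B' + B'\cup B - B\cup_1 dB' - B'\cup_1 dB$. I commute $B'\cup B = B\cup B' - d(B\cup_1 B') + dB\cup_1 B' + B\cup_1 dB'$, up to signs. This cancels $B\cup_1 dB'$ and produces $2B\cup B' - d(B\cup_1 B')$, leaving the discrepancy $dB\cup_1 B' - B'\cup_1 dB$. I then commute this discrepancy once more at the $\cup_2$ level. That should produce $d(dB\cup_2 B')$ plus terms containing $dB\cup_2 dB'$, which vanish mod $2N$ because they carry two factors of $N$. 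This matches the stated $z'$, up to signs.

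Property 3 I would obtain from property 4 with $B'=d\lambda$. The cross term $2B\cup d\lambda$ equals $d(\pm 2B\cup\lambda)$ modulo $2dB\cup\lambda \equiv 0$. The remaining piece $\mathcal P(d\lambda) = d\lambda\cup d\lambda - d\lambda\cup_1 0 = d(\lambda\cup d\lambda)$ is exact. Assembling the pieces reproduces the stated $z$.

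The final claim then follows immediately from properties 1 through 3. I expect the main obstacle to be the sign and degree bookkeeping in the higher-cup identities, especially for the $\cup_1$ and $\cup_2$ corrections in properties 2 and 4. To keep the signs manageable, I would first work modulo 2 on the coefficient of each term, then fix the signs by checking that all odd multiples of $N$ combine into exact terms.
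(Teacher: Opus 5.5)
This is essentially the paper's proof. Both are direct cochain-level expansions using Steenrod's coboundary and commutation formulas for $\cup_i$, discarding every term of the form $2N(\cdots)$ or $N^2(\cdots)$.

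\textbf{Property 3.} This is the one structural difference. The paper expands $\mathcal{P}(B+d\lambda)$ directly. You instead deduce it from property 4 with $B'=d\lambda$, using $\mathcal{P}(d\lambda)=d(\lambda\cup d\lambda)$ and $2B\cup d\lambda=d(2B\cup\lambda)-2\,dB\cup\lambda$. That route is cleaner, but it produces
$z=2B\cup\lambda+\lambda\cup d\lambda+B\cup_1 d\lambda+dB\cup_2 d\lambda$.
So, contrary to your last sentence on property 3, it does not reproduce the displayed $z$.

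Your coefficient $2$ is the correct one. With the displayed $z$, the residual is $B\cup d\lambda-dB\cup\lambda$, which is generally nonzero mod $2N$. A check: take $n=1$ and $B=d\kappa$, and evaluate on a single $4$-simplex. The paper's own computation of property 3 contains the same slip: its term $2B\cup d\lambda$ integrates to $d(2B\cup\lambda)$, not $d(B\cup\lambda)$. The cohomological conclusion is unaffected.

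\textbf{Intermediate claims.} Several are wrong as written, though harmlessly.

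In property 1, $d(B\cup_1 dB)=dB\cup_1 dB+B\cup dB-dB\cup B$. Graded commutativity forces the two commutator terms to carry opposite signs, so they cannot cancel $dB\cup B+B\cup dB$. What remains is $2\,dB\cup B-dB\cup_1 dB$, not just $\pm dB\cup_1 dB$.

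In property 4, $B'\cup B=B\cup B'+d(B\cup_1 B')-dB\cup_1 B'-B\cup_1 dB'$. So $B\cup_1 dB'$ doubles to $-2B\cup_1 dB'$ instead of cancelling. It is the $+$ sign on $d(B\cup_1 B')$ that gives the stated $z'$; with your sign you would get $-B\cup_1 B'$, which differs mod $2N$.

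In property 2, after your commutation the leftover $N(d\mu\cup_1 B-B\cup_1 d\mu)$ is not zero mod $2N$ on its own. It needs the same $\cup_2$-level commutation you use in property 4, giving $N\,d(d\mu\cup_2 B)$ mod $2N$. One further $\cup_2$ identity, for $d(\mu\cup_2 B)$, then turns the total into your final form $N\,d(B\cup_1\mu)$, which is indeed correct mod $2N$.

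Every remaining piece is $2N(\cdots)$ or $N^2(\cdots)$, so the argument goes through once these are fixed.
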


\begin{proof}
Each statement in the theorem is proven as follows:
    \begin{itemize}
        \item[1. ] Leibnitz rule gives
\begin{align}
    &d{\cal P}(B_{2n]})=dB_{2n}\cup B_{2n}+B_{2n}\cup dB_{2n}-dB_{2n}\cup_1 dB_{2n}-
    B_{2n}\cup dB_{2n}\cr 
    &\qquad\qquad\quad +dB_{2n}\cup B_{2n}=2dB_{2n}\cup B_{2n}-dB_{2n}\cup_1 dB_{2n}=0\text{ mod }2N~,
\end{align}
where we used $dB_{2n}=0$ mod $N$.

\item[2. ] For $B_{2n}\rightarrow B_{2n}+N C_{2n}$ with integer cochain $C_{2n}$, ${\cal P}(B_{2n})$ changes by
\begin{align}
    &{\cal P}(B_{2n}+NC_{2n})-{\cal P}(B_{2n})\cr 
    &=N^2\left(C_{2n}\cup C_{2n}-C_{2n}\cup_1 dC_{2n}\right)
    +N\left(C_{2n}\cup B_{2n}+B_{2n}\cup C_{2n}-C_{2n}\cup_1 dB_{2n}-B_{2n}\cup_1 dC_{2n}\right)\cr 
    &=Nd\left(-B_{2n}\cup_1 C_{2n}+dB_{2n}\cup_2 C_{2n} \right)
    + NdB_{2n}\cup_2 dC_{2n}=dc\text{ mod }2N~,
\end{align}
where $c=N\left(-B_{2n}\cup_1 C_{2n}+dB_{2n}\cup_2 C_{2n} \right)$ and we used $dB_{2n}=0$ mod $N$ is even.

    \item[3. ] A direct computation gives
    \begin{align}
        &{\cal P}(B_{2n}+d\lambda_{2n-1})-{\cal P}(B_{2n})\cr 
        &=d \lambda_{2n-1}\cup d\lambda_{2n-1}+2B_{2n}\cup d\lambda_{2n-1}+d\left(
        B_{2n}\cup d\lambda_{2n-1}+dB_{2n}\cup_2 d\lambda_{2n-1}
        \right)\cr
        &=dz+2dB_{2n}\cup\lambda_{2n-1}=dz\text{ mod }2N~,
    \end{align}
    where we used $dB_{2n}=0$ mod $N$ and
    \begin{equation}
        z=B_{2n}\cup \lambda_{2n-1}+\lambda_{2n-1}\cup d\lambda_{2n-1}+B_{2n}\cup_1 d\lambda_{2n-1}+dB_{2n}\cup_2 d\lambda_{2n-1}~.
    \end{equation}

    \item[4.] A direct computation gives
\begin{align}
    &{\cal P}(B_{2n}+B_{2n}')-{\cal P}(B_{2n})-{\cal P}(B_{2n}')\cr 
    &=B_{2n}\cup B_{2n}'+B_{2n}'\cup B_{2n}-B_{2n}\cup_1 dB_{2n}'-B_{2n}'\cup_1 dB_{2n}\cr 
    &=2B_{2n}\cup B_{2n}'+ dz'-2B_{2n}\cup_1 dB_{2n}'+dB_{2n}\cup_2 dB_{2n}'=dz'\text{ mod }2N~,
\end{align}
where we used $dB_{2n}=dB_{2n}'=0$ mod $N$ are even, and
\begin{equation}
    z'=B_{2n}\cup_1 B_{2n}'+dB_{2n}\cup_2 B_{2n}'~.
\end{equation}

    \end{itemize}
\end{proof}

\subsection{Property of Arf-Brown-Kervaire Invariants}

In this appendix we summarize some mathematical properties of quadratic function which refines the intersection pairing on $H^n(M,\mathbb{Z}_2)$ on $2n$-dimensional manifold $M$ \cite{Browder:1969,Brown:1972}. See also \cite{Hsin:2021qiy}.

Denote $B$ to be $\mathbb{Z}_2$ value $n$-form gauge field. The quadratic function $q_\rho$ satisfies
\begin{equation}\label{eqn:quadref}
     q_\rho(B+B')=q_\rho(B)+q_\rho(B')+2B\cup B'\quad\text{mod }4~.
\end{equation}
In particular, $q_\rho(2B)=q(0)=0=2q_\rho(B)+ 2B^2$ mod 4 {\it i.e.} $q_\rho(B)$ mod 2 $=B^2$.

Two solutions for the quadratic function $q_\rho$ differ by a linear function $B\rightarrow 2\int_M B\cup X$ for some $X\in H^n(M,\mathbb{Z}_2)$, so different quadratic functions on a given manifold $M$ are in one-to-one correspondence with elements in $H^n(M,\mathbb{Z}_2)$, but not in a canonical way \cite{Brown:1972}.

The solutions of the quadratic function are labeled by different $\rho_n$ that satisfies $d\rho_n=v_{n+1}$ where $v_{n+1}$ is the $(n+1)^{\text{th}}$ Wu class, where different $\rho_n$'s form an $H^n(M,\mathbb{Z}_2)$ torsor \cite{Browder:1969,Brown:1972} 
We will denote the quadratic function by $q_{\rho}$. For manifold with dimension less or equal to $2n+1$, the Wu class $v_{n+1}$ is exact. Therefore, there always exists such $\rho_n$ that $d\rho_n=v_{n+1}$ \cite{milnor1974characteristic}. 
Changing the Wu structure $\rho_{n}\rightarrow \rho_{n}+z_{n}$ by $z\in H^{n}(M,\mathbb{Z}_2)$ changes the quadratic function by
\begin{equation}
    q_{\rho_n+z_n}(b)=q_{\rho_n}(b)+2b\cup z_n~.
\end{equation}

The Arf-Brown-Kervaire invariant is the partition function
\begin{equation}
    Z[\rho_n]=\frac{1}{|H^n(M,\mathbb{Z}_2)|^{1/2}}\sum_{b\in H^n(M,\mathbb{Z}_2)} e^{{\pi i\over 2}\int q_{\rho_n}(b)}~.
\end{equation}
The partition function is an eighth root of unity.
It depends on the Wu structure $\rho_n$. Change the Wu structure $\rho_{n}\rightarrow \rho_{n}+z_{n}$ leads to the following change of the partition function:
\begin{equation}\label{eq:Zrhoz}
    Z[\rho_n]/Z[\rho_n+z]=e^{{\pi i\over 2}\int q_{\rho_n}(z)}~.
\end{equation}

\section{Mapping $ST[1]$ to $(ST)^3[1]$}
\label{sec:STST3}

\subsection{Explicit flippers}

\subsubsection{Flippers for $(ST)$}

We list the Flippers column by column as follows.

\paragraph{First column}
 
\begin{align*}
(F_{1})_{X_x} ={}& xyz + 2x + 2yz + 2z \\
&+ 1 + \frac{2z}{x} + \frac{2z}{x^{2}}
\\[1ex]
(F_{1})_{X_y} ={}& 2x^{2}yz + xyz + x + \frac{x}{y} \\
&+ 1 + \frac{z}{y} + \frac{1}{y} + \frac{z}{x} \\
&+ \frac{2z}{xy} + \frac{1}{xy}
\\[1ex]
(F_{1})_{X_z} ={}& x^{2}y + xyz + 2xy + 2x \\
&+ \frac{2x}{z} + yz + 2z + 1 \\
&+ \frac{2}{z} + \frac{z}{x}
\\[1ex]
(F_{1})_{Z_x} ={}& 2xyz + x + z + 2 \\
&+ \frac{2yz}{x} + \frac{2z}{x} + \frac{1}{x}
\\[1ex]
(F_{1})_{Z_y} ={}& x^{2} + 2x + z + \frac{2z}{x}
\\[1ex]
(F_{1})_{Z_z} ={}& x^{2}y + 2xy + 2x + \frac{1}{x}
\end{align*}

\paragraph{Second column}
 
\begin{align*}
(F_{2})_{X_x} ={}& xyz + xz + 2x + \frac{x}{y} \\
&+ 2z + 2 + \frac{z}{y} + \frac{z}{x} \\
&+ \frac{2}{x} + \frac{2z}{xy}
\\[1ex]
(F_{2})_{X_y} ={}& 2x^{2}yz + x^{2}z + 2xz + x \\
&+ \frac{xz}{y} + \frac{2x}{y^{2}} + 2z + 1 \\
&+ \frac{2z}{y} + \frac{1}{y} + \frac{2z}{y^{2}} + \frac{1}{y^{2}}
\\[1ex]
(F_{2})_{X_z} ={}& x^{2}yz + x^{2}y + 2x^{2} + 2xyz \\
&+ xz + x + \frac{2x}{z} + \frac{x}{y} \\
&+ \frac{x}{yz} + 2z + \frac{2}{z} + \frac{z}{y} \\
&+ \frac{2}{yz} + \frac{2}{xyz}
\\[1ex]
(F_{2})_{Z_x} ={}& xyz + 2xz + x + \frac{2x}{y} \\
&+ yz + \frac{2z}{y} + \frac{1}{x} + \frac{2}{xy}
\\[1ex]
(F_{2})_{Z_y} ={}& x^{2}z + x^{2} + \frac{2x^{2}}{y} + z \\
&+ \frac{2z}{y} + \frac{1}{xy}
\\[1ex]
(F_{2})_{Z_z} ={}& x^{2}y + 2x^{2} + 2x + \frac{x}{y} \\
&+ 2 + \frac{1}{y}
\end{align*}
 
\paragraph{Third column}

\begin{align*}
(F_{3})_{X_x} ={}& 2xyz + 2xy + 2x + \frac{x}{z} \\
&+ 2z + 1 + \frac{2}{z} + \frac{z}{x} + \frac{2}{x}
\\[1ex]
(F_{3})_{X_y} ={}& x^{2}yz + x^{2}y + 2xz + \frac{2x}{z} \\
&+ \frac{x}{y} + \frac{2x}{yz} + 2z + 2 \\
&+ \frac{z}{y} + \frac{1}{y} + \frac{2}{yz} + \frac{1}{xyz}
\\[1ex]
(F_{3})_{X_z} ={}& x^{2}y + \frac{2x^{2}y}{z} + 2xyz + xy \\
&+ 2x + \frac{x}{z^{2}} + 2z + 1
\\[1ex]
(F_{3})_{Z_x} ={}& xyz + 2xy + x + \frac{2x}{z} \\
&+ yz + 2y + z + 2 \\
&+ \frac{1}{x} + \frac{2}{xz}
\\[1ex]
(F_{3})_{Z_y} ={}& x^{2} + \frac{2x^{2}}{z} + z + 2
\\[1ex]
(F_{3})_{Z_z} ={}& 2x^{2}y + \frac{2x^{2}y}{z} + 2x + \frac{x}{z} \\
&+ 2 + \frac{1}{z} + \frac{1}{xz}
\end{align*}

\subsubsection{Flippers for $(ST)^3$}

We list the columns of the Flippers as follows.

\paragraph{First column}
 
\begin{align*}
(F_{1})_{X_x} ={}& 2x^{3}y^{4}z^{5} + x^{2}y^{4}z^{5} + 2x^{2}y^{3}z^{4} + 2x^{2}y^{2}z^{3} \\
&+ xy^{3}z^{4} + 2xy^{2}z^{4} + 2xy^{2}z^{3} + 2xy^{2}z^{2} \\
&+ 2xyz^{2} + xyz + xy + \frac{xy}{z} \\
&+ y^{2}z^{4} + 2y^{2}z^{3} + y^{2}z^{2} + 2yz^{3} \\
&+ 2yz^{2} + 2yz + 2y + \frac{2y}{z} \\
&+ 2z + 2 + \frac{2}{z} + \frac{yz^{3}}{x} \\
&+ \frac{2yz^{2}}{x} + \frac{2z^{2}}{x} + \frac{2z}{x} + \frac{2}{x} \\
&+ \frac{1}{xz} + \frac{z^{2}}{x^{2}} + \frac{2z}{x^{2}} + \frac{2}{x^{2}} \\
&+ \frac{2z}{x^{2}y} + \frac{2}{x^{2}y} + \frac{z}{x^{3}y} + \frac{1}{x^{3}y}
\\[1ex]
(F_{1})_{X_y} ={}& x^{4}y^{4}z^{5} + 2x^{3}y^{4}z^{5} + x^{3}y^{3}z^{5} + x^{3}y^{3}z^{4} \\
&+ 2x^{2}y^{3}z^{4} + x^{2}y^{2}z^{3} + x^{2}y^{2}z^{2} + x^{2}yz^{3} \\
&+ 2xy^{2}z^{4} + xy^{2}z^{3} + 2xy^{2}z^{2} + 2xyz^{2} \\
&+ xy + \frac{xy}{z} + xz^{2} + 2xz \\
&+ 2x + \frac{2x}{z} + 2yz^{3} + yz^{2} \\
&+ \frac{2}{z} + \frac{z}{y} + \frac{1}{y} + \frac{1}{yz} \\
&+ \frac{2z^{2}}{x} + \frac{z}{x} + \frac{1}{x} + \frac{1}{xyz} \\
&+ \frac{2z}{x^{2}y} + \frac{2}{x^{2}y} + \frac{2}{x^{2}y^{2}z^{2}}
\\[1ex]
(F_{1})_{X_z} ={}& 2x^{3}y^{4}z^{5} + x^{3}y^{3}z^{3} + x^{2}y^{3}z^{4} + 2x^{2}y^{2}z^{3} \\
&+ x^{2}y^{2}z^{2} + 2x^{2}y^{2}z + xy^{2}z^{3} + xy^{2}z \\
&+ 2xyz^{2} + 2xyz + \frac{2xy}{z^{2}} + yz^{2} \\
&+ yz + 2z + \frac{1}{z} + \frac{1}{z^{2}} \\
&+ \frac{z}{x} + \frac{2}{xz} + \frac{2}{xyz} + \frac{1}{xyz^{2}} \\
&+ \frac{1}{x^{2}yz}
\end{align*}
\begin{align*}
    (F_{1})_{Z_x} ={}& x^{3}y^{4}z^{5} + 2x^{3}y^{4}z^{4} + x^{2}y^{4}z^{4} + x^{2}y^{3}z^{3} \\
&+ x^{2}y^{2}z^{3} + xy^{3}z^{4} + xy^{3}z^{3} + xy^{2}z^{4} \\
&+ 2xy^{2}z^{3} + 2xy^{2}z^{2} + 2xyz + 2y^{2}z^{4} \\
&+ y^{2}z^{3} + y^{2}z^{2} + yz^{3} + 2yz^{2} \\
&+ yz + y + \frac{2yz^{3}}{x} + \frac{yz^{2}}{x} \\
&+ \frac{z^{2}}{x} + \frac{2}{x} + \frac{1}{xz} + \frac{2}{xy} \\
&+ \frac{1}{xyz} + \frac{2z^{2}}{x^{2}} + \frac{1}{x^{2}y} + \frac{2}{x^{2}yz} \\
&+ \frac{1}{x^{2}yz^{2}}
\\[1ex]
(F_{1})_{Z_y} ={}& 2x^{3}y^{3}z^{4} + x^{3}y^{2}z^{3} + x^{2}y^{3}z^{4} + 2x^{2}y^{2}z^{4} \\
&+ x^{2}y^{2}z^{3} + 2x^{2}yz + xy^{2}z^{4} + xy^{2}z^{3} \\
&+ xyz^{4} + 2xyz^{3} + 2xyz^{2} + xyz \\
&+ 2yz^{4} + yz^{3} + yz^{2} + z^{3} \\
&+ 2z^{2} + \frac{2}{y} + \frac{1}{yz} + \frac{2z^{3}}{x} \\
&+ \frac{z^{2}}{x} + \frac{z^{2}}{xy} + \frac{1}{xy} + \frac{2}{xyz} \\
&+ \frac{2z^{2}}{x^{2}y}
\\[1ex]
(F_{1})_{Z_z} ={}& 2x^{4}y^{4}z^{4} + x^{3}y^{4}z^{4} + 2x^{3}y^{3}z^{3} + x^{2}y^{3}z^{3} \\
&+ x^{2}y^{2}z^{3} + x^{2}y^{2}z^{2} + 2x^{2}y^{2}z + 2xy^{2}z^{3} \\
&+ 2xy^{2}z^{2} + xy^{2}z + xyz^{3} + 2xyz^{2} \\
&+ 2xyz + xy + 2yz^{3} + yz^{2} \\
&+ yz + 2y + z^{2} + 2z \\
&+ 2 + \frac{2}{z} + \frac{2z^{2}}{x} + \frac{z}{x} \\
&+ \frac{1}{x} + \frac{1}{xz} + \frac{z}{xy} + \frac{1}{xy} \\
&+ \frac{1}{xyz^{2}} + \frac{2z}{x^{2}y} + \frac{2}{x^{2}y} + \frac{2}{x^{2}yz^{2}}
\end{align*}

\paragraph{Second column}
 
\begin{align*}
(F_{2})_{X_x} ={}& 2x^{3}y^{4}z^{5} + 2x^{2}y^{3}z^{4} + 2x^{2}y^{2}z^{4} + x^{2}yz^{3} \\
&+ 2xy^{2}z^{4} + xy^{2}z^{3} + 2xy^{2}z^{2} + xyz^{4} \\
&+ 2xyz^{3} + xyz + xz^{2} + 2xz \\
&+ 2x + \frac{2x}{z} + 2yz^{3} + yz^{2} \\
&+ z^{3} + 2z^{2} + 2z + 2 \\
&+ \frac{1}{z} + \frac{z}{y} + \frac{1}{y} + \frac{1}{yz} \\
&+ \frac{2z^{2}}{x} + \frac{z}{x} + \frac{1}{x} + \frac{z^{2}}{xy} \\
&+ \frac{2z}{xy} + \frac{2}{xy} + \frac{1}{xyz} + \frac{2z}{x^{2}y} \\
&+ \frac{2}{x^{2}y} + \frac{z}{x^{2}y^{2}} + \frac{1}{x^{2}y^{2}} + \frac{1}{x^{2}y^{2}z^{2}}
\\[1ex]
(F_{2})_{X_y} ={}& x^{4}y^{4}z^{5} + 2x^{4}y^{3}z^{5} + x^{3}y^{3}z^{4} + 2x^{3}y^{2}z^{4} \\
&+ x^{2}y^{2}z^{4} + 2x^{2}y^{2}z^{3} + x^{2}y^{2}z^{2} + 2x^{2}yz^{4} \\
&+ 2x^{2}yz^{3} + 2x^{2}yz^{2} + 2x^{2}z^{3} + xyz^{3} \\
&+ 2xyz^{2} + 2xz^{3} + 2xz^{2} + 2xz \\
&+ 2x + \frac{2x}{z} + \frac{2xz^{2}}{y} + \frac{xz}{y} \\
&+ \frac{x}{y} + \frac{x}{yz} + z^{2} + 2z \\
&+ 2 + \frac{2z^{2}}{y} + \frac{2z}{y} + \frac{2}{y} \\
&+ \frac{1}{yz} + \frac{2z}{y^{2}} + \frac{2}{y^{2}} + \frac{2}{y^{2}z} \\
&+ \frac{z}{xy} + \frac{1}{xy} + \frac{2z}{xy^{2}} + \frac{2}{xy^{2}}
\\[1ex]
(F_{2})_{X_z} ={}& x^{4}y^{4}z^{5} + 2x^{3}y^{3}z^{4} + 2x^{3}y^{2}z^{3} + 2x^{2}y^{2}z^{3} \\
&+ 2x^{2}y^{2}z + x^{2}yz^{3} + 2x^{2}yz^{2} + x^{2}yz \\
&+ 2xyz^{2} + 2xyz + xz^{2} + xz \\
&+ \frac{x}{z^{2}} + 2z + \frac{1}{z} + \frac{z}{y} \\
&+ \frac{2}{yz} + \frac{2}{yz^{2}} + \frac{2}{xyz} + \frac{1}{xy^{2}z} \\
&+ \frac{2}{xy^{2}z^{2}}
\end{align*}

\begin{align*}
(F_{2})_{Z_x} ={}& 2x^{3}y^{4}z^{4} + x^{3}y^{3}z^{4} + 2x^{2}y^{3}z^{4} + 2x^{2}y^{3}z^{3} \\
&+ x^{2}y^{2}z^{4} + 2x^{2}y^{2}z^{3} + 2x^{2}yz^{3} + xy^{2}z^{4} \\
&+ 2xy^{2}z^{3} + 2xy^{2}z^{2} + 2xyz^{4} + xyz^{3} \\
&+ xyz^{2} + 2xyz + xz + yz^{3} \\
&+ 2yz^{2} + 2z^{3} + z^{2} + \frac{z^{2}}{x} \\
&+ \frac{2z^{2}}{xy} + \frac{2}{xy} + \frac{1}{xyz} + \frac{1}{xy^{2}} \\
&+ \frac{2}{xy^{2}z}
\\[1ex]
(F_{2})_{Z_y} ={}& x^{4}y^{3}z^{5} + 2x^{3}y^{3}z^{4} + 2x^{3}y^{2}z^{4} + 2x^{3}yz^{3} \\
&+ 2x^{2}y^{2}z^{4} + 2x^{2}y^{2}z^{3} + x^{2}yz^{4} + 2x^{2}yz^{3} \\
&+ 2x^{2}yz + x^{2}z + xyz^{4} + 2xyz^{3} \\
&+ 2xyz^{2} + 2xz^{4} + xz^{3} + xz^{2} \\
&+ x + z^{3} + 2z^{2} + \frac{2z^{3}}{y} \\
&+ \frac{z^{2}}{y} + \frac{1}{y} + \frac{2}{yz} + \frac{1}{y^{2}} \\
&+ \frac{2}{y^{2}z} + \frac{z^{2}}{xy} + \frac{2z^{2}}{xy^{2}} + \frac{1}{xy^{2}z^{2}}
\\[1ex]
(F_{2})_{Z_z} ={}& 2x^{4}y^{4}z^{4} + x^{4}y^{3}z^{4} + 2x^{3}y^{3}z^{3} + x^{3}y^{2}z^{3} \\
&+ x^{2}y^{2}z^{3} + x^{2}y^{2}z^{2} + 2x^{2}y^{2}z + 2x^{2}yz^{3} \\
&+ 2x^{2}yz^{2} + x^{2}yz + xyz^{3} + 2xyz^{2} \\
&+ 2xyz + xy + 2xz^{3} + xz^{2} \\
&+ xz + 2x + z^{2} + 2z \\
&+ 2 + \frac{2}{z} + \frac{2z^{2}}{y} + \frac{z}{y} \\
&+ \frac{1}{y} + \frac{1}{yz} + \frac{z}{xy} + \frac{1}{xy} \\
&+ \frac{1}{xyz^{2}} + \frac{2z}{xy^{2}} + \frac{2}{xy^{2}} + \frac{2}{xy^{2}z^{2}}
\end{align*}

 \paragraph{Third column}
 
\begin{align*}
(F_{3})_{X_x} ={}& x^{3}y^{4}z^{4} + x^{2}y^{3}z^{4} + 2x^{2}y^{2}z^{3} + x^{2}y^{2}z^{2} \\
&+ 2xy^{2}z^{4} + 2xy^{2}z^{3} + xy^{2}z + 2xyz^{2} \\
&+ 2xyz + 2yz^{3} + 2yz^{2} + 2yz \\
&+ 2z + \frac{2z^{2}}{x} + \frac{2z}{x} + \frac{2}{xz} \\
&+ \frac{2}{xyz^{2}} + \frac{2z}{x^{2}y} + \frac{1}{x^{2}yz}
\\[1ex]
(F_{3})_{X_y} ={}& 2x^{4}y^{4}z^{4} + 2x^{3}y^{3}z^{4} + x^{2}y^{2}z^{4} + x^{2}y^{2}z^{3} \\
&+ 2x^{2}y^{2}z + x^{2}yz^{3} + 2x^{2}yz^{2} + xyz^{3} \\
&+ xyz^{2} + xyz + xz^{2} + xz \\
&+ z^{2} + z + \frac{1}{z} + \frac{z}{y} \\
&+ \frac{z}{xy} + \frac{2}{xyz} + \frac{1}{xy^{2}z^{2}}
\\[1ex]
(F_{3})_{X_z} ={}& x^{3}y^{3}z^{3} + 2x^{3}y^{3}z^{2} + 2x^{2}y^{2}z^{3} + 2x^{2}y^{2}z^{2} \\
&+ x^{2}y^{2}z + x^{2}y^{2} + 2xyz^{2} + xy \\
&+ 2z + 1 + \frac{1}{z} + \frac{2}{z^{2}} \\
&+ \frac{2}{xyz} + \frac{1}{xyz^{2}}
\end{align*}

\begin{align*}
(F_{3})_{Z_x} ={}& 2x^{3}y^{4}z^{4} + x^{3}y^{4}z^{3} + 2x^{2}y^{3}z^{4} + x^{2}y^{3}z^{2} \\
&+ x^{2}y^{2}z^{3} + 2x^{2}y^{2}z^{2} + xy^{2}z^{4} + xy^{2}z^{3} \\
&+ xy^{2}z + 2xyz + xy + yz^{3} \\
&+ yz^{2} + yz + \frac{z^{2}}{x} + \frac{2z}{x} \\
&+ \frac{2}{xy} + \frac{2}{xyz} + \frac{2}{xyz^{2}}
\\[1ex]
(F_{3})_{Z_y} ={}& 2x^{3}y^{3}z^{4} + x^{3}y^{3}z^{3} + x^{3}y^{2}z^{3} + 2x^{3}y^{2}z^{2} \\
&+ 2x^{2}y^{2}z^{4} + x^{2}y^{2}z^{2} + 2x^{2}yz + x^{2}y \\
&+ xyz^{4} + xyz^{3} + xyz + z^{3} \\
&+ z^{2} + z + \frac{2}{y} + \frac{2}{yz} \\
&+ \frac{2}{yz^{2}} + \frac{z^{2}}{xy} + \frac{2z}{xy}
\\[1ex]
(F_{3})_{Z_z} ={}& x^{4}y^{4}z^{3} + x^{2}y^{2}z^{3} + x^{2}y^{2}z^{2} + x^{2}y^{2}z \\
&+ x^{2}y^{2} + xyz^{3} + xyz^{2} + 2xy \\
&+ z^{2} + z + \frac{2}{z} + \frac{2}{z^{2}} \\
&+ \frac{z}{xy} + \frac{2}{xyz} + \frac{1}{xyz^{2}}
\end{align*}

\subsection{Explicit commutation matrices}

We have
\begin{center}
\resizebox{\textwidth}{!}{%
$\displaystyle
\Xi_{(ST)^3}=
\begin{pmatrix}

\dfrac{-y^{2}z^{2}+yz^{2}-y+1}{yz}
&
\dfrac{y^{2}z^{2}-yz^{2}-y+1}{yz}
&
\dfrac{1-y}{yz}
&
\dfrac{y^{2}z^{3}-yz^{3}+yz^{2}+1}{y^{2}z^{2}}
&
\dfrac{-y^{2}z^{3}+yz^{3}+yz^{2}-z+1}{y^{2}z^{2}}
&
\dfrac{yz^{2}-z+1}{y^{2}z^{2}}
\\[8pt]

\dfrac{-y^{2}z^{2}+yz^{2}+y-1}{yz}
&
\dfrac{y^{2}z^{2}-yz^{2}+y-1}{yz}
&
\dfrac{y-1}{yz}
&
\dfrac{y^{2}z^{3}-yz^{3}-z-1}{y^{2}z^{2}}
&
\dfrac{-y^{2}z^{3}+yz^{3}+yz^{2}-z-1}{y^{2}z^{2}}
&
\dfrac{yz^{2}+z-1}{y^{2}z^{2}}
\\[8pt]

-yz+z
&
yz-z
&
0
&
\dfrac{y^{2}z^{2}-yz^{2}+1}{y^{2}z}
&
\dfrac{-y^{2}z^{2}+yz^{2}+1}{y^{2}z}
&
\dfrac{yz+1}{y^{2}z}
\\[8pt]

\dfrac{-y^{2}z^{3}-yz+y-1}{z}
&
\dfrac{y^{2}z^{3}+y^{2}z^{2}+y-1}{z}
&
\dfrac{-y^{2}z^{2}+y-1}{z}
&
\dfrac{y^{2}z^{4}-y^{2}z^{3}+z-1}{yz^{2}}
&
\dfrac{-y^{2}z^{4}+y^{2}z^{3}+z-1}{yz^{2}}
&
\dfrac{z-1}{yz^{2}}
\\[8pt]

\dfrac{-y^{2}z^{3}+y^{2}z^{2}-yz-y+1}{z}
&
\dfrac{y^{2}z^{3}+y^{2}z^{2}-yz-y+1}{z}
&
\dfrac{-y^{2}z^{2}-y+1}{z}
&
\dfrac{y^{2}z^{4}-y^{2}z^{3}-z+1}{yz^{2}}
&
\dfrac{-y^{2}z^{4}+y^{2}z^{3}-z+1}{yz^{2}}
&
\dfrac{1-z}{yz^{2}}
\\[8pt]

-y^{2}z^{2}+y^{2}z-y
&
y^{2}z^{2}-y^{2}z-y
&
-y^{2}z-y
&
yz^{2}-yz
&
-yz^{2}+yz
&
0

\end{pmatrix}.
$}
\end{center}

The complete matrix of $E$ is
\[
E =
\begin{pmatrix}
0 & 0      & 0    & y^{-1} & E_{15} & E_{16} \\
0 & 0      & 0    & y^{-1} & E_{25} & E_{26} \\
1 & 0      & 0    & 0      & E_{35} & E_{36} \\
0 & -y^{2}z & yz+1 & 0      & E_{45} & E_{46} \\
0 & -y^{2}z & yz+1 & 0      & E_{55} & E_{56} \\
0 & 0      & 1    & 0      & E_{65} & E_{66}
\end{pmatrix},
\]
where, element by element,
\begin{align*}
E_{15} &= \frac{-yz^{3} + yz^{2} - z + 1}{y^{2}z^{2}}, \\[1ex]
E_{16} &= \frac{y^{2}z^{2} - yz^{2} - yz + y - 1}{y^{2}z^{2}}, \\[1ex]
E_{25} &= \frac{-yz^{3} + yz - z + 1}{y^{2}z^{2}}, \\[1ex]
E_{26} &= \frac{y^{2}z - yz^{2} + yz + y - 1}{y^{2}z^{2}}, \\[1ex]
E_{35} &= \frac{-y^{2}z^{3} + y^{2}z^{2} - yz^{3} + yz - z + 1}{y^{2}z^{2}}, \\[1ex]
E_{36} &= \frac{-y^{3}z^{3} + y^{3}z^{2} - y^{2}z^{2} + y^{2}z - yz^{2} + yz + y - 1}{y^{2}z^{2}}, \\[1ex]
E_{45} &= \frac{-z^{2} + z + 1}{z}, \\[1ex]
E_{46} &= \frac{yz + y - z - 1}{z}, \\[1ex]
E_{55} &= \frac{yz - z^{2} + z + 1}{z}, \\[1ex]
E_{56} &= \frac{y^{2}z + y - z - 1}{z}, \\[1ex]
E_{65} &= \frac{y^{2}z^{2} + yz - z^{2} + z + 1}{z},
\end{align*}
and
\[
E_{66} = \frac{y^{3}z^{2} - y^{2}z^{2} + y^{2}z + y - z - 1}{z}.
\]
 
All denominators are monomials, so every entry is a finite Laurent
polynomial. This is therefore a local change of boundary generators.
 
Its determinant is $det E = yz$, which is an invertible Laurent monomial. Thus $E$ is an
allowed local basis transformation.
 
The exact matrix identity is
\[
E^{\dagger}\,\Xi_{(ST)^3}\,E =
\begin{pmatrix}
0  & 1 & 0  & 0 & 0 & 0 \\
-1 & 0 & 0  & 0 & 0 & 0 \\
0  & 0 & 0  & 1 & 0 & 0 \\
0  & 0 & -1 & 0 & 0 & 0 \\
0  & 0 & 0  & 0 & z^{-1} - z & -1 + z^{-1} + y + yz^{-1} \\
0  & 0 & 0  & 0 & -z + 1 - zy^{-1} - y^{-1} & y - y^{-1}
\end{pmatrix}.
\]
 
The upper-left and middle $2 \times 2$ blocks are
\[
\lambda_{1} =
\begin{pmatrix}
0 & 1 \\
-1 & 0
\end{pmatrix},
\]
and the lower-right block is exactly
\[
\Xi_{(ST)} =
\begin{pmatrix}
z^{-1} - z & -1 + z^{-1} + y + yz^{-1} \\
-z + 1 - zy^{-1} - y^{-1} & y - y^{-1}
\end{pmatrix}.
\]
 
Therefore, without any suppressed dimensional step,
\[
\boxed{\,E_{\mathrm{full}}^{\dagger}\,\Xi_{(ST)^3}\,E_{\mathrm{full}}
  = \lambda_{1} \oplus \lambda_{1} \oplus \Xi_{(ST)}.\,}
\]

\subsection{Ground state degeneracy comparison}

Consider the Hamiltonian constructed from (1) taking the stabilizers generated by $\{X_e\}$ and performing on these stabilizers $ST$ to get $\{B_p\}$ and (2) taking the stabilizers generated by $\{X_e\}$ and performing $(ST)^3$ to get $\{\tilde{B}_p\}$. These stabilizers are illustrated in Fig.~\ref{fig:Bpcompare}. If each $B_p$ were unitarily related to a corresponding $\tilde{B}_p$, then a Hamiltonian with ground state space satisfying $\{B_p=1\}$ vs $\{\tilde{B}_p=1\}$ would have the same ground state degeneracy (this may not be equal to 1 until after we add the vertex terms).

\begin{figure}[h]
   \centering
   \includegraphics[width=.8\columnwidth]{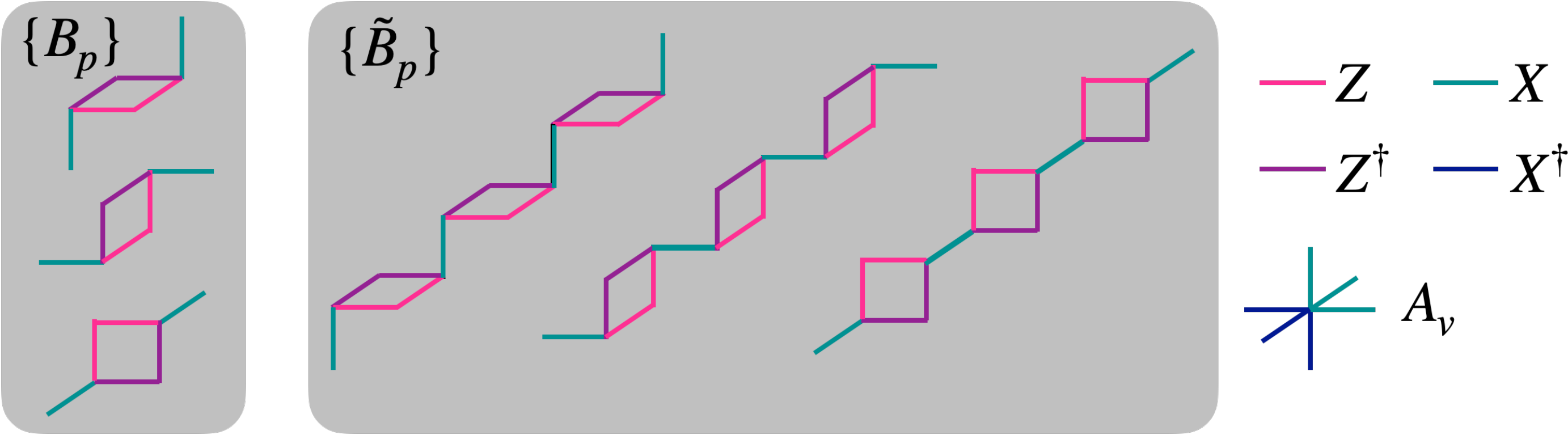} 
   \caption{Two sets of stabilizers obtained from $ST$ and $(ST)^3$, respectively.}
   \label{fig:Bpcompare}
   \end{figure}

For a periodic lattice with $L^3$ vertices, there are $3L^3$ edges and hence $3L^3$ qutrits. If there are $r$ independent $\mathbb{Z}_3$ stabilizers, then the common $+1$ eigenspace has dimension $3^{3L^3-4}$. Omitting the $A_v$ term, we get
\begin{align}
    \log_3\mathrm{GSD}_{ST}&=\left\{
    \begin{array}{cl}
         0&   L\text{ odd}\\
          L^2& L\text{ even}
    \end{array}\right.
    \cr 
    \log_3\mathrm{GSD}_{(ST)^3}&=\left\{
   \begin{array}{cl}
        0&   L\text{ odd}\\
          L^2& L=2\text{ mod }4\\
          3L^2 & L=0\text{ mod }4
   \end{array} \right. ~.
\end{align}
The GSD comes from the fact that taking the product of $B_p$ around a cube gives $\prod_{p\in c}B_p^{s(p)}=A_vA_{v+(1,1,1)}=1$ ($s(p)=\pm1$ depending on the orientation of $p$, and $A_{v+(1,1,1)}$ is shifted from $A_v$ by one unit in $\hat{x},\hat{y}$ and $\hat{z}$) and without the additional $A_v$ stabilizers, $A_vA_{v+(1,1,1)}=1$ can be satisfied in three different ways for $\mathbb{Z}_3$ qutrits. On a lattice with $L^3$ vertices, we get $A_{v}=A_{v+(1,1,1)}=A_{v+(2,2,2)}=\cdots$ coming from taking products of $B_p$ terms, so we get a total ground state degeneracy of $L^3$ vertices divided by $L$ vertices in each orbit $=L^2$ value for $\log_3$GSD. In other words, GSD=$3^{L^2}$). A similar calculation gives the results above. Clearly, since the GSD mismatch on $L=0$ mod 4, the stabilizers $\{B_p\}$ cannot be unitarily related to $\{\tilde{B}_p\}$. On the other hand, including the $A_v$ terms gives a unique ground state on any closed manifold, allowing for the possibility that the two unique ground states admit parent Hamiltonians related by a FDQC.
\bibliographystyle{utphys}
\bibliography{biblio}

\end{document}